\setlist[description]{%
  topsep=0pt,               
  itemsep=0pt,               
  labelwidth=0.0cm,
  labelindent=0pt,
  leftmargin=12pt
}
\newcolumntype{F}{>{$\displaystyle}r<{$}@{\hspace{0.0em}}}
\newcolumntype{C}{>{$\displaystyle\,}c<{$}@{\hspace{0.0em}}}
\newcolumntype{B}{>{$\displaystyle\,}r<{$}@{\hspace{0.0em}}}
\newcolumntype{R}{>{$\displaystyle}r<{$}@{\hspace{0.2em}}}
\newcolumntype{S}{>{$\displaystyle}r<{$}@{\hspace{0.2em}}}
\newcolumntype{L}{>{$\displaystyle}l<{$}@{\hspace{0.2em}}}
\newcolumntype{Q}{>{$\displaystyle}l<{$}@{\hspace{0.3em}}}
 \newcounter{IPnumber}
 \newcommand{\tagIt}[1]{\refstepcounter{equation}\textnormal{({\theequation})} \label{#1}}
 \newcommand{\eqnum}{\leavevmode\hfill\refstepcounter{equation}\textup{\tagform@{\theequation}}}
\theoremstyle{plain}
\newtheorem{theorem}{Theorem}
\newtheorem{lemma}[theorem]{Lemma}
\newtheorem{definition}[theorem]{Definition}
\newtheorem{corollary}[theorem]{Corollary}
\newcommand{\nosemic}{\renewcommand{\@endalgocfline}{\relax}}
\newcommand{\dosemic}{\renewcommand{\@endalgocfline}{\algocf@endline}}
\newcommand{\pushline}{\Indp}
\newcommand{\popline}{\Indm\dosemic}
\let\oldnl\nl
\newcommand{\nonl}{\renewcommand{\nl}{\let\nl\oldnl}}
\begin{document}
\IEEEoverridecommandlockouts
\IEEEpubid{\makebox[\columnwidth]{A short version of this work appeared in IFIP Networking 2018~\cite{rostSchmidVNEP_RR_IFIP_18}.~\hfill}\hspace{\columnsep}\makebox[\columnwidth]{}}

\title{Virtual Network Embedding Approximations: Leveraging Randomized Rounding}

\author{\IEEEauthorblockN{Matthias Rost}
\IEEEauthorblockA{TU Berlin, Germany\\
Email: mrost@inet.tu-berlin.de
}
\and
\IEEEauthorblockN{Stefan Schmid}
\IEEEauthorblockA{University of Vienna, Austria\\
Email: stefan\_schmid@univie.ac.at }
}

\maketitle

\newcommand{\TODO}[1]{\textcolor{red}{TODO: #1}}

\newcommand\numberthis{\addtocounter{equation}{1}\tag{\theequation}}

\newcommand{\NULL}{\textnormal{\texttt{NULL}}}
\newcommand{\scale}{\ensuremath{\lambda}}


\newcommand{\preals}{\ensuremath{\mathbb{R}_{\geq 0}}}


\newcommand{\requests}{\ensuremath{\mathcal{R}}}
\newcommand{\requestsP}{\ensuremath{\mathcal{R}'}}
\newcommand{\req}{r}
\newcommand{\types}{\ensuremath{\mathcal{T}}}
\newcommand{\type}{\ensuremath{\tau}}
\newcommand{\SVTypes}[1][\type]{\ensuremath{V^{#1}_{S}}}
\newcommand{\SVTypesCycle}[1][C_k]{\ensuremath{V^{#1}_{S,t}}}

\newcommand{\VG}[1][\req]{\ensuremath{G_{#1}}}
\newcommand{\VV}[1][\req]{\ensuremath{V_{#1}}}
\newcommand{\VE}[1][\req]{\ensuremath{E_{#1}}}
\newcommand{\VGbar}[1][\req]{\ensuremath{\bar{G}_{#1}}}
\newcommand{\VVbar}[1][\req]{\ensuremath{\bar{V}_{#1}}}
\newcommand{\VEbar}[1][\req]{\ensuremath{\bar{E}_{#1}}}
\newcommand{\Vstart}[1][\req]{\ensuremath{s_{#1}}}
\newcommand{\Vend}[1][\req]{\ensuremath{t_{#1}}}

\newcommand{\VGext}[1][\req]{\ensuremath{G^{\textnormal{ext}}_{#1}}}
\newcommand{\VGextFlow}[1][\req]{\ensuremath{G^{\textnormal{ext}}_{#1,f}}}
\NewDocumentCommand{\VGP}{O{\req} O{i} O{j}}{\ensuremath{G^{#2,#3}_{#1}}}
\newcommand{\VVext}[1][\req]{\ensuremath{V^{\textnormal{ext}}_{#1}}}
\newcommand{\VEext}[1][\req]{\ensuremath{E^{\textnormal{ext}}_{#1}}}
\newcommand{\VEextHorizontal}[1][\req]{\ensuremath{E^{\textnormal{ext}}_{#1,u,v}}}
\newcommand{\VEextVertical}[1][\req]{\ensuremath{E^{\textnormal{ext}}_{#1,\type,u}}}

\newcommand{\Vsource}[1][\req]{\ensuremath{o^+_{\req}}}
\newcommand{\Vsink}[1][\req]{\ensuremath{o^-_{\req}}}

\newcommand{\VMultiplicity}[1][\req]{\ensuremath{M_{#1}}}
\newcommand{\Vprofit}[1][\req]{\ensuremath{b_{#1}}}
\newcommand{\VprofitMax}{\ensuremath{b_{\max}}}

\newcommand{\Vcap}[1][\req]{\ensuremath{d_{#1}}}
\newcommand{\VVloc}[1][i]{\ensuremath{V^{{\req,#1}}_{S}}}
\newcommand{\VEloc}[1][i,j]{\ensuremath{E^{{\req,#1}}_{S}}}
\newcommand{\Vtype}[1][\req]{\ensuremath{\tau_{#1}}}


\newcommand{\SG}{\ensuremath{G_S}}
\newcommand{\SR}{\ensuremath{R_{S}}}
\newcommand{\SRV}{\ensuremath{R^V_{S}}}
\newcommand{\SV}{\ensuremath{V_S}}
\newcommand{\SE}{\ensuremath{E_S}}

\newcommand{\Scap}{\ensuremath{d_{S}}}
\newcommand{\ScapType}[1][\type]{\ensuremath{d^{#1}_{\SV}}}
\newcommand{\ScapTypePrime}[1][\type]{\ensuremath{{{d'}^{#1}_{\SV}}}}

\newcommand{\Scost}{\ensuremath{c_{S}}}
\newcommand{\ScostType}[1][\type]{\ensuremath{c^{#1}_{\SV}}}


\newcommand{\map}[1][\req]{\ensuremath{m_{#1}}}
\newcommand{\mapV}[1][\req]{\ensuremath{m^V_{#1}}}
\newcommand{\mapE}[1][\req]{\ensuremath{m^E_{#1}}}

\newcommand{\FeasibleLP}{\ensuremath{\mathcal{F}^{\textnormal{new}}_{\textnormal{LP}}}}
\newcommand{\FeasibleIP}{\ensuremath{\mathcal{F}_{\textnormal{IP}}}}


\makeatletter
\newcommand{\removelatexerror}{\let\@latex@error\@gobble}
\makeatother

\newcounter{ipCounter}
\NewDocumentEnvironment{IPFormulation}{m}{%
\refstepcounter{ipCounter}
\begin{algorithm}[#1]%
\renewcommand\thealgocf{\arabic{ipCounter}}
}{%
\end{algorithm}
\addtocounter{algocf}{-1}
}

\NewDocumentEnvironment{IPFormulationStar}{m}{%
\refstepcounter{ipCounter}
\begin{algorithm*}[#1]%
\renewcommand\thealgocf{\arabic{ipCounter}}
}{%
\end{algorithm*}
\addtocounter{algocf}{-1}
}



\newcommand{\spaceSolReq}[1][\req]{\ensuremath{\mathcal{M}_{#1}}}
\newcommand{\spaceLP}{\ensuremath{\mathcal{F}^{\textnormal{mcf}}_{\textnormal{LP}}}}
\newcommand{\spaceIP}{\ensuremath{\mathcal{F}^{\textnormal{mcf}}_{\textnormal{IP}}}}
\newcommand{\spaceIPCC}{\ensuremath{\mathcal{F}^{\textnormal{}}_{\textnormal{IP}}}}
\newcommand{\spaceLPCC}{\ensuremath{\mathcal{F}^{\textnormal{}}_{\textnormal{LP}}}}
\newcommand{\spaceIPMDK}{\ensuremath{\mathcal{F}^{\textnormal{MDK}}_{\textnormal{IP}}}}
\newcommand{\spaceLPMDK}{\ensuremath{\mathcal{F}^{\textnormal{MDK}}_{\textnormal{LP}}}}
\newcommand{\spaceLPNew}{\ensuremath{\mathcal{F}^{\textnormal{new}}_{\textnormal{LP}}}}
\newcommand{\spaceLPD}{\ensuremath{\mathcal{F}^{\mathcal{D}}_{\textnormal{LP}}}}
\newcommand{\spaceLPDreq}[1][\req]{\ensuremath{\mathcal{F}^{\mathcal{D}}_{\textnormal{LP},#1}}}

\DeclareDocumentCommand{\OptProfit}{}{\ensuremath{\hat{P}}}


\DeclareDocumentCommand{\NodeG}{O{\req} O{\pi}}{\ensuremath{G^N_{#1,#2}}}
\DeclareDocumentCommand{\NodeV}{O{\req} O{\pi}}{\ensuremath{V^N_{#1,#2}}}
\DeclareDocumentCommand{\NodeE}{O{\req} O{\pi}}{\ensuremath{E^N_{#1,#2}}}

\DeclareDocumentCommand{\EdgeG}{O{\req} O{i} O{j} O{u}}{\ensuremath{G^E_{#1,#2,#3,#4}}}
\DeclareDocumentCommand{\EdgeV}{O{\req} O{i} O{j} O{u}}{\ensuremath{V^E_{#1,#2,#3,#4}}}
\DeclareDocumentCommand{\EdgeE}{O{\req} O{i} O{j} O{u}}{\ensuremath{E^E_{#1,#2,#3,#4}}}

\DeclareDocumentCommand{\VESD}{O{\req}}{\ensuremath{\overrightarrow{E}_{#1}}}
\DeclareDocumentCommand{\VEOD}{O{\req}}{\ensuremath{\overleftarrow{E}_{#1}}}
\DeclareDocumentCommand{\VESigmaD}{O{\req} O{\sigma}}{\ensuremath{E_{#1,#2}}}

\DeclareDocumentCommand{\NodeVRange}{O{\req} O{i} O{j}}{\ensuremath{V^N_{#1,#2,#3}}}
\DeclareDocumentCommand{\NodeVRangeRange}{O{\req} O{i} O{j} O{u} O{v}}{\ensuremath{V^N_{#1,#2,#3,#4,#5}}}

\DeclareDocumentCommand{\path}{O{k }}{\ensuremath{P_{#1}}}
\DeclareDocumentCommand{\cycle}{O{k }}{\ensuremath{C_{#1}}}

\DeclareDocumentCommand{\NodePaths}{O{\req}}{\ensuremath{\mathcal{P}_{#1}}}


\DeclareDocumentCommand{\loadV}{O{\req} O{u}}{\ensuremath{l_{#1,#2}}}
\DeclareDocumentCommand{\loadE}{O{\req} O{u} O{v}}{\ensuremath{l_{#1,#2,#3}}}
\DeclareDocumentCommand{\loadX}{O{\req} O{x}}{\ensuremath{l_{#1,#2}}}
\DeclareDocumentCommand{\decomp}{O{\req} O{k}}{\ensuremath{D_{#1}^{#2}}}
\DeclareDocumentCommand{\decompHat}{O{\req} O{k}}{\ensuremath{{\hat{D}}_{#1}^{#2}}}
\DeclareDocumentCommand{\load}{O{\req} O{k}}{\ensuremath{l_{#1}^{#2}}}
\DeclareDocumentCommand{\prob}{O{\req} O{k}}{\ensuremath{f_{#1}^{#2}}}
\DeclareDocumentCommand{\mapping}{O{\req} O{k}}{\ensuremath{m_{#1}^{#2}}}

\DeclareDocumentCommand{\loadHat}{O{\req} O{k}}{\ensuremath{\hat{l}_{#1}^{#2}}}
\DeclareDocumentCommand{\probHat}{O{\req} O{k}}{\ensuremath{\hat{f}_{#1}^{#2}}}
\DeclareDocumentCommand{\mappingHat}{O{\req} O{k}}{\ensuremath{\hat{m}_{#1}^{#2}}}

\DeclareDocumentCommand{\loadHat}{O{\req} O{k}}{\ensuremath{\hat{l}_{#1}^{#2}}}
\DeclareDocumentCommand{\probHat}{O{\req} O{k}}{\ensuremath{\hat{f}_{#1}^{#2}}}
\DeclareDocumentCommand{\mappingHat}{O{\req} O{k}}{\ensuremath{\hat{m}_{#1}^{#2}}}

\DeclareDocumentCommand{\Exp}{}{\ensuremath{\mathbb{E}}}
\DeclareDocumentCommand{\randVarX}{O{\req} O{k}}{\ensuremath{X_{#1}^{#2}}}
\DeclareDocumentCommand{\randVarY}{O{\req}}{\ensuremath{Y_{#1}}}
\DeclareDocumentCommand{\randVarZ}{O{\req}}{\ensuremath{Z_{#1}}}
\DeclareDocumentCommand{\randVarL}{O{x}}{\ensuremath{L_{x}}}
\DeclareDocumentCommand{\randVarLX}{O{\req} O{x} O{y}}{\ensuremath{L_{#1,#2,#3}}}
\DeclareDocumentCommand{\randVarLNode}{O{\req} O{\type} O{u}}{\ensuremath{L_{#1,#2,#3}}}
\DeclareDocumentCommand{\randVarLEdge}{O{\req} O{u} O{v}}{\ensuremath{L_{#1,#2,#3}}}
\DeclareDocumentCommand{\randVarM}{O{\req}}{\ensuremath{M_{#1}}}
\DeclareDocumentCommand{\randVarC}{O{\req}}{\ensuremath{C_{#1}}}

\DeclareDocumentCommand{\ProbVarX}{O{1}}{\ensuremath{\mathbb{P}(\randVarX = #1)}}
\DeclareDocumentCommand{\ProbVarY}{O{1}}{\ensuremath{\mathbb{P}(\randVarY = #1)}}
\DeclareDocumentCommand{\ProbVarZ}{O{1}}{\ensuremath{\mathbb{P}(\randVarZ = #1)}}
\DeclareDocumentCommand{\ProbVarL}{O{1}}{\ensuremath{\mathbb{P}(\randVarL = #1)}}
\DeclareDocumentCommand{\ProbVarM}{O{1}}{\ensuremath{\mathbb{P}(\randVarM = #1)}}
\DeclareDocumentCommand{\ProbVarC}{O{1}}{\ensuremath{\mathbb{P}(\randVarC = #1)}}

\DeclareDocumentCommand{\randVarObjApprox}{}{\ensuremath{Obj_{\textnormal{Alg}}}}
\DeclareDocumentCommand{\optLP}{}{\ensuremath{\textnormal{B}_{\textnormal{LP}}}}
\DeclareDocumentCommand{\optLPstd}{}{\ensuremath{\textnormal{Opt}^{\textnormal{mcf}}_{\textnormal{LP}}}}
\DeclareDocumentCommand{\optLPnew}{}{\ensuremath{\textnormal{Opt}^{\textnormal{new}}_{\textnormal{LP}}}}
\DeclareDocumentCommand{\optIP}{}{\ensuremath{\textnormal{B}_{\textnormal{IP}}}}

\DeclareDocumentCommand{\WAC}{O{\req}}{\ensuremath{\textnormal{WC}_{\req}}}


\DeclareDocumentCommand{\PotEmbeddings}{O{\req}}{\ensuremath{\mathcal{D}_{#1}}}
\DeclareDocumentCommand{\PotEmbeddingsHat}{O{\req}}{\ensuremath{\hat{\mathcal{D}}_{#1}}}


\DeclareDocumentCommand{\maxLoadX}{O{x}}{\ensuremath{\textnormal{max}^{L,\sum}_{#1}}}
\DeclareDocumentCommand{\maxLoadV}{O{\req} O{\type} O{u}}{\ensuremath{\textnormal{max}^{L}_{#1,#2,#3}}}
\DeclareDocumentCommand{\maxLoadE}{O{\req} O{u} O{v}}{\ensuremath{\textnormal{max}^{L}_{#1,#2,#3}}}
\DeclareDocumentCommand{\maxLoadVSum}{O{\req} O{\type} O{u}}{\ensuremath{\textnormal{max}^{L,\Sigma}_{#1,#2,#3}}}
\DeclareDocumentCommand{\maxLoadESum}{O{u} O{v}}{\ensuremath{\textnormal{max}^{L,\Sigma}_{#1,#2}}}

\DeclareDocumentCommand{\DeltaV}{}{\ensuremath{\Delta_V}}
\DeclareDocumentCommand{\DeltaE}{}{\ensuremath{\Delta_E}}


\DeclareDocumentCommand{\VVroot}{O{\req}}{\ensuremath{r_{#1}}}
\DeclareDocumentCommand{\VVpred}{O{\req}}{\ensuremath{\pi_{#1}}}

\newcommand{\VGbfs}[1][\req]{\ensuremath{G^{\mathcal{A}}_{#1}}}
\newcommand{\VVbfs}[1][\req]{\ensuremath{V^{\mathcal{A}}_{#1}}}
\newcommand{\VEbfs}[1][\req]{\ensuremath{E^{\mathcal{A}}_{#1}}}

\DeclareDocumentCommand{\Cycles}{O{\req}}{\ensuremath{\mathcal{C}_{#1}}}
\DeclareDocumentCommand{\Paths}{O{\req}}{\ensuremath{\mathcal{P}_{#1}}}

\DeclareDocumentCommand{\VVcycleSource}{O{\req} O{k}}{\ensuremath{s^{C_{#2}}_{#1}}}
\DeclareDocumentCommand{\VVcycleTarget}{O{\req} O{k}}{\ensuremath{t^{C_{#2}}_{#1}}}
\DeclareDocumentCommand{\VVpathSource}{O{\req} O{k}}{\ensuremath{s^{P_{#2}}_{#1}}}
\DeclareDocumentCommand{\VVpathTarget}{O{\req} O{k}}{\ensuremath{t^{P_{#2}}_{#1}}}

\DeclareDocumentCommand{\VGcycle}{O{\req} O{k}}{\ensuremath{G^{\mathcal{A},C_{#2}}_{#1}}}
\DeclareDocumentCommand{\VVcycle}{O{\req} O{k}}{\ensuremath{V^{\mathcal{A},C_{#2}}_{#1}}}
\DeclareDocumentCommand{\VEcycle}{O{\req} O{k}}{\ensuremath{E^{\mathcal{A},C_{#2}}_{#1}}}
\DeclareDocumentCommand{\VEcycleSame}{O{\req} O{k}}{\ensuremath{\overrightarrow{E}^{C_{#2}}_{#1}}}
\DeclareDocumentCommand{\VEcycleDiff}{O{\req} O{k}}{\ensuremath{\overleftarrow{E}^{C_{#2}}_{#1}}}

\DeclareDocumentCommand{\VGcycleOrig}{O{\req} O{k}}{\ensuremath{G^{C_{#2}}_{#1}}}
\DeclareDocumentCommand{\VVcycleOrig}{O{\req} O{k}}{\ensuremath{V^{C_{#2}}_{#1}}}
\DeclareDocumentCommand{\VEcycleOrig}{O{\req} O{k}}{\ensuremath{E^{C_{#2}}_{#1}}}
\DeclareDocumentCommand{\VGpath}{O{\req} O{k}}{\ensuremath{G^{P_{#2}}_{#1}}}
\DeclareDocumentCommand{\VVpath}{O{\req} O{k}}{\ensuremath{V^{P_{#2}}_{#1}}}
\DeclareDocumentCommand{\VEpath}{O{\req} O{k}}{\ensuremath{E^{P_{#2}}_{#1}}}

\DeclareDocumentCommand{\VEpathSame}{O{\req} O{k}}{\ensuremath{\overrightarrow{E}^{P_{#2}}_{#1}}}
\DeclareDocumentCommand{\VEpathDiff}{O{\req} O{k}}{\ensuremath{\overleftarrow{E}^{P_{#2}}_{#1}}}

\DeclareDocumentCommand{\VEDiff}{O{\req}}{\ensuremath{\overleftarrow{E}^{\mathcal{A}}_{#1}}}

\DeclareDocumentCommand{\VEcycles}{O{\req}}{\ensuremath{E^{\mathcal{C}}_{#1}}}
\DeclareDocumentCommand{\VEpaths}{O{\req}}{\ensuremath{E^{\mathcal{P}}_{#1}}}

\DeclareDocumentCommand{\VVcycleSourcesTargets}{O{\req}}{\ensuremath{V^{\mathcal{C},\pm}_{#1}}}
\DeclareDocumentCommand{\VVpathSourcesTargets}{O{\req}}{\ensuremath{V^{\mathcal{P},\pm}_{#1}}}
\DeclareDocumentCommand{\VVSourcesTargets}{O{\req}}{\ensuremath{V^{\pm}_{#1}}}

\DeclareDocumentCommand{\VVcycleSources}{O{\req}}{\ensuremath{V^{\mathcal{C},+}_{#1}}}
\DeclareDocumentCommand{\VVpathSources}{O{\req}}{\ensuremath{V^{\mathcal{P},+}_{#1}}}

\DeclareDocumentCommand{\VVcycleTargets}{O{\req}}{\ensuremath{V^{\mathcal{C},-}_{#1}}}
\DeclareDocumentCommand{\VVpathTargets}{O{\req}}{\ensuremath{V^{\mathcal{P},-}_{#1}}}

\DeclareDocumentCommand{\VGcycleBranchR}{O{\req} O{k}}{\ensuremath{G^{C_{#2}, B_1}_{#1}}}
\DeclareDocumentCommand{\VVcycleBranchR}{O{\req} O{k}}{\ensuremath{V^{C_{#2}, B_1}_{#1}}}
\DeclareDocumentCommand{\VEcycleBranchR}{O{\req} O{k}}{\ensuremath{E^{C_{#2}, B_1}_{#1}}}

\DeclareDocumentCommand{\VGcycleBranchL}{O{\req} O{k}}{\ensuremath{G^{C_{#2}, B_2}_{#1}}}
\DeclareDocumentCommand{\VVcycleBranchL}{O{\req} O{k}}{\ensuremath{V^{C_{#2}, B_2}_{#1}}}
\DeclareDocumentCommand{\VEcycleBranchL}{O{\req} O{k}}{\ensuremath{E^{C_{#2}, B_2}_{#1}}}

\DeclareDocumentCommand{\VGdecomp}{O{\req} O{k}}{\ensuremath{G^{\mathcal{D}}_{#1}}}
\DeclareDocumentCommand{\VVdecomp}{O{\req} O{k}}{\ensuremath{V^{\mathcal{D}}_{#1}}}
\DeclareDocumentCommand{\VEdecomp}{O{\req} O{k}}{\ensuremath{E^{\mathcal{D}}_{#1}}}

\DeclareDocumentCommand{\VVbranching}{O{\req} }{\ensuremath{\mathcal{B}_{#1}}}
\DeclareDocumentCommand{\VVbranchingcycle}{O{\req} O{k}}{\ensuremath{\mathcal{B}^{C_{#2}}_{#1}}}
\DeclareDocumentCommand{\VVbranchingpath}{O{\req} O{k}}{\ensuremath{\mathcal{B}^{P_{k}}_{#1}}}
\DeclareDocumentCommand{\VVjoin}{O{\req} }{\ensuremath{\mathcal{J}_{#1}}}
\DeclareDocumentCommand{\VVaggregation}{O{\req} }{\ensuremath{\mathcal{A}_{#1}}}

\DeclareDocumentCommand{\VGextcycle}{O{\req} O{k}}{\ensuremath{G^{C_{#2}}_{#1,\textnormal{ext}}}}

\DeclareDocumentCommand{\VVextcycle}{O{\req} O{k}}{\ensuremath{V^{C_{#2}}_{#1,\textnormal{ext}}}}
\DeclareDocumentCommand{\VVextcycleSources}{O{\req} O{k}}{\ensuremath{V^{C_{#2}}_{#1,+}}}
\DeclareDocumentCommand{\VVextcycleTargets}{O{\req} O{k}}{\ensuremath{V^{C_{#2}}_{#1,-}}}
\DeclareDocumentCommand{\VVextcycleSubstrate}{O{\req} O{k}}{\ensuremath{V^{C_{#2}}_{#1,S}}}

\DeclareDocumentCommand{\VEextcycle}{O{\req} O{k}}{\ensuremath{E^{C_{#2}}_{#1,\textnormal{ext}}}}
\DeclareDocumentCommand{\VEextcycleSources}{O{\req} O{k}}{\ensuremath{E^{C_{#2}}_{#1,+}}}
\DeclareDocumentCommand{\VEextcycleTargets}{O{\req} O{k}}{\ensuremath{E^{C_{#2}}_{#1,-}}}
\DeclareDocumentCommand{\VEextcycleSubstrate}{O{\req} O{k}}{\ensuremath{E^{C_{#2}}_{#1,S}}}
\DeclareDocumentCommand{\VEextcycleF}{O{\req} O{k}}{\ensuremath{E^{C_{#2}}_{#1,F}}}

\DeclareDocumentCommand{\VGextpath}{O{\req} O{k}}{\ensuremath{G^{{P_{#2}}}_{#1,\textnormal{ext}}}}

\DeclareDocumentCommand{\VVextpath}{O{\req} O{k}}{\ensuremath{V^{P_{#2}}_{#1,\textnormal{ext}}}}
\DeclareDocumentCommand{\VVextpathSources}{O{\req} O{k}}{\ensuremath{V^{P_{#2}}_{#1,+}}}
\DeclareDocumentCommand{\VVextpathTargets}{O{\req} O{k}}{\ensuremath{V^{P_{#2}}_{#1,-}}}
\DeclareDocumentCommand{\VVextpathSubstrate}{O{\req} O{k}}{\ensuremath{V^{P_{#2}}_{#1,S}}}

\DeclareDocumentCommand{\VEextpath}{O{\req} O{k}}{\ensuremath{E^{P_{#2}}_{#1,\textnormal{ext}}}}
\DeclareDocumentCommand{\VEextpathSources}{O{\req} O{k}}{\ensuremath{E^{P_{#2}}_{#1,+}}}
\DeclareDocumentCommand{\VEextpathTargets}{O{\req} O{k}}{\ensuremath{E^{P_{#2}}_{#1,-}}}
\DeclareDocumentCommand{\VEextpathSubstrate}{O{\req} O{k}}{\ensuremath{E^{P_{#2}}_{#1,S}}}
\DeclareDocumentCommand{\VEextpathF}{O{\req} O{k}}{\ensuremath{E^{P_{#2}}_{#1,F}}}

\DeclareDocumentCommand{\forest}{O{\req}}{\ensuremath{\mathcal{F}_{#1}}}

\DeclareDocumentCommand{\VGforest}{O{\req}}{\ensuremath{G^{\mathcal{A},\mathcal{F}}_{#1}}}
\DeclareDocumentCommand{\VVforest}{O{\req}}{\ensuremath{V^{\mathcal{A},\mathcal{F}}_{#1}}}
\DeclareDocumentCommand{\VEforest}{O{\req}}{\ensuremath{E^{\mathcal{A},\mathcal{F}}_{#1}}}

\DeclareDocumentCommand{\VGforestOrig}{O{\req}}{\ensuremath{G^{\mathcal{F}}_{#1}}}
\DeclareDocumentCommand{\VVforestOrig}{O{\req}}{\ensuremath{V^{\mathcal{F}}_{#1}}}
\DeclareDocumentCommand{\VEforestOrig}{O{\req}}{\ensuremath{E^{\mathcal{F}}_{#1}}}


\DeclareDocumentCommand{\varFlowInput}{O{\req} O{i} O{u}}{\ensuremath{f^+_{#1,#2,#3}}}
\DeclareDocumentCommand{\varFlowOutput}{O{\req} O{i} O{u}}{\ensuremath{f^+_{#1,#2,#3}}}

\DeclareDocumentCommand{\VEextcycleHorizontal}{O{\req} O{k} O{u} O{v}}{\ensuremath{E^{C_{#2}}_{#1,\textnormal{ext},#3,#4}}}
\DeclareDocumentCommand{\VEextpathHorizontal}{O{\req} O{k} O{u} O{v}}{\ensuremath{E^{P_{#2}}_{#1,\textnormal{ext},#3,#4}}}
\DeclareDocumentCommand{\VEextcycleVertical}{O{\req} O{k} O{\type} O{u}}{\ensuremath{E^{C_{#2}}_{#1,\textnormal{ext},#3,#4}}}
\DeclareDocumentCommand{\VEextpathVertical}{O{\req} O{k} O{\type} O{u}}{\ensuremath{E^{P_{#2}}_{#1,\textnormal{ext},#3,#4}}}

\DeclareDocumentCommand{\VEextCGHorizontal}{O{\req} O{u} O{v}}{\ensuremath{E^{\textnormal{ext,SCG}}_{#1,#2,#3}}}
\DeclareDocumentCommand{\VEextCGVertical}{O{\req} O{\type} O{u}}{\ensuremath{E^{\textnormal{ext,SCG}}_{#1,#2,#3}}}

\DeclareDocumentCommand{\VVextCGFlowNodes}{O{\req}}{\ensuremath{V^{\textnormal{ext,SCG}}_{#1,\textnormal{flow}}}}
\DeclareDocumentCommand{\VEextCGFlowEdges}{O{\req}}{\ensuremath{E^{\textnormal{ext,SCG}}_{#1,\textnormal{flow}}}}


\DeclareDocumentCommand{\Queue}{}{\ensuremath{\mathcal{Q}}}
\DeclareDocumentCommand{\QueueC}{}{\ensuremath{\mathcal{Q}_{\mathcal{C}}}}
\DeclareDocumentCommand{\QueueP}{}{\ensuremath{\mathcal{Q}_{\mathcal{P}}}}
\DeclareDocumentCommand{\UsedPaths}{}{\ensuremath{\mathcal{P}}}
\DeclareDocumentCommand{\Variables}{}{\ensuremath{\mathcal{V}}}

\DeclareDocumentCommand{\VGextcycleFlow}{O{\req} O{k}}{\ensuremath{G^{C_{#2}}_{#1,\textnormal{ext},f}}}

\DeclareDocumentCommand{\VGextcycleFlowBranchR}{O{\req} O{k}}{\ensuremath{G^{C_{#2},{B}_1}_{#1,\textnormal{ext},f}}}
\DeclareDocumentCommand{\VVextcycleFlowBranchR}{O{\req} O{k}}{\ensuremath{V^{C_{#2},{B}_1}_{#1,\textnormal{ext},f}}}
\DeclareDocumentCommand{\VEextcycleFlowBranchR}{O{\req} O{k}}{\ensuremath{E^{C_{#2},{B}_1}_{#1,\textnormal{ext},f}}}

\DeclareDocumentCommand{\VGextcycleFlowBranchL}{O{\req} O{k}}{\ensuremath{G^{C_{#2},{B}_2}_{#1,\textnormal{ext},f}}}
\DeclareDocumentCommand{\VVextcycleFlowBranchL}{O{\req} O{k}}{\ensuremath{V^{C_{#2},{B}_2}_{#1,\textnormal{ext},f}}}
\DeclareDocumentCommand{\VEextcycleFlowBranchL}{O{\req} O{k}}{\ensuremath{E^{C_{#2},{B}_2}_{#1,\textnormal{ext},f}}}

\DeclareDocumentCommand{\VGextcycleBranchR}{O{\req} O{k}}{\ensuremath{G^{C_{#2},{B}_1}_{#1,\textnormal{ext}}}}
\DeclareDocumentCommand{\VVextcycleBranchR}{O{\req} O{k}}{\ensuremath{V^{C_{#2},{B}_1}_{#1,\textnormal{ext}}}}
\DeclareDocumentCommand{\VEextcycleBranchR}{O{\req} O{k}}{\ensuremath{E^{C_{#2},{B}_1}_{#1,\textnormal{ext}}}}

\DeclareDocumentCommand{\VGextcycleBranchL}{O{\req} O{k}}{\ensuremath{G^{C_{#2},{B}_2}_{#1,\textnormal{ext}}}}
\DeclareDocumentCommand{\VVextcycleBranchL}{O{\req} O{k}}{\ensuremath{V^{C_{#2},{B}_2}_{#1,\textnormal{ext}}}}
\DeclareDocumentCommand{\VEextcycleBranchL}{O{\req} O{k}}{\ensuremath{E^{C_{#2},{B}_2}_{#1,\textnormal{ext}}}}

\DeclareDocumentCommand{\VGextpathFlow}{O{\req} O{k}}{\ensuremath{G^{P_{#2}}_{#1,\textnormal{ext},f}}}
\DeclareDocumentCommand{\VVextpathFlow}{O{\req} O{k}}{\ensuremath{V^{P_{#2}}_{#1,\textnormal{ext},f}}}
\DeclareDocumentCommand{\VEextpathFlow}{O{\req} O{k}}{\ensuremath{E^{P_{#2}}_{#1,\textnormal{ext},f}}}

\DeclareDocumentCommand{\VVKSource}{O{\req} O{K}}{\ensuremath{s^{K}_{#1}}}
\DeclareDocumentCommand{\VVKTarget}{O{\req} O{K}}{\ensuremath{t^{K}_{#1}}}
\DeclareDocumentCommand{\VVKSourcesTargets}{O{\req}}{\ensuremath{V^{K,\pm}_{#1}}}

\newcommand{\fracSol}{\ensuremath{\mathsf{opt}_{\textnormal{LP}}}}
\newcommand{\intSol}{\ensuremath{\mathsf{opt}_{\textnormal{IP}}}}
\newcommand{\Prob}{\ensuremath{\mathsf{Pr}}}

\newcommand{\MIPMCF}{\ensuremath{\mathrm{MIP}_{\mathrm{MCF}}}}
\newcommand{\LPNOVEL}{\ensuremath{\mathrm{LP}_{\mathrm{novel}}}}
\newcommand{\RRMIN}{\ensuremath{\mathrm{RR}_{\mathrm{MinLoad}}}}
\newcommand{\RRMAX}{\ensuremath{\mathrm{RR}_{\mathrm{MaxProfit}}}}
\newcommand{\RRHEUR}{\ensuremath{\mathrm{RR}_{\mathrm{Heuristic}}}}

\newcommand{\customParagraphStar}[1]{\subsection{{#1}}}

\begin{abstract}

The Virtual Network Embedding Problem (VNEP) captures the essence of many resource allocation problems of today's infrastructure providers, which offer their physical computation and networking resources to customers. Customers request resources in the form of Virtual Networks, i.e.~as a directed graph which specifies computational requirements at the nodes and communication requirements on the edges.
An embedding of a Virtual Network on the shared physical infrastructure is the joint mapping of (virtual) nodes to physical servers together with the mapping of (virtual) edges onto paths in the physical network connecting the respective servers. 

This work initiates the study of approximation algorithms for the VNEP. Concretely, we study the offline setting with admission control: given multiple request graphs the task is to embed the most profitable subset while not exceeding resource capacities.
Our approximation is based on the randomized rounding of Linear Programming (LP) solutions. Interestingly, we uncover that the standard LP formulation for the VNEP exhibits an inherent structural deficit when considering general virtual network topologies: its solutions cannot be decomposed into valid embeddings. In turn, focusing on the class of cactus request graphs, we devise a novel LP formulation, whose solutions can be decomposed into convex combinations of valid embedding. Proving performance guarantees of our rounding scheme, we obtain the first approximation algorithm for the VNEP in the resource augmentation model. 

We propose two types of rounding heuristics and evaluate their performance in an extensive computational study. Our results indicate that randomized rounding can yield good solutions (even without augmentations). Specifically, heuristic rounding achieves 73.8\% of the baseline's profit, while not exceeding capacities.

\end{abstract}

\maketitle

\section{Introduction}
Cloud applications usually consist of multiple distributed components
(e.g., virtual machines, containers), 
which results in substantial communication requirements. If the provider fails to ensure that these communication requirements are met, the performance can suffer dramatically~\cite{talkabout}. Consequently, over the last years,
several proposals have been made to \emph{jointly} provision the computational functionality \emph{together} with appropriate network resources. The Virtual Network Embedding Problem (VNEP) captures the core of this problem: given a directed graph specifying computational requirements at the nodes and bandwidth requirements on the edges, an embedding of this \emph{Virtual Network} in the physical network has to be found, such that both the computational and the network requirements are met. 
Figure~\ref{fig:vnet-examples-sc-vc} illustrates two incarnations of virtual networks: \emph{service chains}~\cite{mehraghdam2014specifying} and \emph{virtual clusters}~\cite{oktopus}. 

\begin{figure}[tbhp]
\centering
\includegraphics[width=0.8\columnwidth]{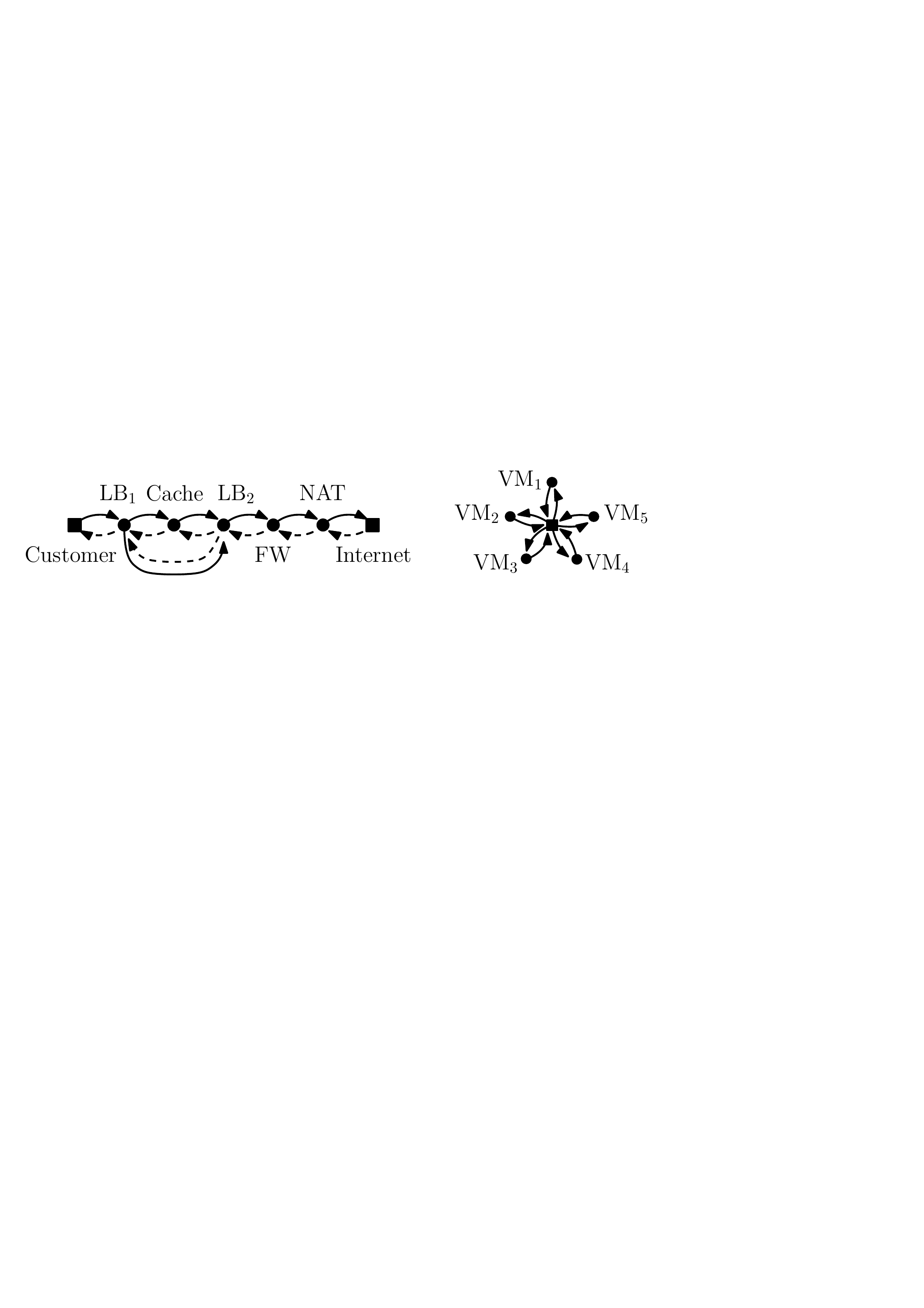}
\caption{Examples for virtual networks `in the wild'. The left graph shows a service chain for mobile operators~\cite{ietf-sfc-use-case-mobility-06}: load-balancers route (parts of the) traffic through a cache. Furthermore,  a firewall and a network-address translation are used.
 The right graph depicts the Virtual Cluster abstraction for provisioning virtual machines (VMs) in data centers. The abstraction provides connectivity guarantees via a \emph{logical switch} in the center~\cite{oktopus}.}
\label{fig:vnet-examples-sc-vc}
\vspace{-6pt}
\end{figure}

We study the offline setting with admission control: given multiple requests the task is to embed the most profitable subset while not exceeding resource capacities.

\subsection{Formal Problem Statement}

\DeclareDocumentCommand{\maxDemandV}{O{\req} O{\type} O{u}}{\ensuremath{{d}_{\textnormal{max}}(#1,#2,#3)}}
\DeclareDocumentCommand{\maxDemandE}{O{\req} O{u} O{v}}{\ensuremath{d_{\textnormal{max}} ({#1,#2,#3})}}
\DeclareDocumentCommand{\maxDemandX}{O{\req} O{x} O{y}}{\ensuremath{d_{\textnormal{max}} ({#1,#2,#3})}}
\DeclareDocumentCommand{\maxAllocV}{O{\req} O{\type} O{u}}{\ensuremath{{A_{\textnormal{max}}}({#1,#2,#3})}}
\DeclareDocumentCommand{\maxAllocE}{O{\req} O{u} O{v}}{\ensuremath{ {A_{\textnormal{max}}}({#1,#2,#3})}}
\DeclareDocumentCommand{\maxAllocX}{O{\req} O{x} O{y}}{\ensuremath{ {A_{\textnormal{max}}}({#1,#2,#3})}}

In the light of the recent interest in Service Chaining~\cite{mehraghdam2014specifying}, we extend the VNEP's general definition~\cite{vnep-survey} by considering different \emph{types} of computational nodes. We refer to the physical network as the \emph{substrate network}.
The substrate $\SG=(\SV,\SE)$ is offering a set~$\types$ of computational types. This set of types may contain, e.g., `FW'~(firewall), `x86 server', etc. For a type $\type \in \types$, the set $\SVTypes \subseteq \SV$ denotes the substrate nodes that can host functionality of type $\type$. Denoting the node resources by $\SRV = \{(\type, u)~| \type \in \types, u \in \SVTypes\}$ and all substrate resources by $\SR = \SRV \cup \SE$, the capacity of nodes and edges is denoted by $\Scap(x,y) > 0$ for  $(x,y) \in \SR$.

For each request $\req \in \requests$, a directed graph $\VG=(\VV,\VE)$ together with a profit $\Vprofit > 0$ is given. We refer to the respective nodes as virtual or request nodes and similarly refer to the respective edges as virtual or request edges.
The types of virtual nodes are indicated by the function $\Vtype : \VV \to \types$. 

Based on policies of the customer or the provider, the mapping of virtual node $i \in \VV$ is restricted to a set $\VVloc \subseteq \SVTypes[\Vtype(i)]$, while the mapping of virtual edge $(i,j)$ is restricted to a subset of substrate edges $\VEloc \subseteq \SE$. 
Each virtual node $i \in \VV$ and each edge $(i,j) \in \VE$ is attributed with a resource demand $\Vcap(i) \geq 0$ and $\Vcap(i,j) \geq 0$, respectively. 
Virtual nodes and edges can only be mapped on substrate nodes and edges of sufficient capacity, i.e. \mbox{$\VVloc \subseteq \{u \in \SVTypes[\Vtype(i)] | \Scap(u) \geq \Vcap(i)\}$} and \mbox{$\VEloc \subseteq \{(u,v) \in \SE | \Scap(u,v) \geq \Vcap(i,j)\}$} holds.

We denote by $d_{\max}(r,x,y)$ the maximal demand that a request $\req$ may impose on a resource $(x,y) \in \SR$:
{
\small\begin{alignat*}{5}
\maxDemandV &= && \max ( \{0\} \cup \{\Vcap(i) | i \in \VV: \type(i) = \tau \wedge u \in \VVloc\} ) \\
\maxDemandE &= && \max ( \{0\} \cup \{ \Vcap(i,j) | (i,j) \in  \VE: (u,v) \in \VEloc \})
\end{alignat*}
}
In the following the notions of valid mappings (respecting mapping constraints) and feasible embeddings (respecting resource constraints) are introduced to formalize the VNEP.

\begin{definition}[Valid Mapping]
\label{def:valid-mapping}
A valid mapping~$\map$ of request~$\req \in \requests$ is a tuple~$(\mapV, \mapE)$ of functions $\mapV : \VV \to \SV$ and $\mapE : \VE \to \mathcal{P}(\SE)$, such that the following holds:
\begin{itemize}
\item Virtual nodes are mapped to allowed substrate nodes: $\mapV(i) \in  \VVloc$  holds for all $i \in \VV$.
\item The mapping $\mapE(i,j)$ of virtual edge $(i,j) \in \VE$ is an edge-path connecting~$\mapV(i)$ to $\mapV(j)$ only using allowed edges, i.e. $\mapE(i,j) \subseteq \mathcal{P}(\VEloc)$ holds. 
\end{itemize} 
We denote by $\spaceSolReq$ the set of valid mappings of request $\req \in \requests$.
\end{definition}

\begin{definition}[Allocations of Valid Mappings] We denote by $A(\map,x,y)$ the cumulative allocation induced by the valid mapping $\map \in \spaceSolReq$ 
on resource $(x,y) \in \SR$:
\begin{alignat*}{4}
A(\map, \tau,u) &= && \sum \nolimits_{i \in \VV, \type(i)=\tau, \mapV(i)=u } \Vcap(i) & ~& \forall (\tau,u) \in \SRV \\
A(\map,u,v) &= && \sum \nolimits_{ (i,j)\in \VE, (u,v) \in \mapE(i,j)}   \Vcap(i,j) &~& \forall (u,v) \in \SE
\end{alignat*}
The maximal allocation that a valid mapping of request $\req \in \requests$ may impose on a substrate resource $(x,y) \in \SR$ is denoted by $\maxAllocX  =  \max_{\map \in \spaceSolReq} A(\map,x,y)$.
\end{definition}

\begin{definition}[Feasible Embedding]
\label{def:feasible-embedding}
A feasible embedding of a subset of requests $\requestsP \subseteq \requests$ is a collection of valid mappings $\{\map[\req]\}_{r \in \requestsP}$, such that the \emph{cumulative} allocations on nodes and edges does not exceed the substrate capacities, i.e. $\sum_{\req \in \requestsP} A(\map,x,y) \leq \Scap(x,y)$ holds for $(x,y) \in \SR$.
\end{definition}

\begin{definition}[Virtual Network Embedding Problem]
\label{def:scep-with-admission-control}
The VNEP asks for a feasible embedding $\{m_{\req}\}_{\req \in \requestsP}$ of a subset of requests $\requestsP \subseteq \requests$ maximizing the profit $\sum_{\req \in \requestsP} \Vprofit[\req]$.
\end{definition}

\subsection{Related Work}

In the last decade, the VNEP has attracted much attention due to its many applications and the survey~\cite{vnep-survey} from 2013 already lists more than 80 different algorithms for its many variations~\cite{vnep-survey}. 
The VNEP is known to be $\mathcal{NP}$-hard and inapproximable in general (unless $\mathcal{P} = \mathcal{NP}$)~\cite{rostSchmidVNEPComplexity}.
Based on the hardness of the VNEP, most works consider heuristics without any performance guarantee~\cite{vnep-survey,vnep}. 
Other works proposed exact methods as integer or constraint programming, coming at the cost of an exponential runtime~\cite{HVSBFTF15,jarray2015decomposition,rostSchmidFeldmann2014}. 

A column generation approach was proposed by Jarray et al. in~\cite{jarray2015decomposition} to efficiently compute solutions to the VNEP by generating feasible mappings `on-the-fly subject to a specific cost measure. In particular, Jarray et al. compute feasible mappings by relying on heuristics and, if no feasible heuristical solution was found, rely on Mixed-Integer Programming to compute a cost optimal feasible embedding in non-polynomial time.
We believe that our work can bridge the gap between the heuristic generation of feasible mappings and the optimal generation of feasible mappings as our approach can be used to compute \emph{approximate} mappings.

Acknowledging the hardness of the general VNEP and the diversity of applications, several subproblems of the VNEP have been studied recently by considering restricted graph classes for the virtual networks and the substrate graph. For example, virtual clusters with uniform demands  are studied in \cite{oktopus,ccr15emb}, line requests are studied in~\cite{sss-guy,sirocco16path,sirocco15} and tree requests were studied in \cite{sirocco16path,bansal2011minimum}.

Considering approximation algorithms, Even et al. employed randomized rounding in~\cite{sirocco16path} to obtain a constant approximation for embedding line requests on arbitrary substrate graphs under strong assumptions on \emph{both} the benefits and the capacities.
In their interesting work, 
Bansal et al.~\cite{bansal2011minimum} 
give an
$n^{O(d)}$ time~$O(d^2 \log{(nd)})$-approximation algorithm 
for minimizing the load of embedding $d$-depth trees based on a $n$-node substrate. Their result is based on a strong LP relaxation
inspired by the Sherali-Adams hierarchy. 

To the best of our knowledge, no approximation algorithms are known for arbitrary substrate graphs and classes of virtual networks containing cyclic substructures.

\paragraph*{Bibliographic Note}

This technical report is an extended version of our paper presented at IFIP Networking 2018~\cite{rostSchmidVNEP_RR_IFIP_18}.

In our preliminary technical report~\cite{DBLP:journals/corr/RostS16} similar results were presented. The current work presents a significantly simpler LP formulation and also provides an extensive computational evaluation. 
Additionally, in our recent technical report~\cite{rostSchmidFPTApproximationsTechReport}, the approximation approach presented in this work is extended beyond cactus request graphs. However, approximating more general request graphs comes at the price of non-polynomial runtimes.

\subsection{Outline of Randomized Rounding for the VNEP}
\label{sub:outline-of-rr-for-the-vnep}

We shortly revisit the concept of randomized rounding~\cite{Raghavan-Thompson}. Given an Integer Program for a certain problem, randomized rounding works by (i) computing a solution to its Linear Program relaxation, (ii) decomposing this solution into convex combinations of \emph{elementary} solutions, and (iii)~probabilistically selecting elementary solutions based on their weight. 

Accordingly, for applying randomized rounding for the VNEP, a convex combination of valid mappings \mbox{$\PotEmbeddings = \{ (\prob, \mapping) | \mapping \in \spaceSolReq, \prob > 0\}$} must be recovered from the Linear Programming solution for each request $\req \in \requests$, such that (i) the profit of these convex combinations equals the profit achieved by the Linear Program and (ii) the (fractional) cumulative allocations do not violate substrate capacities. \emph{Rounding}  a solution is then done as follows. For each request $\req \in \requests$, the mapping $\mapping$ is selected with probability $\prob$, rejecting $\req$ with probability $1- \sum_k \prob$.

\subsection{Results and Organization}

This paper initiates the study of approximation algorithms for the VNEP on general substrates \emph{and} general virtual networks. Specifically, we employ randomized rounding to obtain the first approximation algorithm for the non-trivial class of cactus graph requests in the resource augmentation model. 

Studying the classic multi-commodity flow (MCF) formulation for the VNEP in Section~\ref{sec:classic-mcf-and-its-limits}, we show that its solutions can only be decomposed for tree requests: request graphs containing cycles can in general not be decomposed into valid mappings. This result has ramifications beyond the inability to apply randomized rounding: we prove that the MCF formulation exhibits an unbounded integrality gap.
Investigating the root cause for this surprising result, we devise a novel \emph{decomposable} Linear Programming formulation in Section~\ref{sec:decomposable-novel-formulation} for the class of cactus graph requests.
We then present our randomized rounding algorithm and prove its performance guarantees in Section~\ref{sec:approximation-via-randround}, obtaining the first approximation algorithm for the Virtual Network Embedding Problem.
Section~\ref{sec:evaluation} presents a synthetic computational study, in which two types of rounding heuristics are evaluated. Our results indicate that high-quality solutions can be obtained even without resource augmentations. Indeed, our heuristic rounding algorithm achieves $73.8\%$ of the baseline's profit on average, while not exceeding resource capacities.

\begin{figure}[b!]
\vspace{-6pt}

 {
  \LinesNotNumbered
  \renewcommand{\arraystretch}{1.2}
 
 \removelatexerror

  \begin{IPFormulationStar}{H}
 
  \SetAlgorithmName{Formulation}{}{{}}

  \newcommand{\spaceIt}{\qquad\quad\quad}
  \newcommand{\miniSpace}{\hspace{1.5pt}}
 
\popline
\scalebox{0.95}{
  \begin{tabular}{FRLQF}
   \multicolumn{4}{C}{\textnormal{max~}  \sum \limits_{\req \in \requests}  \Vprofit x_{\req}~~~~~~~~~~~~~~~~  } & \tagIt{alg:VNEP-old:obj}\\
   \sum \limits_{u \in \VVloc} \hspace{-4pt} y^u_{\req, i} & = & x_{\req} & \forall \req \in \requests, i \in \VV &  \tagIt{alg:VNEP-old:node-embedding} \\
      \sum \limits_{u \in \SV \setminus \VVloc} \hspace{-12pt} y^u_{\req, i} & = & 0 & \forall \req \in \requests, i \in \VV &  \tagIt{alg:VNEP-old:node-embedding-non-mappable} \\
   \left[ \begin{array}{l}
      \hspace{-6pt} \sum \limits_{(u,v) \in  \delta^+(u)} \hspace{-14pt} z^{u,v}_{\req,i,j} \hspace{2pt} \\
      \hspace{8pt} - \hspace{-10pt} \sum \limits_{(v,u) \in  \delta^-(u)} \hspace{-14pt} z^{v,u}_{\req,i,j}
      \end{array} \hspace{-6pt} \right] & = & \left[ \hspace{-6pt} \begin{array}{l}
      y^u_{\req, i} \\
      - y^u_{\req,j}
      \end{array} \hspace{-6pt}\right]  & \forall \left[ \begin{array}{l}
       \req \in \requests, (i,j) \in  \VE, \\
       u \in \SV
      \end{array} \right] &  \tagIt{alg:VNEP-old:edge-embedding}\\
   
      z^{u,v}_{\req,i,j} & = & 0 & \forall \left[ \begin{array}{l}
              \req \in \requests, (i,j) \in  \VE, \\
             (u,v) \in \SE \setminus \VEloc
            \end{array} \right]~ &  \tagIt{alg:VNEP-old:edge-embedding-non-mappable}\\

 	\sum \limits_{i \in \VV, \Vtype(i) = \type} \hspace{-16pt}  \Vcap(i) \cdot y^u_{\req,i}  & =  & a^{\type,u}_{\req} & \forall \req \in \requests, (\type,u) \in  \SRV &  \tagIt{alg:VNEP-old:load-node}\\
 	
	\sum \limits_{(i,j) \in  \VE }   \hspace{-8pt} \Vcap(i,j) \cdot z^{u,v}_{\req,i,j} & = &  a^{u,v}_{\req} & \forall \req \in \requests, (u,v) \in  \SE & \tagIt{alg:VNEP-old:load-edge} \\
 	
 	\sum \limits_{\req \in \requests} a^{x,y}_{\req}  & \leq  & \Scap(x,y) & \forall (x,y) \in  \SR &  \tagIt{alg:VNEP-old:capacities} 
  \end{tabular}
}
  \caption{Classic MCF Formulation for the VNEP}
  \label{alg:VNEP-IP-old}
  \end{IPFormulationStar}
  }
\end{figure}

\section{The Classic Multi-Commodity Formulation \\for the VNEP and its Limitations}
\label{sec:classic-mcf-and-its-limits}

In this section, we study the relaxation of the standard multi-commodity flow (MCF) formulation for the VNEP (cf. \cite{mehraghdam2014specifying,vnep}). We first show  the positive result that the formulation is sufficiently strong to decompose virtual networks being \emph{trees} into convex combinations of valid mappings. Subsequently, we show that the formulation fails to allow for the decomposition of \emph{cyclic requests}. This not only impacts its applicability for randomized rounding but renders the formulation useless for approximations in general: we show that the formulation's integrality gap is unbounded.

\subsection{The Classic Multi-Commodity Formulation}

The classic MCF formulation for the VNEP is presented as Formulation~\ref{alg:VNEP-IP-old} . We first describe its (mixed-)integer variant, which computes a single valid mapping for each request by using binary variables. The Linear Programming variant is obtained by relaxing the binary variables' domain to $[0,1]$.

The variable $x_{\req} \in \{0,1\}$ indicates whether request $\req \in \requests$ is embedded or not. The variable $y^u_{\req,i} \in \{0,1\}$ indicates whether virtual node $i \in \VV$ was mapped on substrate node $u \in \SV$. Similarly, the flow variable $z^{u,v}_{\req,i,j} \in \{0,1\}$ indicates whether the substrate edge $(u,v) \in \SE$ is used to realize the virtual edge $(i,j) \in \VE$. The variable $a^{x,y}_{\req} \geq 0$ denotes the cumulative allocations of request $\req \in \requests$ induced on resource $(x,y) \in \SR$.

By Constraint~\ref{alg:VNEP-old:node-embedding}, the virtual node $i \in \VV$ of request $\req \in \requests$ must be placed on any of the suitable substrate nodes of $\VVloc$ iff. $x_\req = 1$ holds and Constraint~\ref{alg:VNEP-old:node-embedding-non-mappable} forbids the mapping on nodes which may not host node $i$. Constraint~\ref{alg:VNEP-old:edge-embedding} induces an unsplittable unit flow for each virtual edge $(i,j) \in  \VE$ from the substrate location to which $i$ was mapped to the substrate location to which $j$ was mapped. By Constraint~\ref{alg:VNEP-old:edge-embedding-non-mappable} virtual edges may only be mapped on \emph{allowed} substrate edges.
Constraints~\ref{alg:VNEP-old:load-node} and \ref{alg:VNEP-old:load-edge} compute the cumulative allocations and Constraint~\ref{alg:VNEP-old:capacities} guarantees that the substrate resource capacities are respected. The following lemma states the connectivity property enforced by Formulation~\ref{alg:VNEP-IP-old}.

\begin{lemma}[Local Connectivity Property of Formulation~\ref{alg:VNEP-IP-old}]~\\
\label{lem:local-connectivity-property}
For any virtual edge $(i,j) \in \VE$ and any substrate node $u \in \VVloc$ with $y^u_{\req,i} > 0$, there exists a path $P^{u,v}_{\req,i,j}$ in $\SG$ from $u$ to $v \in \VVloc[j]$ with $y^{v}_{\req,j} > 0$, such that the flow along any edge of $P^{u,v}_{\req,i,j}$ with respect to the variables $z^{\cdot,\cdot}_{\req,i,j}$ is greater 0. 

The path $P^{u,v}_{\req,i,j}$ can be computed in polynomial time.
\end{lemma}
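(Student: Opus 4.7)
The plan is to construct $P^{u,v}_{\req,i,j}$ by a direct reachability argument in the positive-flow subgraph, combined with a flow-conservation sum over the reachable set.

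First, I would fix $(i,j) \in \VE$ and $u \in \VVloc$ with $y^u_{\req,i} > 0$, and consider the subgraph of $\SG$ consisting of exactly those substrate edges $(a,b) \in \SE$ carrying positive flow for $(i,j)$, i.e., those with $z^{a,b}_{\req,i,j} > 0$. Let $R \subseteq \SV$ be the set of nodes reachable from $u$ by a directed path in this subgraph (including $u$ itself). A breadth-first search from $u$ computes $R$ along with a witness tree in time $O(|\SV| + |\SE|)$, so once I exhibit some $v \in R$ with $y^v_{\req,j} > 0$, the path $P^{u,v}_{\req,i,j}$ is read off the BFS tree and the polynomial-time claim follows. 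The degenerate case $y^u_{\req,j} > 0$ is handled by taking $v = u$ and the empty path.

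To exhibit such a $v$, I would sum the flow-conservation constraint~(\ref{alg:VNEP-old:edge-embedding}) over all $a \in R$. Flows on substrate edges with both endpoints in $R$ appear once with a $+$ and once with a $-$ sign and therefore cancel; by maximality of $R$ no positive-flow edge leaves $R$, so the only surviving boundary contribution is the (nonnegative) incoming flow from $\SV \setminus R$, entering the sum with a minus sign. Consequently
\[
\sum_{a \in R} \bigl( y^a_{\req,i} - y^a_{\req,j} \bigr) \;\leq\; 0,
\]
which rearranges to
\[
\sum_{a \in R} y^a_{\req,j} \;\geq\; \sum_{a \in R} y^a_{\req,i} \;\geq\; y^u_{\req,i} \;>\; 0.
\]
Hence some $v \in R$ satisfies $y^v_{\req,j} > 0$, and Constraint~(\ref{alg:VNEP-old:node-embedding-non-mappable}) automatically forces $v \in \VVloc[j]$.

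The main subtlety I expect to handle carefully is the boundary accounting in the telescoping step: one must verify that the cut $(R, \SV \setminus R)$ contributes only nonpositive terms, which rests on the closure of $R$ under positive-flow out-edges. Everything else (executing the BFS, decoding the path from the tree, and the trivial endpoint case) is routine.
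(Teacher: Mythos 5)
Your proposal is correct and follows essentially the same route as the paper: both arguments use the flow-conservation Constraint~(\ref{alg:VNEP-old:edge-embedding}) to conclude that the flow emitted at $u$ must reach some $v$ with $y^v_{\req,j} > 0$, and both extract the path by a breadth-first search restricted to edges with $z^{u',v'}_{\req,i,j} > 0$. Your explicit summation over the reachable set $R$ merely makes rigorous the step the paper states informally as ``the flow emitted at node $u$ must eventually reach a node $v \in \VVloc[j]$.''
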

\begin{proof}
First, note that $\sum_{u \in \VVloc} y^u_{\req,i} = \sum_{v \in \VVloc[j]} y^v_{\req,i}$ holds by Constraint~\ref{alg:VNEP-old:node-embedding} and hence both virtual nodes $i$ and $j$ must be mapped to an equal extent on suitable substrate nodes. Fix any substrate node $u \in \VVloc$ with $y^u_{\req,i} > 0$. 
 If $j$ is also partially mapped on $u$, i.e. if $y^u_{\req,j} > 0$ holds, then the result follows directly, as $u$ connects to $u$ using (and allowing) the empty path $P^{u,u}_{\req,i,j} = \langle \rangle$.
 If, on the other hand, $y^u_{\req,j} = 0$ holds, then Constraint~\ref{alg:VNEP-old:edge-embedding} induces a flow of value $y^u_{\req,i}$ at $u$ in the commodity $z^{\cdot,\cdot}_{\req, i,j}$. As the right hand side of Constraint~\ref{alg:VNEP-old:edge-embedding} may only attain negative  values at nodes $v \in \VVloc[j]$ for which $y^v_{\req,j} > 0$ holds, the flow emitted at node $u$ must eventually reach a node $v \in \VVloc[j]$ with $y^v_{\req,j} > 0$ and hence the result follows.
 The path $P^{u,v}_{\req,i,j}$ can be constructed in polynomial-time by a simple breadth-first search that only considers edges $(u',v') \in \SE$ for which $z^{u',v'}_{\req,i,j} > 0$ holds.
 \end{proof}

\subsection{Decomposing Solutions for Tree Requests}
Given Lemma~\ref{lem:local-connectivity-property}, we now present Algorithm~\ref{alg:decompositionAlgorithm-MCF-Tree} to decompose solutions to the LP Formulation~\ref{alg:VNEP-IP-old} into convex combinations of valid mappings $\PotEmbeddings = \{ (\prob, \mapping) | \mapping \in \spaceSolReq, \prob > 0\}$~(cf.~Section~\ref{sub:outline-of-rr-for-the-vnep}), \emph{if} the request's underlying undirected graph is a \emph{tree}. Recall that in the LP formulation the binary variables are relaxed to take any value in the interval $[0,1]$. 

Given a request $\req \in \requests$, the algorithm processes all virtual edges according to an arbitrary acyclic representation $\VGbfs = (\VV,\VEbfs,\VVroot)$ of the undirected interpretation of $\VG$ being rooted at $\VVroot \in \VV$. Concretely, the edge set $\VEbfs$ is obtained from $\VE$ by reorienting (some of the) edges, such that any node $i \in \VV$ can be reached from $\VVroot$.  Considering tree requests for now, $\VGbfs$ is an arborescence and can be computed by a simple graph search of the underlying undirected graph starting at an arbitrary root node $\VVroot \in \VV$. We denote by $\VEDiff = \VE \setminus \VEbfs$ the edges whose orientations were reversed in the process of computing~$\VGbfs$.

\begin{figure}[tb]

\scalebox{0.85}{
\begin{minipage}{1.16\columnwidth}

\removelatexerror

\begin{algorithm*}[H]

\SetKwInOut{Input}{Input}\SetKwInOut{Output}{Output}
\SetKwFunction{ProcessPath}{ProcessPath}{}{}
\SetKwFunction{reverse}{reverse}{}{}
\SetKwFunction{LP}{LP}
\SetKwFunction{LP}{LP}

\newcommand{\SET}{\textbf{set~}}
\newcommand{\ADD}{\textbf{add~}}
\newcommand{\DEFINE}{\textbf{define~}}
\newcommand{\AND}{\textbf{and~}}
\newcommand{\LET}{\textbf{let~}}
\newcommand{\WITH}{\textbf{with~}}
\newcommand{\COMPUTE}{\textbf{compute~}}
\newcommand{\CHOOSE}{\textbf{choose~}}
\newcommand{\DECOMPOSE}{\textbf{decompose~}}
\newcommand{\FORALL}{\textbf{for all~}}
\newcommand{\OBTAIN}{\textbf{obtain~}}
\newcommand{\WITHPROBABILITY}{\textbf{with probability~}}

\Input{Tree request $\req \in \requests$, solution~$(x_{\req},\vec{y}_{\req},\vec{z}_{\req},\vec{a}_{\req})$ to LP Formulation~\ref{alg:VNEP-IP-old}, acyclic reorientation $\VGbfs = (\VV,\VEbfs, \VVroot)$}
\Output{Convex combination~$\PotEmbeddings = \{(\prob,\mapping)\}_k$}
\BlankLine

	\SET $\PotEmbeddings  \gets \emptyset$ \AND $k \gets 1$\\
	\While{$x_{\req} > 0$ }
	{
		
		\SET $\mapping = (\mapV,\mapE)~\gets (\emptyset,\emptyset)$ \label{alg:sc-decomposition:init-maps}\\
		
		\SET $\Queue = \{\VVroot \}$\\
		
		\CHOOSE $u \in \VVloc[\VVroot]$ \WITH $y^u_{\req,\VVroot} > 0$ \AND \SET $\mapV(\VVroot)~\gets u$\\
		
		\While{$|\Queue| > 0$}{	\label{alg:decomposition:begin-while-q}
			\CHOOSE $i \in \Queue$ \AND \SET$\Queue \gets \Queue \setminus \{i\}$\\
			\ForEach{$(i,j) \in \VEbfs$}{
				\eIf{$(i,j) \in \VE$}{
					\COMPUTE $\overrightarrow{P}^{u,v}_{\req,i,j}$ connecting $\mapV(i)=u$ to $v \in \VVloc[j]$ \label{alg:decomp:compute-path-original-orientation}\\
					\pushline\pushline~~~\,\nonl according to Lemma~\ref{lem:local-connectivity-property}\\
					\popline\popline
					\SET $\mapV(j) = v$ \AND $\mapE(i.j) = \overrightarrow{P}^{u,v}_{\req,i,j}$\\
				}{
					\LET $\overleftarrow{z}^{v',u'}_{\req,i,j} \triangleq z^{u',v'}_{\req,j,i}$ \FORALL $(u',v') \in \SE$ \label{alg:decomp:reverse-start}\\
					\COMPUTE $\overleftarrow{P}^{v,u}_{\req,i,j}$ connecting $\mapV(i)=v$ to $u \in \VVloc[j]$ \label{alg:decomp:compute-path-reverse-orientation}\\ 
					\pushline\pushline~~~\,\nonl according to Lemma~\ref{lem:local-connectivity-property}\\
					\popline\popline
					\SET $\overrightarrow{P}^{u,v}_{\req,j,i} = \reverse(\overleftarrow{P}^{v,u}_{\req,i,j})$\\
					\SET $\mapV(i) = u$ \AND  $\mapE(j,i) = \overrightarrow{P}^{u,v}_{\req,j,i} $ \label{alg:decomp:reverse-end}\\

				}
				\SET $\mathcal{Q} \gets \mathcal{Q} \cup \{j\}$\\
			}
		}
		\SET $\mathcal{V}_k \gets \left( \hspace{-6pt}  \begin{array}{rcl}
		\{x_\req\} \hspace{-4pt} & \hspace{-4pt}  \cup \hspace{-4pt}  & \hspace{-4pt}  \{y^{\mapV(i)}_{\req,i} | i \in \VV\}  \\
		&			\hspace{-4pt}  \cup  \hspace{-4pt}  & \hspace{-4pt}  \{z^{u,v}_{\req,i,j} | (i,j) \in \VE, (u,v) \in \mapE(i,j)\}
		\end{array} \hspace{-6pt}  \right)$ \label{alg:decomposition:compute-Vk}\\
		\SET $\prob \gets \min \mathcal{V}_k$ \label{alg:decomposition:computing-prob} \\
		\SET $v \gets v - \prob$ \FORALL $v \in \mathcal{V}_k$ \label{alg:decomposition:reduce-flow-variables}\\
		\SET $a^{x,y}_{\req} \gets a^{x,y}_{\req} - \prob \cdot A(\mapping,x,y)$ \FORALL $(x,y) \in \SR$ \label{alg:decomposition:adapt-variables}\\	
		\ADD $D^k_{\req} = (f^k_{\req},m^k_{\req})$ to $\PotEmbeddings$ \AND \SET $k \gets k + 1$\\
	}

\KwRet{$\PotEmbeddings$}
\caption{Decompositioning MCF solutions for Tree Requests}
\label{alg:decompositionAlgorithm-MCF-Tree}
\end{algorithm*}
\end{minipage}}
\end{figure}

The algorithm extracts mappings $\mapping$ of value $\prob$ iteratively, as long as $x_{\req} > 0$ holds. Initially, in the $k$-th iteration, none of the virtual nodes and edges are mapped. As $x_\req > 0$ holds, there must exist a node $u \in \VVloc[\VVroot]$ with $y^{u}_{\req,\VVroot} > 0$ by Constraint~\ref{alg:VNEP-old:node-embedding}  and the algorithm accordingly sets $\mapV(\VVroot) = u$.  Given this initial fixing, the algorithm iteratively extracts nodes from the queue $\mathcal{Q}$ which have been already mapped and considers all outgoing  virtual edges $(i,j) \in \VEbfs$. If an outgoing edge $(i,j)$ is contained in $\VE$, Lemma~\ref{lem:local-connectivity-property} can be readily applied to obtain a joint mapping of the edge $(i,j)$ and its head $j$.
If the edge's orientation was reversed, i.e.~if $(i,j) \in \VEDiff$ holds,  Lemma~\ref{lem:local-connectivity-property} is applied \emph{while reversing} the flow's direction (Lines \mbox{\ref{alg:decomp:reverse-start}~-~\ref{alg:decomp:reverse-end}}).

First, note that by the repeated application of Lemma~\ref{lem:local-connectivity-property}, the mapping of virtual nodes and edges is valid. As $\VGbfs$ is an arborescence, each edge and each node of $\VGbfs$ will eventually be mapped and hence $\mapping$ is a valid mapping.
The mapping value $\prob$ is computed as the minimum of the mapping variables $\mathcal{V}_k$ used for constructing $\mapping$. Reducing the values of the mapping variables together with the allocation variables $\vec{a}_{\req}$ (Lines~\ref{alg:decomposition:reduce-flow-variables} and \ref{alg:decomposition:adapt-variables}), the Constraints~\ref{alg:VNEP-old:node-embedding}~-~\ref{alg:VNEP-old:load-edge} continue to hold.

As the decomposition process continues as long as $x_{\req} > 0$ holds and in the $k$-th iteration at least one variable's value is set to $0$, the algorithm terminates with a complete decomposition for which $\sum_{k} \prob = x_{\req}$ holds. Furthermore, the algorithm has polynomial runtime, as in each iteration at least one variable is set to 0 and the number of variables for request $\req$ is bounded by $\mathcal{O}(|\VG| \cdot |\SG|)$. Hence, we obtain the following:

\begin{lemma}
\label{lem:decomposability-mcf-trees}
Given a virtual network request $\req \in \requests$, whose underlying undirected graph is a tree, Algorithm~\ref{alg:decompositionAlgorithm-MCF-Tree} decomposes a solution $(x_\req, \vec{y}_{\req}, \vec{z}_{\req}, \vec{a}_{\req})$ to LP Formulation~\ref{alg:VNEP-IP-old} into valid a convex combination of valid mappings \mbox{$\PotEmbeddings = \{(\mapping,\prob) | \mapping \in \spaceSolReq, \prob > 0\}_k$}, such that:
\begin{itemize}
\item The decomposition is complete, i.e. $x_{\req} = \sum_k \prob$ holds.
\item The decomposition's allocations are bounded by $\vec{a}_{\req}$, i.e. \newline $a^{x,y}_{\req} \geq  \sum_{k} \prob \cdot A(\mapping,x,y)$ holds for $(x,y) \in \SR$.
\end{itemize}
\end{lemma}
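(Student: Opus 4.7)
The plan is to verify the four properties that the text already hints at: (i) every extracted mapping $m_{\req}^k$ is a valid mapping, (ii) the LP constraints~\ref{alg:VNEP-old:node-embedding}--\ref{alg:VNEP-old:load-edge} are preserved as invariants under the updates in Lines~\ref{alg:decomposition:reduce-flow-variables}--\ref{alg:decomposition:adapt-variables}, (iii) the procedure terminates in polynomial time, and (iv) the completeness and allocation bound both follow from (ii) and the termination condition $x_{\req}=0$.

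For (i), I would argue by induction on the BFS order on $\VGbfs$. The root $\VVroot$ is mapped to some $u \in \VVloc[\VVroot]$ with $y^u_{\req,\VVroot}>0$, which exists because $x_\req>0$ and Constraint~\ref{alg:VNEP-old:node-embedding} forces $\sum_u y^u_{\req,\VVroot}=x_\req$. Suppose now that the head $i$ of an edge $(i,j)\in\VEbfs$ has already been mapped to some $u$ with $y^u_{\req,i}>0$. If $(i,j)\in\VE$, Lemma~\ref{lem:local-connectivity-property} yields an $(u,v)$-path through edges carrying positive $z^{\cdot,\cdot}_{\req,i,j}$-flow with $v\in\VVloc[j]$, $y^v_{\req,j}>0$; Constraint~\ref{alg:VNEP-old:edge-embedding-non-mappable} ensures that this path uses only allowed edges. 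If instead $(i,j)\in\VEDiff$, the variables $\overleftarrow{z}^{v',u'}_{\req,i,j}\triangleq z^{u',v'}_{\req,j,i}$ constitute a valid flow from any positively mapped node of $i$ to the positively mapped nodes of $j$ (the flow conservation Constraint~\ref{alg:VNEP-old:edge-embedding} written for $(j,i)$ is simply read in reverse), so Lemma~\ref{lem:local-connectivity-property} applies again and the reversed path is the sought mapping of $(j,i)$. Since $\VGbfs$ is an arborescence on a tree, every virtual node/edge is visited exactly once, so $m_{\req}^k$ is fully defined and valid.

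For (ii), let $\mathcal{V}_k$ be the set of variables used in constructing $m_{\req}^k$; by definition of $\prob$ all these variables are at least $\prob$. Decreasing each of them by $\prob$ preserves non-negativity, and one checks each constraint mechanically: Constraint~\ref{alg:VNEP-old:node-embedding} stays tight because both $x_\req$ and a single $y^{\mapV(i)}_{\req,i}$ decrease by $\prob$; Constraint~\ref{alg:VNEP-old:edge-embedding} balances because along each mapped path the incoming and outgoing flow of each internal substrate node changes by $\prob$ and the endpoint-adjusted balances match the corresponding changes in $y^{\mapV(i)}_{\req,i}$ and $y^{\mapV(j)}_{\req,j}$; Constraints~\ref{alg:VNEP-old:load-node}--\ref{alg:VNEP-old:load-edge} remain satisfied because the explicit allocation update in Line~\ref{alg:decomposition:adapt-variables} decrements $a^{x,y}_{\req}$ by exactly $\prob\cdot A(\mapping,x,y)$; and Constraint~\ref{alg:VNEP-old:capacities} is preserved because allocations only decrease.

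Termination is immediate: by the choice of $\prob=\min\mathcal{V}_k$ in Line~\ref{alg:decomposition:computing-prob}, at least one variable among $\{x_\req\}\cup\{y^{\cdot}_{\req,\cdot}\}\cup\{z^{\cdot,\cdot}_{\req,\cdot,\cdot}\}$ is set to zero per iteration, and this variable is never revisited because once a node- or edge-variable attains 0 the corresponding path cannot be selected by Lemma~\ref{lem:local-connectivity-property}. Since the number of such variables is $\mathcal{O}(|\VV|\cdot|\SV| + |\VE|\cdot|\SE|)$, the loop executes a polynomial number of times, and each iteration runs a polynomial-time BFS. Finally, completeness follows because the outer loop terminates only when $x_\req=0$, and since $x_\req$ decreases by exactly $\prob$ per iteration we get $\sum_k \prob = x_\req$; the allocation bound $a^{x,y}_\req \geq \sum_k \prob\cdot A(m_\req^k,x,y)$ follows from the invariant $a^{x,y}_\req \geq 0$ maintained throughout Line~\ref{alg:decomposition:adapt-variables}.

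The main subtlety, and the only step that requires care, is the reversal argument in Lines~\ref{alg:decomp:reverse-start}--\ref{alg:decomp:reverse-end}: one must verify that the reversed flow $\overleftarrow{z}$ still satisfies the hypotheses of Lemma~\ref{lem:local-connectivity-property} with the roles of source and sink swapped, which is straightforward because Constraint~\ref{alg:VNEP-old:edge-embedding} is symmetric in $i$ and $j$ up to sign. Everything else is essentially bookkeeping on the flow-decomposition variables, made possible by the fact that $\VGbfs$ has no cycles so each virtual entity is fixed exactly once per iteration.
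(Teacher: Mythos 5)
Your proposal is correct and follows essentially the same route as the paper's own argument: repeated application of Lemma~\ref{lem:local-connectivity-property} along the arborescence $\VGbfs$ (with flow reversal for edges in $\VEDiff$), extraction of the minimum variable value $\prob$, and the observation that Constraints~\ref{alg:VNEP-old:node-embedding}--\ref{alg:VNEP-old:load-edge} are preserved after each reduction so that at least one variable is zeroed per iteration, giving completeness, the allocation bound, and polynomial runtime. Your treatment of the constraint-preservation invariant and the reversed-flow step is in fact somewhat more explicit than the paper's.
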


\subsection{Limitations of the Classic MCF Formulation}

Above it was shown that LP solutions to the classic MCF formulation can be decomposed into convex combinations of valid mappings \emph{if} the underlying graph is a tree. This does not hold anymore when considering cyclic virtual networks:

\begin{figure}[t!]
\centering
\includegraphics[width=1.0\columnwidth]{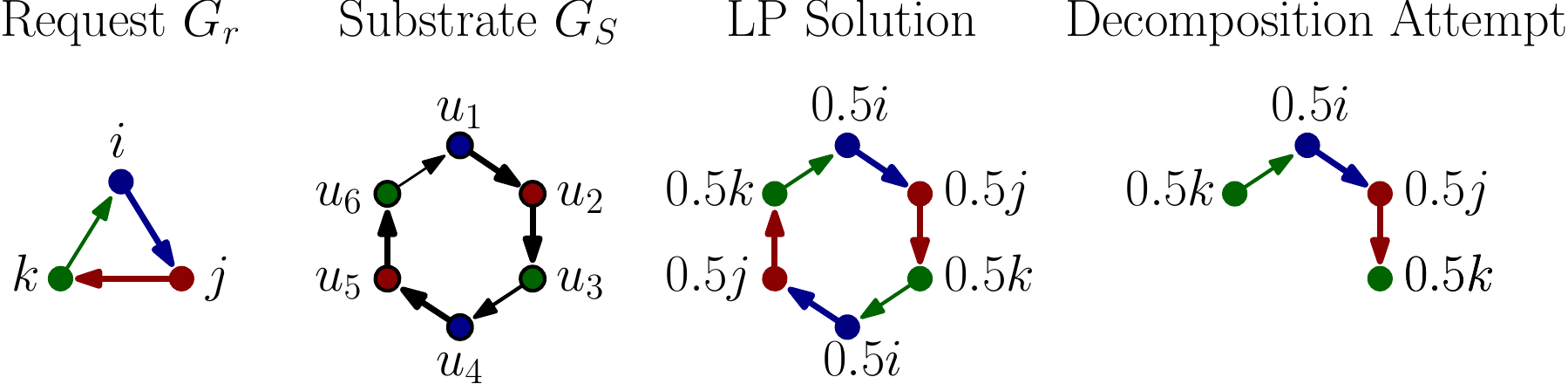}
\caption{Example showing that solutions to the LP Formulation~\ref{alg:VNEP-IP-old} can in general not be decomposed into convex combinations of valid mappings. Request $\req$ is a simple cyclic graph which shall be mapped on the substrate graph $\SG$. We assume following node mapping restrictions $\VVloc[i] = \{u_1,u_4\}$, $\VVloc[j] = \{u_2,u_5\}$, $\VVloc[k] = \{u_3,u_6\}$.
The LP solution with $x_\req=1$ is depicted as follows. Substrate nodes are annotated with the mapping of virtual nodes. Hence, $0.5i$ at node $u_1$ indicates $y^{u_1}_{r,i} = 1/2$, i.e. that virtual node $i$ is mapped with $0.5$ on substrate node $u_1$. Substrate edges are colored according to the color of virtual links mapped onto it. Virtual links are all mapped using flow values $1/2$. Accordingly, for example $z^{u_1,u_2}_{r,i,j} = 1/2$ holds. }
\label{fig:non-decomp}
\end{figure}

\begin{theorem}
Solutions to the standard LP Formulation~\ref{alg:VNEP-IP-old} can in general not be decomposed into convex combinations of valid mappings if the virtual networks contain cycles.
\label{lem:non-decomposability}
\end{theorem}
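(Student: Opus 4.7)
The plan is to establish the theorem by a concrete counterexample, namely the instance sketched in Figure~\ref{fig:non-decomp}. The request $\req$ is a directed triangle on virtual nodes $i,j,k$ with edges $(i,j),(j,k),(k,i)$, each of the three virtual nodes has exactly two admissible substrate locations ($\VVloc[i]=\{u_1,u_4\}$, $\VVloc[j]=\{u_2,u_5\}$, $\VVloc[k]=\{u_3,u_6\}$), and we exhibit a symmetric fractional solution of value $x_\req=1$ in which every positive $y$- and $z$-variable equals $1/2$.

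First I would verify that this assignment is a feasible solution of LP Formulation~\ref{alg:VNEP-IP-old}. Constraint~\ref{alg:VNEP-old:node-embedding} holds because each admissible location of a virtual node receives weight $1/2$, summing to $x_\req=1$; \ref{alg:VNEP-old:node-embedding-non-mappable} and \ref{alg:VNEP-old:edge-embedding-non-mappable} are satisfied by construction; flow conservation~\ref{alg:VNEP-old:edge-embedding} is checked substrate-node by substrate-node for each virtual edge's commodity, with a half-unit of flow emanating from and absorbed at exactly the substrate nodes onto which the edge's endpoints are fractionally mapped. Capacities \ref{alg:VNEP-old:capacities} are chosen slack, isolating the decomposition obstruction from any resource issue.

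The crux is to show that no convex combination $\PotEmbeddings = \{(\prob,\mapping)\}_k$ of valid mappings can reproduce these marginals with $\sum_k \prob = 1$. Assume such a decomposition exists. Since $y^{u_1}_{\req,i}=1/2>0$, the sub-collection of mappings in which $\mapV(i)=u_1$ has aggregate weight $1/2$. For any such mapping the edge path $\mapE(i,j)$ starts at $u_1$ and ends in $\VVloc[j]$; tracing the $z^{\cdot,\cdot}_{\req,i,j}$ flow of Figure~\ref{fig:non-decomp} emanating from $u_1$ shows that it is absorbed at a single uniquely determined node in $\{u_2,u_5\}$, which then forces $\mapV(j)$ to that node. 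Applying the same argument once to the commodity of $(j,k)$ and once to the closing commodity of $(k,i)$ produces a forced chain of implications that terminates at $u_4$ rather than at $u_1$, contradicting the fact that $\mapV$ is a function and $\mapV(i)=u_1$ was our starting assumption. Hence no mapping with $\mapV(i)=u_1$ can appear in the decomposition with positive weight, violating $y^{u_1}_{\req,i}=1/2$.

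The main obstacle is making the forced chain in the second step rigorous: one has to read off the exact routing of the three commodities from Figure~\ref{fig:non-decomp} and observe that each individual virtual edge \emph{is} a feasible half-unit flow, yet the three routings are arranged so that no closed substrate walk respects all of them simultaneously. The obstruction is intrinsically combinatorial and cyclic — it arises precisely because the tree-order argument underlying Lemma~\ref{lem:decomposability-mcf-trees} no longer applies once the request contains a cycle — and so the construction also suggests how to generalize the counterexample to any request whose undirected interpretation is non-acyclic, by embedding a ``two-sided'' mapping restriction around one of its cycles.
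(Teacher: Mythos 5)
Your proposal is correct and follows essentially the same route as the paper's proof: the identical counterexample of Figure~\ref{fig:non-decomp}, feasibility of the half-integral LP solution, and a forced-chain argument around the cycle showing that any mapping with $\mapV(i)=u_1$ leads to an inconsistency (the paper phrases the contradiction as $k$ being forced onto both $u_3$ and $u_6$, while you phrase it as the chain closing at $u_4$ instead of $u_1$ --- the same obstruction). No substantive difference.
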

\begin{proof}
In Figure~\ref{fig:non-decomp} we visually depict an example of a solution to the LP Formulation~\ref{alg:VNEP-IP-old} from which \emph{not a single} valid mapping can be extracted. 
The validity of the depicted solution follows from the fact that the virtual node mappings sum to $1$ and each virtual node connects to its neighboring node with half a unit of flow.
Assume for the sake of contradiction that the depicted solution can be decomposed. As virtual node $i \in \VV$ is mapped onto substrate node $u_1 \in \SV$, and $u_2 \in \SV$ is the only neighboring node with respect to variables  $z^{\cdot,\cdot}_{\req, i,j}$ that hosts $j \in \VV$, there must exist a mapping $(\mapV, \mapE)$ with $\mapV(i)=u_1$ and $\mapV(j)=u_2$. Similarly, $\mapV(k)=u_3$ must hold. However, for $\mapV(i) = u_1$, the virtual node $k$ must be mapped to $u_6$, as otherwise the embedding of $(k,i)$ cannot lead to substrate node $u_1$.
Hence the virtual node $k \in \VV$ must be mapped both on $u_6$ and $u_3$. As this is not possible, and the same argument holds when considering the mapping of $i$ onto $u_4$, no valid mapping can be extracted from the LP solution.
\end{proof}

This non-decomposability also induces large integrality gaps. Specifically, when considering edge mapping restrictions the integrality gap of the MCF Formulation~\ref{alg:VNEP-IP-old} is unbounded.
\begin{restatable}{theorem}{integralityGapInfinite}
The integrality gap of the MCF formulation is unbounded. This even holds under infinite substrate capacities.
\label{lem:non-decomposability-integrality-gap}
\end{restatable}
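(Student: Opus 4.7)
The plan is to exhibit a single instance that saturates the LP relaxation but admits no integral mapping, using the gadget already constructed for Theorem~\ref{lem:non-decomposability} (Figure~\ref{fig:non-decomp}). Setting all substrate capacities $\Scap(x,y) = \infty$ and scaling the request's profit $\Vprofit$, I would show $\optLP \geq \Vprofit$ while $\optIP = 0$, from which the integrality gap is seen to be unbounded (and, to avoid a division by zero, one may add an auxiliary trivially-embeddable request of profit $\varepsilon$ so that the ratio $\optLP/\optIP \geq 1 + \Vprofit/\varepsilon \to \infty$).

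For the LP side, I would verify that the fractional assignment depicted in Figure~\ref{fig:non-decomp} is feasible for Formulation~\ref{alg:VNEP-IP-old} with $x_\req = 1$: each virtual node is placed with two half-units on its two permitted substrate locations, satisfying Constraint~\ref{alg:VNEP-old:node-embedding}; the half-unit flows along the two oriented triangles balance at every intermediate substrate node and terminate correctly, satisfying Constraint~\ref{alg:VNEP-old:edge-embedding}; the mapping restrictions $\VVloc$ and $\VEloc$ of the gadget are honored by Constraints~\ref{alg:VNEP-old:node-embedding-non-mappable} and~\ref{alg:VNEP-old:edge-embedding-non-mappable}; and Constraint~\ref{alg:VNEP-old:capacities} is vacuous under infinite capacities. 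Hence $\optLP \geq \Vprofit$ independently of any profit scaling.

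For the IP side, the same combinatorial obstruction that drives Theorem~\ref{lem:non-decomposability} is to be leveraged. Any integral solution with $x_\req = 1$ forces $\mapV(i) \in \{u_1, u_4\}$, $\mapV(j) \in \{u_2, u_5\}$, $\mapV(k) \in \{u_3, u_6\}$; fixing $\mapV(i) = u_1$ forces $\mapV(j) = u_2$ (only admissible path for $(i,j)$), which forces $\mapV(k) = u_3$, but then no admissible substrate path realizes the cycle-closing edge $(k,i)$ back to $u_1$. The symmetric argument rules out $\mapV(i) = u_4$. Thus $\optIP = 0$, and together with the LP bound this establishes an unbounded integrality gap that persists under infinite substrate capacities.

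The main obstacle will be the IP-side enumeration: one must pin down the edge mapping sets $\VEloc[\req]$ precisely enough that the two disjoint substrate triangles supporting the fractional flow become mutually incompatible as soon as all node images are committed integrally. Everything else — LP feasibility, the role of infinite capacities, and the scaling argument that turns the gap into an unbounded one — is straightforward once the gadget is stated rigorously.
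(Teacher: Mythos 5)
Your proposal is correct and follows essentially the same route as the paper: reuse the Figure~\ref{fig:non-decomp} gadget, add edge mapping restrictions $\VEloc[i,j]$, $\VEloc[j,k]$, $\VEloc[k,i]$ confining each virtual edge to the two substrate edges carrying its fractional flow, observe that the depicted LP solution with $x_\req=1$ remains feasible (yielding profit $\Vprofit$) while the forced chain $\mapV(i)=u_1 \Rightarrow \mapV(j)=u_2 \Rightarrow \mapV(k)=u_3$ leaves no admissible path for $(k,i)$, so $\optIP=0$. Your $\varepsilon$-profit auxiliary request to avoid a zero denominator is a minor cosmetic addition the paper omits, not a different argument.
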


\begin{proof}
\vspace{-12pt}
We again consider the example of Figure~\ref{fig:non-decomp}. We introduce the following restrictions for mapping the virtual links: $\VEloc[i,j] = \{(u_1,u_2), (u_4,u_5)\}$, $\VEloc[j,k] = \{(u_2,u_3), (u_5,u_6)\}$, $\VEloc[k,i] = \{(u_3,u_4), (u_6,u_1)\}$.
Note that the LP solution depicted in Figure~\ref{fig:non-decomp} is still feasible and hence the LP will attain an objective of $\Vprofit$.

On the other hand, there does not exist a valid mapping of request $\req$: assume $i$ to be mapped on $u_1$, then $j$ must be mapped on $u_2$ and $k$ must be mapped on $u_3$ due to the node \emph{and} edge mapping. However, the edge mapping restrictions for $(k,i)$ do not allow the establishment of a path from $u_3$ to $u_1$. By the same argument, the mapping of $i$ on $u_4$ is not feasible.
Accordingly, as the optimal solution achieves a profit of $0$ while the LP solution yields a profit of $\Vprofit$, the integrality gap of the MCF LP Formulation~\ref{alg:VNEP-IP-old} is unbounded.
\end{proof}

As the above proof strongly relied on the fact that edge mapping restrictions are employed, we also derive the following result when only node mapping restrictions are considered.

\begin{restatable}{theorem}{integralityGapOmega}
The integrality gap of the MCF formulation lies in $\Omega(|\SV|)$, when only considering node mapping restrictions.
\label{lem:non-decomposability-integrality-gap-only-node}
\end{restatable}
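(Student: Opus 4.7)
The plan is to exhibit an explicit family of instances parameterized by $m$, with substrate of size $n = 3m$, in which the LP value exceeds the integer optimum by a factor of $m$. Since edge mapping restrictions are now disallowed, the separation must come purely from capacity constraints: I will design rotationally symmetric triangle requests that the LP can simultaneously embed fractionally by spreading each request across all rotations, whereas every integer embedding necessarily consumes the entire substrate capacity and therefore blocks all others.

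Concretely, I would take $\SG$ to be a directed cycle $u_1 \to u_2 \to \cdots \to u_n \to u_1$ of length $n = 3m$ with every edge of unit capacity, and introduce $m$ identical triangle requests $r_c$ of profit~$1$ and unit edge demand (node demands set to $0$) on virtual nodes $i_c, j_c, k_c$ with virtual edges $(i_c,j_c), (j_c,k_c), (k_c,i_c)$. The only restrictions are node mappings: $\VVloc[i_c], \VVloc[j_c], \VVloc[k_c]$ equal the substrate nodes whose indices are $\equiv 1, 2, 0 \pmod 3$, respectively. For the LP bound I spread each request uniformly over its $m$ rotations: set $y^{u_{3t-2}}_{r_c,i_c} = y^{u_{3t-1}}_{r_c,j_c} = y^{u_{3t}}_{r_c,k_c} = 1/m$ for $t = 1, \ldots, m$, and route the three virtual edges along the immediate forward arcs $u_{3t-2}\!\to\!u_{3t-1}$, $u_{3t-1}\!\to\!u_{3t}$, and $u_{3t}\!\to\!u_{3t+1}$ with flow $1/m$ each (indices mod $n$). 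The subtlety is that the $(k_c,i_c)$-commodity is intentionally \emph{twisted}: it links $k_c$ at rotation $t$ to $i_c$ at rotation $t{+}1$, which is exactly the non-decomposable pattern exposed in Theorem~\ref{lem:non-decomposability} yet remains LP-feasible. A residue-class check of flow conservation confirms the solution, and each substrate edge then carries cumulative allocation $m \cdot (1/m) = 1$, matching capacity, so $\fracSol \geq m$.

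For the IP upper bound I would argue by a winding-number count. In a directed cycle the directed $u_a$-to-$u_b$ path has length $(b-a) \bmod n$, so the three paths realizing any integer mapping of a triangle request must have lengths summing to a positive multiple of $n$; a multiple $\geq 2$ forces some substrate edge to be traversed twice and violates unit capacity. Hence every feasible integer mapping uses each substrate edge exactly once, saturates all capacities, and leaves no room for a second request, so $\intSol \leq 1$ and the integrality gap is at least $m = n/3 \in \Omega(|\SV|)$. The main obstacle I anticipate is verifying feasibility of the twisted LP solution: flow conservation has to be checked simultaneously at all $n$ substrate nodes across all three commodities, and it is precisely this non-decomposability of the MCF LP that allows the solution to circumvent the per-edge capacities while remaining valid, producing the claimed linear separation from the integer optimum.
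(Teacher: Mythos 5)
Your proof is correct and follows essentially the same construction as the paper: a directed cycle substrate with unit edge capacities, node-mapping restrictions by residue class, a rotationally symmetric (twisted) fractional solution that spreads load to $1/|\SV|$-scale per edge, and the observation that any integral embedding must wind around the entire cycle and saturate every edge, so only one request fits. The paper uses a two-node digon request with residues mod~$2$ and $n/2$ fractional copies where you use a triangle with residues mod~$3$ and $m=n/3$ requests, but this is only a cosmetic difference.
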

\begin{proof}
\vspace{-12pt}
Consider the following instance. The substrate is a cycle of an even number of $n$ nodes $u_i$, with $1 \leq i \leq n$ and edges $\{(u_1,u_2), (u_2,u_3), \dots, (u_{n-1},u_n), (u_n,u_1)\}$. We consider unit edge capacities, i.e. we set $\Scap(e) = 1$ for $e \in \SE$. 
Consider now a request $\req$ with $\VV = \{i,j\}$ and edges $\VE \{(i,j), (j,i)\}$ and unit bandwidth demands $\Vcap(i,j) = \Vcap(j,i) = 1$. Assume that $i$ can only be mapped on substrate nodes having an uneven index and that $j$ can only be mapped on substrate nodes of even index.
Clearly, any valid mapping of this request will use \emph{all} edge resources. Consider now the following MCF solution with $x_{\req} = 1$: nodes $i$ and $j$ are, together with the respective edges, mapped in an alternating fashion: $y^{u}_{\req,i} = y^{v}_{\req,j} = 2/n$ holds for all $u \in \VVloc[i] = \{u_1,u_3,\dots\}$ and all $v \in \VVloc[j] = \{u_2,u_4,\dots\}$, respectively. Similarly, $z^{e}_{\req,i,j} = z^{e'}_{\req,j,i} = 2/n$ is set for all edges $e$ and $e'$ originating at uneven and even nodes, respectively. Hence, the allocation equals $2/n$  on each edge. Hence, $n/2 \in \Theta(|\SV|)$ many copies of this request may be embedded in the MCF solution, while the optimal solution may only embed a single request and the integrality gap therefore lies in $\Omega(|\SV|)$.
\end{proof}

\section{Novel Decomposable LP Formulation}
\label{sec:decomposable-novel-formulation}

In this section, we present a novel LP formulation and its accompanying decomposition algorithm for the class of cactus request graphs, i.e. graphs for which cycles intersect in at most a single node (in its undirected interpretation). Accordingly, these graphs can be uniquely decomposed into cycles and a single forest (cf. Lemma~\ref{lem:observations-bfs-graphs} below). 

Before delving into the details of our novel LP formulation, we discuss our main insight on how to overcome the limitations of the MCF formulation and accordingly how to derive decomposable formulations. To this end, it is instructive, to revisit the non-decomposable example of Figure~\ref{fig:non-decomp} by applying the decomposition Algorithm~\ref{alg:decompositionAlgorithm-MCF-Tree} on the depicted LP solution. Concretely, we consider the acyclic reorientation $\VGbfs = (\VV, \VEbfs,\VVroot)$ with $\VEbfs = \{(i,k), (i,j), (j,k)\}$, such that $i$ is the root, $\VVroot = i$. Assuming that $i$ is initially mapped on node $u_1$, Algorithm~\ref{alg:decompositionAlgorithm-MCF-Tree} will map edges $(i,k)$ and $(i,j)$ first, setting $\mapV(k)=u_6$ and $\mapV(j) = u_2$ However, when the edge $(j,k)$ is processed, $k$ must be mapped on substrate node $u_3 \neq \mapV(k)$ and the algorithm hence fails to produce a valid mapping. Accordingly, to avoid such \emph{diverging} node mappings, our key idea is to decide the mapping location of nodes with more than one incoming edge (with respect to the request's acyclic reorientation $\VGbfs$) \emph{a priori}. 

By considering only cactus request graphs, the above can be implemented rather easily as exactly one node of each cycle has more than one incoming edge: one only needs to ensure \emph{compatibility} of node mappings for this node. To resolve potential conflicts for the mapping of this unique \emph{cycle target}, our formulation employs \emph{multiple} copies of the MCF formulation for the respective cyclic subgraph. Specifically, considering a cycle with virtual \emph{target} node $k$, we instantiate one MCF formulation per substrate node $w \in \VVloc[k]$ onto which $k$ can be mapped. Accordingly, this yields at most $|\SV|$ many copies and for each of these copies $k$ is \emph{fixed} to one specific (substrate) mapping location. Accordingly, as the mapping location of $k$ is fixed to a specific node, valid mappings for the respective cycles can always be extracted from such a MCF copy: the mappings of $k$ cannot possibly diverge.

\subsection{Cactus Request Graph Decomposition and Notation}

Based on the assumed cactus graph nature of request graphs, we consider the following decomposition of request graphs.

\begin{lemma} 
\label{lem:observations-bfs-graphs}
Consider a cactus request graph $\VG$ and an acyclic reorientation $\VGbfs$ of $\VG$.
The graph $\VGbfs$ can be uniquely partitioned into subgraphs $\{ \VGcycle[\req][1], \ldots, \VGcycle[\req][n] \} \sqcup \VGforest$, such that the following holds:
\begin{enumerate}
\item The subgraphs $\{ \VGcycle[\req][1], \ldots, \VGcycle[\req][n] \}$ correspond to the (undirected) cycles of $\VG$ and $\VGforest$ is the \emph{forest} remaining after removing the cyclic subgraphs. We denote the index set of the cycles by $\Cycles = \{C_1, \ldots,C_n\}$.
\item The subgraphs partition the edges of $\VEbfs$: an edge \mbox{$(i,j) \in \VEbfs$} is contained in exactly one of the subgraphs.
\item The edge set  $\VEcycle$  of each cycle $\cycle \in \Cycles$ can itself be partitioned into two branches $\mathcal{B}^{\cycle}_{1}$ and $\mathcal{B}^{\cycle}_{2}$, such that both lead from $\VVcycleSource \in \VVcycle$ to  $\VVcycleTarget \in \VVcycle$.
\end{enumerate}
\end{lemma}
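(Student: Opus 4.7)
The plan is to first establish claims (1) and (2) using the defining property of a cactus and then to deduce claim (3) by exploiting that $\VGbfs$ arises from a graph search rooted at $\VVroot$.

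For claims (1) and (2), I would invoke the characterization of a cactus: every edge belongs to at most one simple cycle, so any two simple cycles of the undirected interpretation of $\VG$ share at most a single vertex. Hence the simple cycles of the undirected graph underlying $\VG$ are uniquely determined; call them $C_1, \dots, C_n$, with pairwise disjoint edge sets $\VEcycleOrig[\req][1], \dots, \VEcycleOrig[\req][n]$. Let $\VEforestOrig$ denote the remaining undirected edges of $\VG$: each of these is necessarily a bridge, since otherwise it would lie on some simple cycle not among $C_1, \dots, C_n$. Consequently $\VGforestOrig = (\VV, \VEforestOrig)$ contains no cycle and is a forest. Because $\VGbfs$ has the very same underlying undirected edge set as $\VG$ (only some orientations are flipped), this partition lifts verbatim to $\VEbfs$ and yields the claimed subgraphs $\VGcycle[\req][1], \dots, \VGcycle[\req][n]$ together with $\VGforest$; uniqueness is inherited from the uniqueness of the cycle collection.

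For claim (3), I would analyze how $\VGbfs$ orients a single cycle $C_k$. Since $\VGbfs$ is obtained by a graph search that makes every node reachable from $\VVroot$, its edges decompose into a spanning arborescence $T$ rooted at $\VVroot$ together with exactly one back edge per simple cycle of the cactus; in particular, $C_k$ contributes one back edge while its remaining $|\VVcycle| - 1$ edges are tree edges of $T$. I would then argue that these tree edges form a \emph{connected} spanning tree of $C_k$: if the undirected $T$-path between two vertices $u,v \in \VVcycle$ ever left $\VVcycle$, concatenating it with a $u$-to-$v$ arc of $C_k$ would yield a simple cycle sharing at least the two vertices $u,v$ with $C_k$, contradicting the cactus property. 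A spanning tree of a simple cycle is necessarily a path, so the tree edges of $C_k$ form a path that, rooted at the apex $a$ of $C_k$ (the unique node of $C_k$ closest to $\VVroot$ in $T$), unfolds into two directed sub-paths emanating from $a$ to the two endpoints $u,v$ of the back edge. Because the back edge must be oriented acyclically in $\VGbfs$, it is directed either $u \to v$ or $v \to u$; either way $C_k$ consists of two directed paths with common source $\VVcycleSource = a$ and common sink $\VVcycleTarget \in \{u,v\}$, which serve as the two branches $\mathcal{B}^{C_k}_1$ and $\mathcal{B}^{C_k}_2$.

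The main obstacle I anticipate is precisely the connectivity step: a priori the tree edges of $C_k$ might form an arbitrary sub-forest of $\VVcycle$ rather than a spanning path of $C_k$. I would resolve this via the cactus property as sketched above, since any detour of a $T$-path outside $\VVcycle$ creates a forbidden second cycle sharing two vertices with $C_k$. Once this connectivity is secured, the V-shape of $T$ restricted to $C_k$ together with the acyclicity of $\VGbfs$ force the orientation of the back edge and hence the two-branch decomposition, completing the argument.
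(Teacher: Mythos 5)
Your proof is correct, but it takes a genuinely different route from the paper's. The paper argues entirely inside the directed graph $\VGbfs$: it first shows that every node has in-degree at most two (three incoming edges would force three root-to-node paths and hence two cycles sharing two nodes, violating the cactus property), then recovers each cycle by a backward search from each in-degree-two node to its unique common ancestor, peeling cycles off one at a time until only the forest remains --- statements (1)--(3) all fall out of this single sweep. You instead obtain (1) and (2) purely from the undirected structure (the simple cycles of a cactus are edge-disjoint and uniquely determined, the leftover edges are bridges, and the partition is oblivious to the reorientation), and then prove (3) by splitting $\VEbfs$ into a spanning arborescence $T$ plus one non-tree edge per cycle and observing that the tree path closing each cycle runs up to the least common ancestor of the non-tree edge's endpoints and back down, which yields the two branches and identifies the cycle's source (the apex) and target (the in-degree-two endpoint of the non-tree edge), consistent with the paper. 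Your route makes (1) and (2) essentially immediate and orientation-independent; its cost is two facts you assert rather than prove: that an acyclic reorientation with all nodes reachable from $\VVroot$ contains a spanning arborescence (pick one incoming edge per non-root node; acyclicity forces every backward walk to terminate at the root), and that each cycle contains exactly one non-tree edge (this follows by counting, since each cycle needs at least one non-tree edge and there are only $n$ in total over edge-disjoint cycles; alternatively your connectivity argument delivers it). Two small polish points: the closed walk obtained by concatenating a $T$-excursion with a cycle arc must be restricted to a single maximal excursion, internally disjoint from the cycle, to produce a genuine simple cycle; and the connectivity worry you flag is actually moot once ``exactly one non-tree edge per cycle'' is established, because a cycle minus one edge is automatically a spanning path.
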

\begin{proof}
We prove that the lemma holds independently of the chosen acyclic reorientation. Accordingly, let $\VGbfs$ be an arbitrary acyclic reorientation of $\VG$. We first show that $|\delta^-(i)| \leq 2$ holds for each virtual node $i \in \VV$ with respect to the edge set $\VEbfs$, i.e. any virtual node has at most $2$ incoming edges in $\VGbfs$. If $|\delta^-(i)| > 2$ held, then by the definition of the acyclic reorientation there must exist at least $3$ paths $P_1$, $P_2$, $P_3$ from the root $\VVroot$ to $i$. Let $p_{1,2}, p_{2,3} \in \VV$ be the last common nodes lying on both $P_1$ and $P_2$ and $P_2$ and $P_3$, respectively. Clearly, from $p_{1,2}$, there exist two (otherwise) node-disjoint paths to $i$ and from $p_{2,3}$ there exist two (otherwise) node-disjoint paths towards $i$. Accordingly, there exist at least two cycles intersecting either in $i$ and $p_{1,2}$ or $i$ and $p_{2,3}$, which contradicts the assumption on $\VG$ being a cactus graph. Accordingly, $|\delta^-(i)| \leq 2$ holds for all virtual nodes $i \in \VV$ according to $\VEbfs$.

Now, consider any node $i \in \VV$ with $|\delta^-(i)| \leq 2$. By performing a graph-search in the opposite direction of edges in $\VEbfs$, a unique common ancestor node $i'$ can be determined, such that there exist exactly two paths $\mathcal{B}_1$ and $\mathcal{B}_2$ (branches) from $i'$ to $i$. The union of these branches represents a single `cycle' (cf. Statement~3). Removing the identified cycle from the graph, the cactus graph property still holds, as removing edges can never refute it. Accordingly, `cycles' in the acyclic reorientation $\VGcycle[\req][1], \ldots, \VGcycle[\req][n]$ can be uniquely identified (decomposed) and after repeated removal only the forest $\VGforest$ remains. Hence, the first statement of the lemma follows. Lastly, the second statement of the lemma holds trivially as each edge is either contained in any of the cycles or is part of the remaining forest.
\end{proof}

Besides the above introduced notation, we denote by $\VGcycleOrig$ and $\VGforestOrig$ the subgraphs that agree with $\VG$ on the edge orientations and use $\SVTypesCycle=\VVloc[\VVcycleTarget]$ to denote the substrate nodes on which $\VVcycleTarget$ can be mapped.

\subsection{Novel LP Formulation for Cactus Requests}

Our novel Formulation~\ref{IP:novel} uses the a priori partition of $\VGbfs$ into cycles $\VGcycle$ and the forest $\VGforest$ to construct MCF formulations for the respective subgraphs: for the subgraph $\VGforestOrig$ a single copy is used (cf. Constraint~\ref{LP:novel:forest}) while for the cyclic subgraphs a single MCF formulation is employed \emph{per} potential target location contained in the set $\SVTypesCycle$ (cf. Constraint~\ref{LP:novel:cycles}).
We index the variables of these sub-LPs by employing square brackets.

To bind together these (at first) independent MCF formulations, we reuse the variables  $\vec{x},\vec{y}$, and $\vec{a}$ introduced already for the MCF formulation. We refer to these variables, which are defined outside of the sub-LP formulations, as \emph{global variables} and do not index these. As we only consider the LP formulation, all variables are continuous.

The different sub-formulations are linked as follows. We employ Constraint~\ref{LP:novel:node-embedding} to enforce the setting of the (global) node mapping variables (cf. Constraint~\ref{alg:VNEP-old:node-embedding} of Formulation~\ref{alg:VNEP-IP-old}). By Constraints~\ref{LP:novel:node-mapping-forest} and \ref{LP:novel:node-mapping-cycles}, the node mappings of the sub-LPs for mapping the subgraphs must agree with the global node mapping variables. With respect to cyclic subgraphs, we note that Constraint~\ref{LP:novel:node-mapping-cycles} allows for distributing the global node mappings to any of the $|\SVTypesCycle|$ formulations: only the sum of the node mapping variables must agree with the global node mapping variable. Constraint~\ref{LP:novel:node-mapping-cycles-restriction} is of crucial importance for the decomposability: considering the sub-LP for cycle $\cycle$ and target node $w \in \SVTypesCycle$, it enforces that the target node $\VVcycleTarget$ of the cycle $\cycle$ \emph{must} be mapped on $w$. Thus, in the sub-LP $[\cycle, w]$ both branches $\mathcal{B}^{\cycle}_1$ and $\mathcal{B}^{\cycle}_2$ of cycle $\cycle$ are \emph{pre-determined} to lead to the node $w$.
Lastly, for computing node allocations the global node mapping variables are used (cf. Constraint~\ref{LP:novel:node-load}) and for computing edge allocations the sub-LP formulations' allocations are considered (cf. Constraint~\ref{LP:novel:edge-load}).

\begin{figure}[b!]

 {
  \LinesNotNumbered
  \renewcommand{\arraystretch}{1.4}
 
 \removelatexerror

  \begin{IPFormulation}{H}

  \SetAlgorithmName{Formulation}{}{{}}

  \newcommand{\spaceIt}{\qquad\quad\quad}
  \newcommand{\miniSpace}{\hspace{1.5pt}}
 
  \popline
\scalebox{0.85}{
\begin{minipage}{1.25\columnwidth}
  \begin{tabular}{FRLQB}
   \multicolumn{4}{C}{\textnormal{max~}  \sum \limits_{\req \in \requests}  \Vprofit x_{\req}~~~~~~~~~~~~~~~~  } & \tagIt{LP:novel:obj}\\

   \multicolumn{3}{L}{   
   \begin{array}{l}
\textnormal{Cons. (\ref{alg:VNEP-old:node-embedding}) - (\ref{alg:VNEP-old:load-edge}) for $\VGforestOrig$ on } \\[-2pt]
   \textnormal{variables }(x_{\req}, \vec{y}_{\req},\vec{z}_{\req}, \vec{a}_{\req})[\forest]
   \end{array}
   }  
   & \forall \req \in \requests    & \tagIt{LP:novel:forest}
   \\
       \multicolumn{3}{L}{ 
       \begin{array}{l}
    \textnormal{Cons. (\ref{alg:VNEP-old:node-embedding}) - (\ref{alg:VNEP-old:load-edge}) for $\VGcycleOrig$ on} \\[-2pt]
      \textnormal{variables }(x_{\req}, \vec{y}_{\req},\vec{z}_{\req}, \vec{a}_{\req})[\cycle,w]
       \end{array}
       }  
    
    &  \forall \req \in \requests, \cycle \in \Cycles,  w \in \SVTypesCycle & \tagIt{LP:novel:cycles}
   \\[10pt]
   
      x_{\req} & = & \sum \limits_{u \in \VVloc} y^u_{\req, i} & \forall \req \in \requests, i \in \VV &  \tagIt{LP:novel:node-embedding} \\[16pt]
   
   	y^u_{\req, i} & = & y^u_{\req,i}[\mathcal{F}] & \forall \req \in \requests, i \in \VVforestOrig, u \in \VVloc &  \tagIt{LP:novel:node-mapping-forest} \\[6pt]
   	
	y^u_{\req,i} & = & \sum_{w \in \VVcycleTarget} y^u_{\req,i}[\cycle,w] & \forall \left[ \hspace{-3pt} \begin{array}{l}
	\req \in \requests,  i \in \VV, u \in \VVloc[i]\hspace{-3pt},\,\,\\[-2pt]
	  \cycle \in \Cycles : i \in \VVcycleOrig
	\end{array}   \hspace{-3pt}  \right] &  \tagIt{LP:novel:node-mapping-cycles} \\
	
	0 & = & y^u_{\req,\VVcycleTarget}[\cycle,w]  & \forall \left[ \begin{array}{l} \hspace{-3pt}
	\req \in \requests, \cycle \in \Cycles, w \in \SVTypesCycle,\\[-2pt]
	\hspace{-3pt} u \in \SVTypesCycle \setminus \{w\}
	\end{array} \hspace{-3pt} \right] &  \tagIt{LP:novel:node-mapping-cycles-restriction} \\[14pt]
	
a^{\type,u}_{\req}  & =  & 	 	\sum \limits_{i \in \VV, \Vtype(i) = \type}  \hspace{-12pt}  \Vcap(i) \cdot y^u_{\req,i} & \forall \req \in \requests, (\type,u) \in  \SRV &  \tagIt{LP:novel:node-load}\\[14pt]
   	
	a^{u,v}_{\req} & = & a^{u,v}_{\req}[\mathcal{F}] + \hspace{-18pt} \sum_{\cycle \in \Cycles, w \in \SVTypesCycle} \hspace{-20pt} a^{u,v}_{\req}[\cycle,w]  ~ & \forall \req \in \requests, (u,v) \in  \SE&  \tagIt{LP:novel:edge-load}\\[16pt]
	\multicolumn{3}{L}{ \sum \limits_{\req \in \requests} a^{x,y}_{\req}  \leq   \Scap(x,y)}
	 & \forall (x,y) \in  \SR &  \tagIt{LP:novel:capacities}  

  \end{tabular}
\end{minipage}}
  \caption{Novel LP for Cactus Requests}
  \label{IP:novel}
  \end{IPFormulation}
  }
\end{figure}

\subsection{Decomposing Solutions to the Novel LP Formulation}

We now show how to adapt the decomposition Algorithm~\ref{alg:decompositionAlgorithm-MCF-Tree} to decompose solutions to Formulation~\ref{IP:novel}.

To decompose the LP solution for a request $\req$, the acyclic reorientation $\VGbfs$, which was also used for constructing the LP, must be handed over to the decomposition algorithm. 

As the novel LP formulation does not contain (global) edge mapping variables, the edge mapping variables used in Lines~\ref{alg:decomp:compute-path-original-orientation} and \ref{alg:decomp:compute-path-reverse-orientation} of Algorithm~\ref{alg:decompositionAlgorithm-MCF-Tree} must be substituted by edge mapping variables of the respective sub-LP formulations. Concretely, as each edge of the request graph $\VG$ is covered exactly once, it is clear whether a virtual edge $(i,j) \in \VE$ is part of $\VGforestOrig$ or a cyclic subgraph $\VGcycleOrig$. If $(i,j) \in \VGforestOrig$ holds, then the edge mapping variables $z^{\cdot,\cdot}_{\req,i,j}[\forest]$ are used. 
If on the other hand the edge $(i,j) \in \VE$ is covered in the cyclic subgraph $\VGcycleOrig$, then there exist $|\SVTypesCycle|$ many sub-LPs to choose the respective edge mapping variables from. To ensure the decomposability, we proceed as follows.

If the edge $(i,j) \in \VEbfs$ is the first edge of $\VGcycleOrig$ to be mapped in the $k$-th iteration, the mapping variables $z^{\cdot,\cdot}_{\req,i,j}[\cycle,w]$  belonging to an arbitrary target node $w$, with $y^{\mapV(i)}_{\req,i}[\cycle,w] > 0$, are used. Such a node $w$ exists by Constraint~\ref{LP:novel:node-mapping-cycles}.

If another edge $(i',j')$ of the same cycle was already mapped in the $k$-th iteration, the \emph{same} sub-LP before is considered. Accordingly, the mapping of cycle target nodes cannot conflict, and as these are the only nodes with potential mapping conflicts, the returned mappings are always valid.

To successfully iterate the extraction process, the steps taken in Lines~\ref{alg:decomposition:compute-Vk}~-~\ref{alg:decomposition:adapt-variables} of Algorithm~\ref{alg:decompositionAlgorithm-MCF-Tree} must be adapted to consider the sub-LP variables. Again, as in each iteration at least a single variable of the LP is set to 0 and as the novel Formulation~\ref{IP:novel} contains at most $\mathcal{O}(|\SV|)$ times more variables than the MCF Formulation~\ref{alg:VNEP-IP-old}, the decomposition algorithm still runs in polynomial-time.
Hence, we conclude that the result of Lemma~\ref{lem:decomposability-mcf-trees} carries over to the novel LP Formulation~\ref{IP:novel} for \emph{cactus} request graphs and state the following theorem.

\begin{theorem}
\label{thm:decomposability-novel-formulation}
Given a solution $(x_\req, \vec{y}_{\req}, \vec{z}_{\req}, \vec{a}_{\req})$ to the novel LP Formulation~\ref{IP:novel} for a cactus request graph $\VG$, the solution can be decomposed into a convex combination of valid mappings \mbox{$\PotEmbeddings = \{(\mapping,\prob) | \mapping \in \spaceSolReq, \prob > 0\}_k$}, such that:
\begin{itemize}
\item The decomposition is complete, i.e. $x_{\req} = \sum_k \prob$ holds.
\item The decomposition's allocations are bounded by $\vec{a}_{\req}$, i.e. \newline $a^{x,y}_{\req} \geq \sum_{k} \prob \cdot A(\mapping,x,y)$ holds for $(x,y) \in \SR$.
\end{itemize}
\end{theorem}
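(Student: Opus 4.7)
The plan is to adapt Algorithm~\ref{alg:decompositionAlgorithm-MCF-Tree} to the new formulation and argue the three required properties (validity, completeness, and bounded allocations). The structure mirrors the tree decomposition, with the crucial addition that, whenever edges of a cycle $C \in \Cycles[\req]$ are processed, a single sub-LP copy $[C,w]$ is used consistently throughout a given extraction iteration.

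First I would set up the modified extraction procedure. Given the partition $\VGbfs = \VGforest \sqcup \bigsqcup_{C \in \Cycles[\req]} \VGcycle$ from Lemma~\ref{lem:observations-bfs-graphs}, I process edges in a breadth-first manner starting from $\VVroot$, just as in Algorithm~\ref{alg:decompositionAlgorithm-MCF-Tree}, but reroute the call to Lemma~\ref{lem:local-connectivity-property} according to which subgraph the current virtual edge $(i,j)$ belongs to. If $(i,j) \in \VEforest$, I use the edge-flow variables $\vec{z}_{\req}[\mathcal{F}]$ of the forest sub-LP; Constraint~\ref{LP:novel:node-mapping-forest} guarantees $y^{\mapV(i)}_{\req,i}[\mathcal{F}] > 0$, so Lemma~\ref{lem:local-connectivity-property} applies locally within the forest sub-LP. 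If $(i,j)$ is the \emph{first} edge of some cycle $C$ encountered in the current iteration $k$, Constraint~\ref{LP:novel:node-mapping-cycles} yields a $w \in \SVTypesCycle$ with $y^{\mapV(i)}_{\req,i}[C,w] > 0$; I fix this $w$ as the \emph{active} cycle copy for $C$ in iteration $k$ and carry out the flow extraction inside the variables $\vec{z}_{\req}[C,w]$. Every subsequent edge of the same cycle in the same iteration reuses the \emph{same} copy $[C,w]$.

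Next I would argue validity of the produced mapping $\mapping$. For every forest edge and for every cycle branch, Lemma~\ref{lem:local-connectivity-property}, instantiated in the appropriate sub-LP, produces a concrete substrate path and a consistent endpoint mapping. The only way validity could fail is if the two branches $\mathcal{B}_1^{C}$ and $\mathcal{B}_2^{C}$ of a cycle $C$ disagreed on the mapping of the common cycle target $\VVcycleTarget[\req][k]$. But Constraint~\ref{LP:novel:node-mapping-cycles-restriction} forces $y^{u}_{\req,\VVcycleTarget[\req][k]}[C,w] = 0$ for all $u \neq w$, so within the single active copy $[C,w]$ the target is forced to be mapped on $w$ along \emph{both} branches. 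Since $\VG[\req]$ is a cactus, cycle targets are the only nodes where two paths can reconverge, so no conflict arises and $\mapV$ is well-defined. Edge-mapping restrictions and node-mapping restrictions carry over from the sub-LP constraints inherited from Formulation~\ref{alg:VNEP-IP-old}, so $\mapping \in \spaceSolReq$.

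For completeness and the allocation bound, I would reuse the bookkeeping of lines~\ref{alg:decomposition:compute-Vk}–\ref{alg:decomposition:adapt-variables} of Algorithm~\ref{alg:decompositionAlgorithm-MCF-Tree}, now taking the minimum $\prob$ over the union of $\{x_\req\}$, the global $y$-variables used, and the sub-LP $z$-variables used (split across $[\mathcal{F}]$ and the active copies $[C,w]$). Subtracting $\prob$ from exactly those variables preserves every constraint of Formulation~\ref{IP:novel}: the forest sub-LP constraints remain feasible because only variables of $[\mathcal{F}]$ or the global variables (tied by Constraints~\ref{LP:novel:node-embedding}, \ref{LP:novel:node-mapping-forest}) are touched; for each cycle $C$, only variables of the single active copy $[C,w]$ are touched, and Constraints~\ref{LP:novel:node-embedding}–\ref{LP:novel:node-mapping-cycles} remain satisfied since the reduction is consistent with the global side. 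At least one variable is driven to zero in each iteration, so the procedure terminates in polynomial time (the number of variables is $\mathcal{O}(|\VG[\req]|\cdot|\SG|^2)$), and it terminates only when $x_\req = 0$. This gives $\sum_k \prob = x_\req$ (completeness), and updating $\vec{a}_\req$ by $\prob \cdot A(\mapping,\cdot,\cdot)$ together with Constraints~\ref{LP:novel:node-load}–\ref{LP:novel:edge-load} yields $a^{x,y}_\req \geq \sum_k \prob \cdot A(\mapping,x,y)$.

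The main obstacle is the cycle-consistency argument: one must verify that committing to a single copy $[C,w]$ for an entire cycle in a given iteration is always possible, i.e.\ that once the first edge of $C$ is mapped from $\mapV(i)$ with positive flow in copy $[C,w]$, the remaining edges of $C$ processed in this iteration still carry positive flow in the same copy $[C,w]$ from the already-fixed endpoints. This is where Constraints~\ref{LP:novel:node-mapping-cycles} and \ref{LP:novel:node-mapping-cycles-restriction}, combined with the sub-LP flow-conservation (Constraint~\ref{alg:VNEP-old:edge-embedding} inside $[C,w]$) and Lemma~\ref{lem:local-connectivity-property} applied \emph{inside} that copy, do the real work: the second branch of the cycle carries flow from the shared cycle source up to $w$ within $[C,w]$ itself, so repeated invocations of Lemma~\ref{lem:local-connectivity-property} inside $[C,w]$ cannot fail and must converge on $\mapV(\VVcycleTarget[\req][k]) = w$.
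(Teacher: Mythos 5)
Your proposal is correct and follows essentially the same route as the paper's own argument: adapt Algorithm~\ref{alg:decompositionAlgorithm-MCF-Tree} by routing each edge's flow extraction to the forest sub-LP or to a single cycle copy $[\cycle,w]$ chosen via Constraint~\ref{LP:novel:node-mapping-cycles} when the first edge of that cycle is encountered, reuse that copy for the remaining cycle edges, and invoke Constraint~\ref{LP:novel:node-mapping-cycles-restriction} to rule out diverging target mappings. Your treatment of the bookkeeping, feasibility preservation after decrementing, and the cycle-consistency check is in fact somewhat more explicit than the paper's, but it is the same proof.
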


 \begin{figure}[b!]

 \scalebox{0.93}{
  \begin{minipage}{1.05\columnwidth}
  
  \removelatexerror
  
  \begin{algorithm*}[H]

  \SetKwInOut{Input}{Input}\SetKwInOut{Output}{Output}
  \SetKwFunction{ProcessPath}{ProcessPath}{}{}
  \SetKwFunction{reverse}{reverse}{}{}
  \SetKwFunction{LP}{LP}
  \SetKwFunction{LP}{LP}
  
  \newcommand{\SET}{\textbf{set~}}
  \newcommand{\ADD}{\textbf{add~}}
  \newcommand{\DEFINE}{\textbf{define~}}
  \newcommand{\AND}{\textbf{and~}}
  \newcommand{\LET}{\textbf{let~}}
  \newcommand{\WITH}{\textbf{with~}}
  \newcommand{\COMPUTE}{\textbf{compute~}}
   \newcommand{\CHECK}{\textbf{check~}}
     \newcommand{\REMOVE}{\textbf{remove~}}
      \newcommand{\REPEAT}{\textbf{repeat~}}
  \newcommand{\FIND}{\textbf{find~}}
  \newcommand{\CHOOSE}{\textbf{choose~}}
   \newcommand{\CHOOSi}{\textbf{choosing~}}
   \newcommand{\CONS}{{construct solution by~}}
  \newcommand{\DECOMPOSE}{\textbf{decompose~}}
  \newcommand{\FORALL}{\textbf{for all~}}
  \newcommand{\OBTAIN}{\textbf{obtain~}}
  \newcommand{\WITHPROBABILITY}{\textbf{with probability~}}
  
  \SetKwRepeat{Do}{do}{while}%
  
 \ForEach(\tcp*[f]{preprocess requests}){$\req \in \requests$ \label{alg:randround:preprocess-start}}{ 
 	\COMPUTE LP Formulation~\ref{IP:novel} for request $\req$ maximizing $x_r$\\
 	\textbf{if} $x_{\req} < 1$ \textbf{then remove} request $\req$ from the set $\requests$ \label{alg:randround:preprocess-end}\\
 }
 \COMPUTE LP Formulation~\ref{IP:novel} for $\requests$ maximizing $\sum_{\req \in \requests} \Vprofit \cdot x_{\req}$ \label{alg:randround:compute-LP}\\
 \ForEach(\tcp*[f]{perform decomposition}){$\req \in \requests$}{
 \COMPUTE $\PotEmbeddings=\{(\prob,\mapping)\}_k$ from LP solution \label{alg:randround:decompose}\\
 }
 \DontPrintSemicolon
  \Do(\tcp*[f]{perform randomized rounding  \label{alg:randround:rounding-start}}){\label{alg:randround:rounding-end}
 $\left(
\begin{array}{ll}
  \textnormal{solution is \emph{not} $(\alpha,\beta,\gamma)$-approximate and }  \\
  \textnormal{maximal rounding tries are not exceeded}
  \end{array}
 \right)$
 \textnormal{ }}{ 
	 \textbf{foreach} $\req \in \requests$ \textbf{select} $\mapping$ \WITHPROBABILITY $\prob$  \label{alg:randround:rounding-main}\\
 }
  \caption{Randomized Rounding for the VNEP}
  \label{alg:rand-round-profit}
  \end{algorithm*}
 \end{minipage}}
  
  \end{figure}
 
\section{Approximation via Randomized Rounding}
\label{sec:approximation-via-randround}
Above we have shown how convex combinations for the VNEP can be computed for cactus requests. Given these convex combinations, we now present our approximation algorithm (see Algorithm~\ref{alg:rand-round-profit}) and analyze its probabilistic guarantees in Sections~\ref{sec:performance-guarantee-obj-admission-control} to \ref{sec:main-results-admission-control}. Afterwards, we propose several rounding heuristics to apply our approximation in practice (see Section~\ref{sec:main-section-discussion}).

\paragraph*{Synopsis of our Approximation Algorithm}
The algorithm first performs a preprocessing in Lines \ref{alg:randround:preprocess-start}~-~\ref{alg:randround:preprocess-end} by removing all requests which cannot be fully (fractionally) embedded in the absence of other requests, as these can never be part of any feasible solution. In Lines~\ref{alg:randround:compute-LP}~-~\ref{alg:randround:decompose} an optimal solution to the novel LP Formulation~\ref{IP:novel} is computed and afterwards decomposed into convex combinations. Then, in Lines~\ref{alg:randround:rounding-start}~-~\ref{alg:randround:rounding-end}, the rounding is performed: for each request $\req \in \requests$ a mapping $\mapping$ is selected with probability $\prob$. Importantly, the summed probabilities may not sum to $1$, i.e. with probability $1- \sum_k \prob$ the request $\req$ is not embedded.

The rounding procedure is iterated as long as the constructed solution is not of sufficient quality or until the number of maximal rounding tries is exceeded. Concretely, we seek $(\alpha,\beta,\gamma)$-approximate solutions which achieve at least a factor of $\alpha\leq1$ times the optimal (LP) profit and exceed node and edge capacities by at most factors of $\beta\geq1$ and $\gamma \geq1$, respectively. In the following we derive parameters $\alpha$, $\beta$, and $\gamma$ for which solutions can be found \emph{with high probability}.

Note that Algorithm~\ref{alg:rand-round-profit} is indeed a polynomial-time algorithm, as the size of the novel LP Formulation~\ref{IP:novel} is polynomially bounded and can hence be solved in polynomial-time.

\subsection{Probabilistic Guarantee for the Profit} 
\label{sec:performance-guarantee-obj-admission-control}

For bounding the profit achieved by the randomized rounding scheme, we recast the profit achieved in terms of random variables. The \emph{discrete} random variable $\randVarY \in \{0,\Vprofit\}$ models the profit achieved by the rounding of request $\req \in \requests$. According to our rounding scheme, we have $\ProbVarY[\Vprofit] = \sum_{k} \prob$ and $\ProbVarY[0] = 1-\sum_{k} \prob$. We denote the overall profit by $B = \sum_{\req \in \requests} \randVarY$ with \mbox{$\mathbb{E}(B) = \sum_{\req \in \requests} \Vprofit \cdot \sum_{k} \prob $}. Denoting the profit of an optimal LP solution by $\optLP$, we have $\optLP = \Exp(B)$ due to the decomposition's completeness (cf. Theorem~\ref{thm:decomposability-novel-formulation}).

By preprocessing the requests and confirming that each request can be fully embedded, the LP will attain at least the maximal profit of any of the considered requests:

\begin{lemma}
\label{lem:expected-profit-larger-max}
$\mathbb{E}(B) = \optLP \geq \max_{\req \in \requests} \Vprofit$ holds.
\end{lemma}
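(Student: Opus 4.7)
The plan is to prove the two parts of the statement separately. First, I would establish the equality $\mathbb{E}(B) = \optLP$. By linearity of expectation,
\[
\mathbb{E}(B) = \sum_{\req \in \requests} \mathbb{E}(\randVarY) = \sum_{\req \in \requests} \Vprofit \sum_k \prob .
\]
Invoking the completeness guarantee of the decomposition from Theorem~\ref{thm:decomposability-novel-formulation}, $\sum_k \prob = x_{\req}$ for each request, so the right-hand side equals $\sum_{\req \in \requests} \Vprofit \cdot x_{\req}$, which is precisely the LP objective value $\optLP$ obtained in Line~\ref{alg:randround:compute-LP} of Algorithm~\ref{alg:rand-round-profit}.

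Second, I would show the inequality $\optLP \geq \max_{\req \in \requests} \Vprofit$ using the preprocessing step. The preprocessing in Lines~\ref{alg:randround:preprocess-start}~-~\ref{alg:randround:preprocess-end} guarantees that every request $\req$ surviving in $\requests$ satisfies the property that the single-request LP (Formulation~\ref{IP:novel} restricted to $\{\req\}$) attains $x_{\req} = 1$. Let $\req^\star \in \arg\max_{\req \in \requests} \Vprofit$. Taking the single-request solution for $\req^\star$ and setting all variables corresponding to other requests to $0$ yields a feasible solution to the full LP Formulation~\ref{IP:novel} over $\requests$: the capacity Constraint~\ref{LP:novel:capacities} holds because a single embedded request already respected capacities, and all per-request constraints are satisfied by construction. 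This feasible solution attains an objective value of $\Vprofit[\req^\star] = \max_{\req \in \requests} \Vprofit$, so optimality of $\optLP$ gives $\optLP \geq \max_{\req \in \requests} \Vprofit$.

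Chaining the two parts yields $\mathbb{E}(B) = \optLP \geq \max_{\req \in \requests} \Vprofit$. I do not anticipate a significant obstacle here: the argument is essentially bookkeeping, with the only non-trivial ingredient being the appeal to Theorem~\ref{thm:decomposability-novel-formulation} for completeness and the observation that the preprocessing certifies individual feasibility of every surviving request, which in turn certifies a lower bound on the joint LP optimum.
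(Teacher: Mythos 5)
Your proof is correct and follows the same route the paper takes: the paper justifies the equality via the decomposition's completeness (Theorem~\ref{thm:decomposability-novel-formulation}) in the paragraph preceding the lemma, and justifies the inequality with the one-line observation that preprocessing certifies each surviving request is fully fractionally embeddable, which is exactly the single-request feasible solution you construct. Your writeup is simply a more detailed expansion of that argument.
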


We employ the following Chernoff bound over continuous variables to bound the probability of achieving a small profit.

\begin{theorem}[Chernoff Bound~\cite{dubhashi2009concentration}]
\label{thm:chernoff}
Let $X\hspace{-2pt}= \hspace{-2pt}\sum_{i = 1}^n X_i$, $X_i \in [0,1]$, be a sum of $n$ independent random variables. 
For any~\mbox{$0 < \varepsilon < 1$}, the following holds: 
{
\[\mathbb{P} \big(X \leq (1-\varepsilon)\cdot \mathbb{E}(X)\big)\leq  \exp(-\varepsilon^2\cdot \mathbb{E}(X)/2)\]
}
\end{theorem}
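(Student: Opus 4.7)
\medskip
\noindent\textbf{Proof plan.} The statement is the classical lower-tail Chernoff bound for sums of bounded independent random variables. Since the authors invoke it from \cite{dubhashi2009concentration}, I only sketch the standard moment-generating-function argument that establishes it.

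The starting point is Markov's inequality applied to $e^{-tX}$ for a parameter $t>0$ to be chosen later. Since the event $X \leq (1-\varepsilon)\mathbb{E}(X)$ coincides with $e^{-tX} \geq e^{-t(1-\varepsilon)\mathbb{E}(X)}$, Markov yields
\[ \mathbb{P}(X \leq (1-\varepsilon)\mathbb{E}(X)) \leq e^{t(1-\varepsilon)\mathbb{E}(X)} \cdot \mathbb{E}[e^{-tX}]. \]
Independence then factorizes the moment generating function as $\mathbb{E}[e^{-tX}] = \prod_{i=1}^n \mathbb{E}[e^{-tX_i}]$, so it suffices to bound each factor individually.

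For each $X_i \in [0,1]$ I would invoke the chord-above-graph inequality for the convex function $x \mapsto e^{-tx}$ on $[0,1]$, i.e., $e^{-tx} \leq 1 - (1-e^{-t})x$ for $x \in [0,1]$. Taking expectations and combining with $1-y \leq e^{-y}$ gives
\[ \mathbb{E}[e^{-tX_i}] \leq 1 - (1-e^{-t})\mathbb{E}[X_i] \leq \exp\bigl(-(1-e^{-t})\mathbb{E}[X_i]\bigr), \]
and hence $\mathbb{E}[e^{-tX}] \leq \exp(-(1-e^{-t})\mathbb{E}(X))$. Choosing the optimal parameter $t = -\ln(1-\varepsilon) > 0$, so that $e^{-t} = 1-\varepsilon$, the Markov bound collapses into
\[ \mathbb{P}(X \leq (1-\varepsilon)\mathbb{E}(X)) \leq \left(\frac{e^{-\varepsilon}}{(1-\varepsilon)^{1-\varepsilon}}\right)^{\mathbb{E}(X)}. \]

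The remaining step, and the one I expect to be technically most delicate, is the elementary real-analytic inequality $-(1-\varepsilon)\ln(1-\varepsilon) - \varepsilon \leq -\varepsilon^2/2$ for $\varepsilon \in (0,1)$, which is equivalent to $e^{-\varepsilon}/(1-\varepsilon)^{1-\varepsilon} \leq e^{-\varepsilon^2/2}$. I would establish this by substituting the Taylor series $\ln(1-\varepsilon) = -\sum_{k\geq 1}\varepsilon^k/k$ and checking that the residual past the quadratic term has the right sign, or equivalently by showing that $f(\varepsilon) = -(1-\varepsilon)\ln(1-\varepsilon) - \varepsilon + \varepsilon^2/2$ satisfies $f(0)=0$ and $f'(\varepsilon) \leq 0$ on $(0,1)$. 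Everything else follows the canonical Chernoff template verbatim.
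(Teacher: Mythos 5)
Your sketch is the standard and correct moment-generating-function proof of the lower-tail Chernoff bound: Markov applied to $e^{-tX}$, factorization by independence, the chord bound $e^{-tx}\leq 1-(1-e^{-t})x$ for $x\in[0,1]$ (which correctly handles general $[0,1]$-valued rather than only Bernoulli variables, as the paper's application requires), the optimal choice $e^{-t}=1-\varepsilon$, and the elementary inequality $-(1-\varepsilon)\ln(1-\varepsilon)-\varepsilon\leq-\varepsilon^2/2$, which indeed follows since $f(0)=0$ and $f'(\varepsilon)=\ln(1-\varepsilon)+\varepsilon\leq 0$ on $(0,1)$. The paper itself gives no proof of this theorem; it is imported verbatim from the cited reference~\cite{dubhashi2009concentration}, so there is nothing to compare against beyond noting that your argument is the canonical one found there.
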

\begin{restatable}{theorem}{probabilityOfNotSucceedingInObjective}
\label{thm:probability-of-not-succeeding-in-objective}
Let $\optIP$ denote the profit of an optimal solution. Then $\mathbb{P}(B < 1/3\cdot \optIP) \leq \exp(-2/9) \approx 0.8007$ holds.
\end{restatable}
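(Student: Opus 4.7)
The plan is to apply the given Chernoff bound to a suitably normalized version of $B$, and then translate the resulting tail bound on $B$ in terms of $\optLP$ into one in terms of $\optIP$ using weak LP duality $\optLP \geq \optIP$.

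First I would define $\VprofitMax = \max_{\req \in \requests} \Vprofit$ and introduce the normalized variables $X_{\req} = \randVarY / \VprofitMax \in [0,1]$, so that $X = \sum_{\req \in \requests} X_{\req} = B/\VprofitMax$ is a sum of independent $[0,1]$-valued variables satisfying the hypothesis of Theorem~\ref{thm:chernoff}, with expectation $\Exp(X) = \optLP / \VprofitMax$. Applied with parameter $\varepsilon \in (0,1)$ this yields
\begin{equation*}
\mathbb{P}\bigl(B \leq (1-\varepsilon)\,\optLP\bigr) \;\leq\; \exp\!\bigl(-\varepsilon^{2} \optLP / (2\VprofitMax)\bigr).
\end{equation*}

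Next I would invoke Lemma~\ref{lem:expected-profit-larger-max}, which guarantees $\optLP \geq \VprofitMax$, so $\optLP/(2\VprofitMax) \geq 1/2$ and the exponent simplifies to at most $-\varepsilon^{2}/2$, independently of the instance size. Choosing $\varepsilon = 2/3$ then gives $\mathbb{P}(B \leq \tfrac{1}{3}\optLP) \leq \exp(-2/9)$.

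Finally, to pass from $\optLP$ to $\optIP$, I would use that the LP is a relaxation of the integer program, hence $\optLP \geq \optIP$. This implies the inclusion $\{B < \tfrac{1}{3}\optIP\} \subseteq \{B \leq \tfrac{1}{3}\optLP\}$, so the tail bound transfers verbatim, yielding $\mathbb{P}(B < \tfrac{1}{3}\optIP) \leq \exp(-2/9)$. I do not expect any genuine obstacle here; the only subtle point is checking that Theorem~\ref{thm:chernoff} is applied to \emph{independent} variables, which is justified because each $\randVarY$ depends only on the independent per-request rounding coin tosses in Line~\ref{alg:randround:rounding-main} of Algorithm~\ref{alg:rand-round-profit}, and that the normalization by $\VprofitMax$ is necessary since the raw $\randVarY \in \{0,\Vprofit\}$ need not lie in $[0,1]$.
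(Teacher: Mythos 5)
Your proof is correct and follows essentially the same route as the paper: normalize by the maximum profit $\VprofitMax$, apply the Chernoff bound with $\varepsilon = 2/3$, invoke Lemma~\ref{lem:expected-profit-larger-max} to lower-bound the exponent, and pass from $\optLP$ to $\optIP$ via the relaxation inequality $\optIP \leq \optLP$. The only nitpick is terminological: calling $\optLP \geq \optIP$ ``weak LP duality'' is a misnomer (it is simply the fact that the LP is a relaxation), which you in fact state correctly later in the argument.
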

\begin{proof}
Let $\hat{b} = \max_{\req \in \requests} \Vprofit$ be the maximum benefit among the preprocessed requests.
We consider random variables \mbox{$Y'_{\req} = Y_{\req} / \hat{b}$}, such that $Y'_{\req} \in [0,1]$ holds.
Let \mbox{$B' = \sum_{\req \in \requests} Y'_{\req} = B / \hat{b}$}.
As $\mathbb{E}(B) = \optLP \geq \hat{b}$ holds (cf. Lemma~\ref{lem:expected-profit-larger-max}), we have $\mathbb{E}(B') \geq 1$.
Choosing $\varepsilon = 2/3$ and applying Theorem~\ref{thm:chernoff} on $B'$ we obtain \mbox{$\mathbb{P} \big( B' \leq (1/3)\cdot \mathbb{E}(B') \big) \leq \exp(-2\cdot \mathbb{E}(B')/9)$}.
Plugging in the \emph{minimal} value of $\Exp(B')$, i.e. $1$, into the equation we obtain:
$\mathbb{P} \big( B' \leq (1/3)\cdot \mathbb{E}(B')~\big) \leq \exp(-2/9)$ and by linearity 
$\mathbb{P} \big( B \leq (1/3)\cdot \mathbb{E}(B)~\big) \leq \exp(-2/9)$.

Denoting the profit of an optimal solution by $\optIP$ and observing that $\optIP \leq \optLP$ holds as the linear relaxation yields an upper bound, we have
$\optIP/3 \leq \Exp(B)/3$. Accordingly, we conclude that, $\mathbb{P} \big( B \leq (1/3) \cdot \optIP \big) \leq \exp(-2/9)$ holds.
\end{proof}

\subsection{Probabilistic Guarantee for Resource Augmentations}
\label{sec:performance-guarantee-cap-violation-admission-control}
In the following, we analyze the probability that a rounded solution exceeds substrate capacities by a certain factor. 

We first note that $\maxDemandX \leq \Scap(x,y)$ holds for all resources $(x,y) \in \SR$ and all requests $\req \in \requests$.
We model the allocations on resource $(x,y) \in  \SR$ by request $\req \in \requests$ as random variable \mbox{$A_{\req,x,y} \in [0,\maxAllocX]$}. By definition, we have $\mathbb{P}(A_{\req,x,y} = A(\mapping,x,y))= \prob $ and \mbox{$\mathbb{P}(A_{\req,x,y} = 0)=  1 - \sum_{k } \prob$}. Furthermore, we denote by $A_{x,y} = \sum_{\req \in \requests} A_{\req,x,y}$ the random variable capturing the overall allocations on resource $(x,y) \in \SR$.
As $\mathbb{E}(A_{x,y}) = \sum_{\req \in \requests} \sum_{k} \prob \cdot A(\mapping,x,y)$ holds by Theorem~\ref{thm:decomposability-novel-formulation}, we obtain $\mathbb{E}(A_{x,y}) \leq \Scap(x,y)$ for all resources $(x,y) \in  \SR$.

We employ Hoeffding's inequality to upper bound $A_{x,y}$.

\begin{theorem}[Hoeffding's inequality \cite{dubhashi2009concentration}]
Let $X\hspace{-2pt}= \hspace{-2pt}\sum_{i = 1}^n X_i$, $X_i \in [a_i,b_i]$, be a sum of $n$ independent random variables. 
The following holds for any $t \geq 0$:
{
\[
\mathbb{P}(X - \mathbb{E}(X)\geq t)\leq \exp (-2t^2 / (\sum \nolimits_i(b_i - a_i)^2))
\]
}
\end{theorem}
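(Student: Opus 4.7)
The plan is to prove Hoeffding's inequality via the classical exponential-moment (Chernoff-style) argument. First, I would apply Markov's inequality to the exponentiated deviation. For any auxiliary parameter $\lambda > 0$, monotonicity of $x \mapsto e^{\lambda x}$ yields
\[
\mathbb{P}(X - \mathbb{E}(X) \geq t) \;=\; \mathbb{P}\bigl(e^{\lambda(X - \mathbb{E}(X))} \geq e^{\lambda t}\bigr) \;\leq\; e^{-\lambda t}\,\mathbb{E}\!\left[e^{\lambda(X - \mathbb{E}(X))}\right].
\]

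Second, I would exploit the independence of the $X_i$ to factor the moment generating function of the deviation. Writing $Y_i = X_i - \mathbb{E}(X_i)$, the $Y_i$ are independent, zero-mean, and bounded in an interval of width $b_i - a_i$. Independence gives
\[
\mathbb{E}\!\left[e^{\lambda(X - \mathbb{E}(X))}\right] \;=\; \prod_{i=1}^n \mathbb{E}\!\left[e^{\lambda Y_i}\right].
\]
I would then invoke Hoeffding's lemma on each factor: for a zero-mean random variable $Y$ supported on $[a,b]$, one has $\mathbb{E}[e^{\lambda Y}] \leq \exp(\lambda^2 (b-a)^2 / 8)$. Combining these bounds produces
\[
\mathbb{P}(X - \mathbb{E}(X) \geq t) \;\leq\; \exp\!\left(-\lambda t \;+\; \tfrac{\lambda^2}{8} \sum_{i=1}^n (b_i - a_i)^2\right).
\]

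Third, I would optimize the free parameter $\lambda$. The right-hand side is minimized at $\lambda^{\star} = 4t / \sum_i (b_i - a_i)^2$; substituting this value collapses the expression to the stated bound $\exp\!\bigl(-2t^2 / \sum_i (b_i - a_i)^2\bigr)$.

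The genuinely non-trivial step is Hoeffding's lemma itself, which is where the factor $(b-a)^2/8$ arises. I would prove it by first using convexity of $e^{\lambda \cdot}$ to upper-bound $e^{\lambda y} \leq \tfrac{b-y}{b-a}\,e^{\lambda a} + \tfrac{y-a}{b-a}\,e^{\lambda b}$ pointwise on $[a,b]$, then taking expectations and using $\mathbb{E}(Y) = 0$ to obtain $\mathbb{E}[e^{\lambda Y}] \leq \tfrac{b}{b-a}\,e^{\lambda a} - \tfrac{a}{b-a}\,e^{\lambda b}$. Writing this as $e^{\varphi(\lambda)}$ with $\varphi(0)=\varphi'(0)=0$, a short calculus argument (bounding $\varphi''(\lambda) \leq (b-a)^2/4$ by AM--GM on the Bernoulli-type term, then Taylor-expanding around $\lambda = 0$) yields the required $\varphi(\lambda) \leq \lambda^2 (b-a)^2 / 8$. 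This convexity-plus-Taylor estimate is the main technical obstacle; everything else is mechanical assembly.
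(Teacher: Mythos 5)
The paper does not prove this statement at all: Hoeffding's inequality is imported as a black-box result, cited from Dubhashi and Panconesi \cite{dubhashi2009concentration}, so there is no in-paper proof to compare against. Your argument is the standard and correct one --- Markov's inequality applied to $e^{\lambda(X-\mathbb{E}(X))}$, factorization of the moment generating function by independence, Hoeffding's lemma with the $(b_i-a_i)^2/8$ constant, and optimization at $\lambda^{\star}=4t/\sum_i(b_i-a_i)^2$, which indeed yields the exponent $-2t^2/\sum_i(b_i-a_i)^2$ --- and your sketch of Hoeffding's lemma (convexity bound, then $\varphi''\leq(b-a)^2/4$ via the variance of a Bernoulli-type law, then Taylor expansion) is the standard way to supply the one genuinely non-trivial ingredient. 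The proof is complete and correct; it simply does work the paper delegates to a reference.
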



{
\renewcommand{\DeltaV}{\ensuremath{\Delta{(x,y)}}}
\begin{lemma}
\label{lem:approximation-single-resource} Consider a resource \mbox{$(x,y) \in \SR$} and \mbox{$0< \varepsilon \leq 1$}, such that $\maxDemandX / \Scap(x,y) \leq \varepsilon$ holds for  \mbox{$\req \in \requests$}. 
Let \mbox{$\DeltaV = \sum_{\req \in \requests: \maxDemandX > 0} (\maxAllocX / \maxDemandX)^2$}.
\begin{align}
\mathbb{P} (  A_{x,y} \geq \delta(\lambda) \cdot \Scap(x,y) )  \leq \lambda^{-4} \label{eq:foo}
\end{align}
holds for $\delta(\lambda) = 1+\varepsilon \cdot  \sqrt{2\cdot \DeltaV \cdot \log(\lambda)}$ and any $\lambda > 0$.
\begin{proof}\allowdisplaybreaks
We apply Hoeffding with~$t =( 1 - \delta(\lambda))\cdot \Scap(x,y)$:
{
\small{\noindent\begin{alignat*}{2}
& \mathbb{P} \bigg(A_{x,y} - \mathbb{E}(A_{x,y}) \geq ( 1 - \delta(\lambda))\cdot \Scap(x,y) \bigg) &&  \notag \\
& ~ \leq \exp \Big(\frac{-4 \cdot \varepsilon^2 \cdot \log (\lambda) \cdot \DeltaV \cdot  \Scap^2(x,y)}{\sum \limits_{\req \in \requests} (\maxAllocX)^2} \Big) && \\
& ~ \leq \exp \Big(\frac{-4 \cdot \varepsilon^2 \cdot \log (\lambda) \cdot \DeltaV \cdot  \Scap^2(x,y)}{\hspace{-12pt}\sum \limits_{\req \in \requests: \maxDemandX > 0} \hspace{-24pt}(\maxAllocX)^2} \Big) && \\
& ~ \leq \exp \Big(\frac{-4 \cdot \varepsilon^2 \cdot \log (\lambda) \cdot \DeltaV \cdot \Scap^2(x,y) }{\sum \limits_{\req \in \requests: \maxDemandX > 0} \hspace{-24pt} (\varepsilon \cdot \Scap(x,y) \cdot \maxAllocX / \maxDemandX)^2} \Big) && \\
& ~ \leq \exp \Big(\frac{-4 \cdot \log (\lambda) \cdot \DeltaV }{\sum \limits_{\req \in \requests: \maxDemandX > 0} \hspace{-24pt}( \maxAllocX / \maxDemandX)^2} \Big) =\lambda^{-4}
\end{alignat*}
}
}
The second inequality holds, as \mbox{$\maxAllocX > 0$} implies \mbox{$\maxDemandX > 0$}. For the third inequality,  
\mbox{$\maxAllocX \hspace{-1pt}\leq \hspace{-1pt}\varepsilon \hspace{-1pt}\cdot \hspace{-1pt}\Scap(x,y) \cdot \maxAllocX / \maxDemandX$}~is~used, which follows from the assumption $\maxDemandX \leq \varepsilon \cdot \Scap(x,y)$ and $\maxDemandX > 0$. 
In the next step, $\varepsilon^2 \cdot \Scap^2(x,y)$ is reduced from the fraction.  As the denominator equals $\DeltaV$ by definition, the final equality follows.
Lastly, we utilize that the expected allocation $\mathbb{E}(A_{x,y})$ is upper bounded by the resource's capacity $\Scap(x,y)$ to obtain Equation~\ref{eq:foo}.
\end{proof}
\end{lemma}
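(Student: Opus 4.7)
The plan is to apply Hoeffding's inequality to the sum $A_{x,y} = \sum_{\req \in \requests} A_{\req,x,y}$ of independent random variables, each supported on $[0, \maxAllocX[\req][x][y]]$, with the deviation parameter chosen so that the resulting exponent collapses to exactly $-4\log \lambda$.

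First I would rewrite the tail event in a form amenable to a concentration inequality. Since $\Exp(A_{x,y}) \leq \Scap(x,y)$ (this follows from Theorem~\ref{thm:decomposability-novel-formulation} together with Constraint~\ref{LP:novel:capacities}), setting $t = (\delta(\lambda)-1)\cdot \Scap(x,y) = \varepsilon \cdot \Scap(x,y) \cdot \sqrt{2\cdot \DeltaV[] \cdot \log \lambda} \geq 0$ gives the inclusion
\[
\{A_{x,y} \geq \delta(\lambda) \cdot \Scap(x,y)\} \subseteq \{A_{x,y} - \Exp(A_{x,y}) \geq t\},
\]
so it suffices to bound the probability on the right-hand side by $\lambda^{-4}$.

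Next I would instantiate Hoeffding with this $t$. Since $A_{\req,x,y}$ takes values in $[0, \maxAllocX[\req][x][y]]$, the theorem yields the bound $\exp\bigl( -2t^2 / \sum_{\req \in \requests} (\maxAllocX[\req][x][y])^2 \bigr)$. Plugging in $t^2 = 2 \varepsilon^2 \Scap^2(x,y) \cdot \DeltaV[] \cdot \log\lambda$ puts the exponent in a form where $\varepsilon^2 \Scap^2(x,y)$ can be cancelled against corresponding factors in the denominator. The key algebraic step will be to absorb the $\varepsilon^2 \Scap^2(x,y)$ factor: restricting the sum in the denominator to requests with $\maxDemandX[\req][x][y] > 0$ (which is harmless, since $\maxAllocX[\req][x][y] = 0$ whenever $\maxDemandX[\req][x][y] = 0$), and then using the assumption $\maxDemandX[\req][x][y] \leq \varepsilon \cdot \Scap(x,y)$ in the form
\[
\maxAllocX[\req][x][y] \leq \varepsilon \cdot \Scap(x,y) \cdot \frac{\maxAllocX[\req][x][y]}{\maxDemandX[\req][x][y]},
\]
I can upper-bound the denominator by $\varepsilon^2 \Scap^2(x,y) \cdot \DeltaV[]$. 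After cancellation the exponent becomes exactly $-4\log\lambda$, giving the advertised bound $\lambda^{-4}$.

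The main subtlety, and the step I expect needs the most care, is the handling of requests whose maximum demand on $(x,y)$ is zero: such requests make no contribution to either $A_{x,y}$ or $\DeltaV[]$, yet they would cause a division-by-zero if naively included in the ratio $\maxAllocX[\req][x][y] / \maxDemandX[\req][x][y]$. Restricting the summation to $\{\req : \maxDemandX[\req][x][y] > 0\}$ before applying the ratio bound is what makes the cancellation valid, and it is justified because $\maxDemandX[\req][x][y] = 0$ forces $A(\mapping,x,y) = 0$ for every valid mapping of $\req$, hence $\maxAllocX[\req][x][y] = 0$ as well. Beyond this, the remainder of the argument is a routine chain of inequalities.
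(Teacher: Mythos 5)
Your proposal is correct and follows essentially the same route as the paper's proof: Hoeffding's inequality applied to $A_{x,y}$ with $t=(\delta(\lambda)-1)\cdot d_S(x,y)$, the reduction of the absolute-threshold event to the deviation event via $\mathbb{E}(A_{x,y})\leq d_S(x,y)$, the restriction of the denominator sum to requests with positive maximum demand, and the cancellation of $\varepsilon^2 d_S^2(x,y)$ using $A_{\max}(r,x,y)\leq \varepsilon\cdot d_S(x,y)\cdot A_{\max}(r,x,y)/d_{\max}(r,x,y)$. You even handle the sign of $t$ more cleanly than the paper's statement (which writes $(1-\delta(\lambda))$ where $(\delta(\lambda)-1)$ is meant), and you correctly identify and justify the same division-by-zero subtlety the paper addresses.
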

}

Given Lemma~\ref{lem:approximation-single-resource}, we obtain the following corollary.
\begin{corollary} 
\label{cor:resource-augmentation-prob}
Let $0 < \varepsilon \leq 1$ be chosen minimally, such that $\maxDemandX / \Scap(x,y) \leq \varepsilon$ holds for all resources $(x,y) \in \SR$ and all requests $\req \in \requests$. Let $\Delta(X)=\max_{(x,y) \in X} \Delta(x,y)$,
\begin{alignat*}{8}
\beta   & = && (1+\varepsilon \cdot  \sqrt{2\cdot \Delta(\SRV) \cdot \log(|\SV|\cdot |\types|)})  \textit{~,~and}\\
\gamma & =  && (1+\varepsilon \cdot  \sqrt{2\cdot \Delta(\SE) \cdot \log(|\SE|)})\,. \\[-18pt]
\end{alignat*}
The following holds for all node resources $(\type,u) \in \SRV$ and edge resources $(u,v) \in \SE$, respectively:
\begin{alignat}{8}
\mathbb{P} (  A_{\type,u} \geq  \beta \cdot \Scap(\type,u) )  & \leq && {(|\SV|\cdot |\types|)}^{-4} \label{eq:req-node}\\
\mathbb{P} (  A_{u,v} \geq  \gamma \cdot \Scap(u,v) ) & \leq && {|\SE|}^{-4} \label{eq:req-edge}
\end{alignat}
\end{corollary}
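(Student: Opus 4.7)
The plan is to obtain the Corollary as an essentially immediate consequence of Lemma~\ref{lem:approximation-single-resource}, by making judicious choices of $\lambda$ for each of the two bounds. The only care needed is ensuring that the per-resource quantity $\delta(\lambda)$ from Lemma~\ref{lem:approximation-single-resource} is dominated by the uniform quantities $\beta$ and $\gamma$ defined in the statement; this will follow from the definition $\Delta(X)=\max_{(x,y)\in X}\Delta(x,y)$.

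For Equation~\ref{eq:req-node}, I would fix an arbitrary node resource $(\tau,u)\in\SRV$ and apply Lemma~\ref{lem:approximation-single-resource} with $\lambda=|\SV|\cdot|\types|$. This yields
\[
\mathbb{P}\bigl(A_{\tau,u}\geq \delta(\lambda)\cdot \Scap(\tau,u)\bigr)\leq (|\SV|\cdot|\types|)^{-4},
\]
with $\delta(\lambda)=1+\varepsilon\sqrt{2\cdot\Delta(\tau,u)\cdot\log(|\SV|\cdot|\types|)}$. Since $\Delta(\tau,u)\leq \Delta(\SRV)$ by definition of $\Delta(\SRV)$, we have $\delta(\lambda)\leq \beta$, so the event $\{A_{\tau,u}\geq \beta\cdot \Scap(\tau,u)\}$ is contained in $\{A_{\tau,u}\geq \delta(\lambda)\cdot \Scap(\tau,u)\}$, which establishes Equation~\ref{eq:req-node}. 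The validity of the $\varepsilon$-hypothesis in Lemma~\ref{lem:approximation-single-resource} is granted by the choice of $\varepsilon$ in the Corollary's hypothesis.

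For Equation~\ref{eq:req-edge}, I would proceed identically for an arbitrary edge resource $(u,v)\in\SE$ using $\lambda=|\SE|$, obtaining $\delta(\lambda)=1+\varepsilon\sqrt{2\cdot\Delta(u,v)\cdot\log(|\SE|)}\leq \gamma$ and the bound $|\SE|^{-4}$ from Lemma~\ref{lem:approximation-single-resource}. Containment of events again yields the desired inequality.

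The main thing to get right is really just bookkeeping: checking that $\Delta(x,y)\leq \Delta(X)$ under the monotonicity of $\sqrt{\cdot}$ and $\log(\cdot)$ propagates through $\delta(\lambda)$ correctly to produce $\beta$ and $\gamma$ as uniform upper bounds, and confirming that the $\varepsilon$-hypothesis of Lemma~\ref{lem:approximation-single-resource} is met uniformly (which is exactly the assumption placed on $\varepsilon$ in the statement). Since no further estimation is needed beyond these two substitutions and one monotonicity step, there is no substantive obstacle in the argument; the technical work was entirely absorbed into Lemma~\ref{lem:approximation-single-resource}.
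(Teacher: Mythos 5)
Your proposal is correct and matches the paper's own argument: the paper likewise obtains both bounds by instantiating Lemma~\ref{lem:approximation-single-resource} with $\lambda = |\SV|\cdot|\types|$ for node resources and $\lambda = |\SE|$ for edge resources, using $\Delta(\type,u) \leq \Delta(\SRV)$ and $\Delta(u,v) \leq \Delta(\SE)$ to pass from the per-resource $\delta(\lambda)$ to the uniform $\beta$ and $\gamma$. The only difference is that you spell out the event-containment and monotonicity steps that the paper leaves implicit.
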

\begin{proof}
First, note that $\varepsilon$ is chosen over all resources and requests and that $\Delta(\SRV) \geq \Delta(\type,u)$ and $\Delta(\SE) \geq \Delta(u,v)$ hold for $(\type,u) \in \SRV$ and  $(u,v) \in \SE$, respectively. 
 Equations~\ref{eq:req-node} and \ref{eq:req-edge} are then obtained from Lemma~\ref{lem:approximation-single-resource} by setting \mbox{$\lambda = |\SV| \cdot |\types|$} for nodes and $\lambda = |\SE|$ for edges.
\end{proof}

\subsection{Approximation Result}
\label{sec:main-results-admission-control}

Given the probabilistic bounds established above, the main approximation result is obtained via a union bound.

\begin{restatable}{theorem}{mainResultForAdmissionControl}
\label{thm:result-for-admission-control}
Assume $|\SV| \geq 3$. Let $\beta$ and $\gamma$ be defined as in Corollary~\ref{cor:resource-augmentation-prob}.
Algorithm~\ref{alg:rand-round-profit} returns $(\alpha,\beta,\gamma)$-approximate solutions for the VNEP (restricted on cactus request graphs) of at least an~$\alpha = 1/3$ fraction of the optimal profit, and allocations on nodes and edges within factors of $\beta$ 
 and $\gamma$ of the original capacities, respectively, with high probability.
\end{restatable}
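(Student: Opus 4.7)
The plan is to assemble the per-iteration guarantees from Theorem~\ref{thm:probability-of-not-succeeding-in-objective} and Corollary~\ref{cor:resource-augmentation-prob} into a single union bound, and then amplify the resulting constant success probability through the outer rounding loop of Algorithm~\ref{alg:rand-round-profit}. A single iteration fails to produce a $(1/3,\beta,\gamma)$-approximate solution exactly if at least one of the following ``bad events'' occurs: (i) $B<(1/3)\optIP$; (ii) $A_{\tau,u}\geq \beta\cdot \Scap(\tau,u)$ for some $(\tau,u)\in\SRV$; or (iii) $A_{u,v}\geq \gamma\cdot \Scap(u,v)$ for some $(u,v)\in\SE$. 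Theorem~\ref{thm:probability-of-not-succeeding-in-objective} bounds the probability of (i) by $\exp(-2/9)$, and the two parts of Corollary~\ref{cor:resource-augmentation-prob} bound each summand in (ii) and (iii) by $(|\SV|\cdot|\types|)^{-4}$ and $|\SE|^{-4}$, respectively. Using $|\SRV|\leq |\SV|\cdot|\types|$ and the union bound yields
\begin{align*}
p_{\mathrm{fail}}\;\leq\;\exp(-2/9)\,+\,(|\SV|\cdot|\types|)^{-3}\,+\,|\SE|^{-3}.
\end{align*}

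Next, the hypothesis $|\SV|\geq 3$ (together with $|\types|,|\SE|\geq 1$) makes each of the polynomial tail terms at most $1/27$, so $p_{\mathrm{fail}}\leq p^{\ast}$ for some absolute constant $p^{\ast}<1$ (numerically around $0.88$), and hence each iteration succeeds with at least a constant probability $\rho=1-p^{\ast}>0$. Because the decomposition $\PotEmbeddings$ is computed once in Line~\ref{alg:randround:decompose} and each iteration independently samples one mapping per request according to the fixed probabilities $\prob$, the iterations of the \textbf{do-while} loop are mutually independent. Setting the maximum number of tries to $T=\Theta(\log n)$ for a suitable polynomial input measure $n$ (for instance $n=|\SV|+|\SE|+|\requests|$), the probability that \emph{all} tries fail is at most $(p^{\ast})^{T}\leq n^{-c}$ for any desired constant $c$, which is the sought high-probability guarantee. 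Polynomial runtime of Algorithm~\ref{alg:rand-round-profit} is immediate: Formulation~\ref{IP:novel} has polynomial size, Theorem~\ref{thm:decomposability-novel-formulation} provides a polynomial-time decomposition, and $T$ is logarithmic.

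The only real subtlety is the tightness of the profit bound: since $\exp(-2/9)\approx 0.8$ is already close to $1$, one must carefully verify that the polynomial tail terms contributed by the capacity-violation events cannot push the union bound back above $1$. The assumption $|\SV|\geq 3$ is precisely what provides the constant-sized margin needed here; everything else is a routine union-bound plus amplification assembly of the previously established lemmas.
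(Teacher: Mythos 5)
Your proposal is correct and follows essentially the same route as the paper: a union bound combining Theorem~\ref{thm:probability-of-not-succeeding-in-objective} with the per-resource bounds of Corollary~\ref{cor:resource-augmentation-prob} to get a constant per-round failure probability (the paper computes $\exp(-2/9)+1/9+1/27\leq 19/20$), followed by amplification over independent rounding rounds. The only cosmetic difference is that you make the number of tries $T=\Theta(\log n)$ explicit, whereas the paper leaves the success probability as $1-(19/20)^N$.
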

\begin{proof}
We employ the following union bound argument. Employing Corollary~\ref{cor:resource-augmentation-prob} and as there are at most $|\SV| \cdot |\types|$ node resources and at most $|\SV|^2$ edges, the \emph{joint} probability that any resource exceeds their respective capacity by factors of $\beta$ or $\gamma$ is upper bounded by $(|\SV|\cdot|\types|)^3 + |\SV|^2 \leq 1/27 + 1/9$ for $|\SV| \geq 3$. 
By Theorem~\ref{thm:probability-of-not-succeeding-in-objective} the probability of \emph{not} finding a solution achieving an $\alpha=1/3$ fraction of the optimal objective is upper bounded by $\exp(-2/9)$. Hence, the probability to not find a $(\alpha,\beta,\gamma)$-approximate solution within a single round is upper bounded by~$\exp(-2/9) + 1/9 + 1/27 \leq 19/20$. Hence, as the probability to return a suitable solution within a single round is at least $1/20$, the probability that the algorithm returns an approximate solution within $N \in \mathbb{N}$ rounding tries is lower bounded by $1-(19/20)^N$. Thus,  Algorithm~\ref{alg:rand-round-profit} yields approximate solutions for the VNEP \emph{with high probability}.
\end{proof}

\subsection{Discussion \& Proposed Heuristics}
\label{sec:main-section-discussion}
Theorem~\ref{thm:result-for-admission-control} yields the first  approximation algorithm for the profit variant of the VNEP. However, the direct application of Algorithm~\ref{alg:rand-round-profit} to compute $(\alpha,\beta,\gamma)$-approximate solutions is made difficult by the cumbersome definition of the terms $\Delta(\SRV)$ and $\Delta(\SE)$. Specifically, computing $\beta$ and $\gamma$ exactly requires enumerating all valid mappings, which is not feasible. Hence, to directly apply Algorithm~\ref{alg:rand-round-profit}, the respective values have to be estimated. 
The following upper bounds can be easily established: 
\begin{alignat}{7}
\Delta(\SRV) & = && |\requests| \cdot \max_{\req \in \requests} |\VV| \\
\Delta(\SE) & \leq && |\requests| \cdot \max_{\req \in \requests} |\VE|\,.
\end{alignat}

Both inequalities follow from the observation that $\maxAllocX / \maxDemandX$ is upper bounded by the number of virtual nodes and edges, respectively, as $\maxAllocV \leq \maxDemandV \cdot |\VV|$ and $\maxAllocE \leq \maxDemandE \cdot |\VE|$ holds for node resources $(\type,u) \in \SRV$ and edge resource $(u,v) \in \SE$, respectively.
Now, plugging these upper bounds into the definition of $\beta$ and $\gamma$ yields large resource augmentations of  \mbox{$\beta \in \mathcal{O}(\varepsilon \cdot \sqrt{|\requests| \cdot  \max_{\req \in \requests} |\VV| \cdot \log(|\SV| \cdot |\types|)})$} and \mbox{$\gamma \in \mathcal{O}(\varepsilon \cdot \sqrt{|\requests| \cdot  \max_{\req \in \requests} |\VE| \cdot \log(|\SE|)})$}, respectively.

\begin{figure}[t!]
\centering
\includegraphics[height=0.155\textheight]{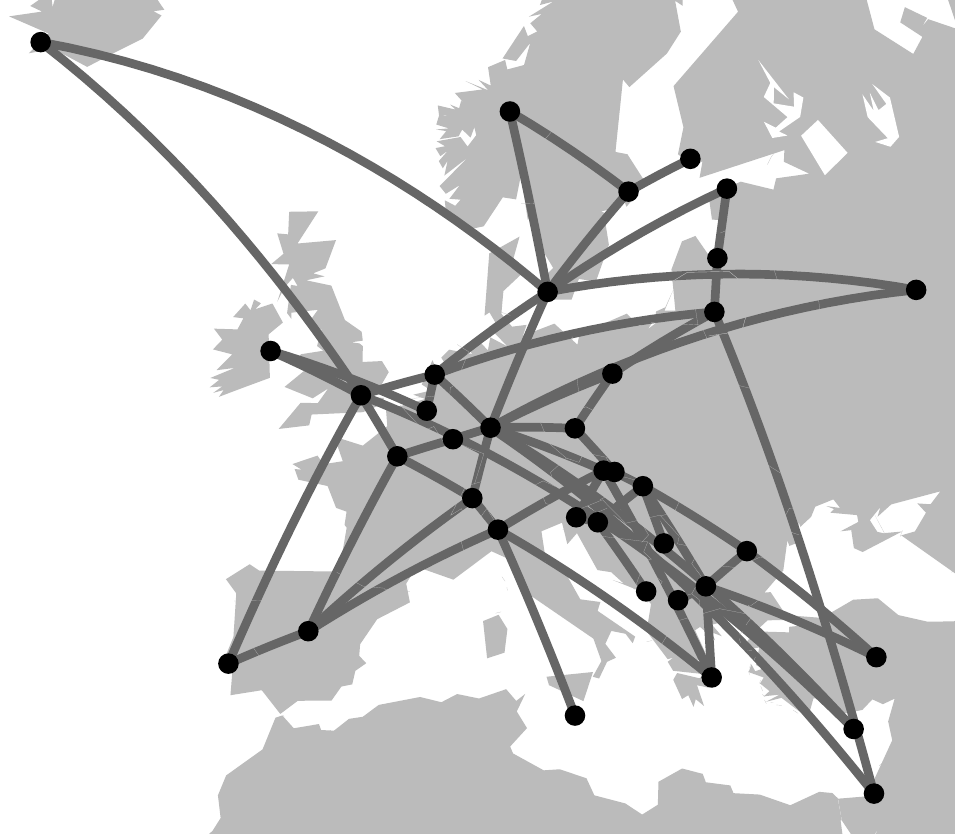}
\caption{Visualization of the G\'{E}ANT network chosen as substrate network. The G\'{E}ANT network connects several research, education and innovation institutes across Europe. }
\label{fig:geant}
\end{figure}

To overcome estimating $\beta$ and $\gamma$ by the above (or similar) bounds, we propose the following to apply randomized rounding in practice.
Concretely, we consider two types of adaptations to our Algorithm~\ref{alg:rand-round-profit}. The first adaptation is to discard the consideration of respective approximation factors $\alpha,\beta,\gamma$ and simply returning \emph{the best} solution found within a fixed number of rounding iterations. We refer to this type of rounding heuristic as \emph{vanilla rounding}, as the rounding procedure itself is not changed. The other type of adaptation is the avoidance of resource augmentations by considering \emph{heuristical rounding}.

Concretely, we consider two \emph{vanilla rounding} heuristics, namely $\RRMIN$ and $\RRMAX$, as well as one \emph{heuristical rounding} variant $\RRHEUR$, as described below:

\begin{description}
\item[$\RRMIN$] A fixed number of solutions  is rounded at random as in Line~\ref{alg:randround:rounding-main} of Algorithm~\ref{alg:rand-round-profit}. The solution minimizing resource augmentations (and among those the one of the highest profit) is selected. 
\item[$\RRMAX$]  A fixed number of solutions  is rounded at random as in Line~\ref{alg:randround:rounding-main} of Algorithm~\ref{alg:rand-round-profit}. The solution maximizing the profit (and among those the one of the least resource augmentations) is selected. 
\item[$\RRHEUR$]  Line~\ref{alg:randround:rounding-main} of Algorithm~\ref{alg:rand-round-profit} is adapted in such a way that selected mappings are only accepted, when its incorporation into the solution does not exceed resource capacities. In other words, if a mapping $\mapping$ is selected for request $\req \in \requests$ whose addition would exceed any resource capacity, then this mapping is simply discarded and request $\req$ is not embedded. To increase the diversity of found solutions, the order in which requests are processed is permuted before each rounding iteration. A fixed number of solutions is rounded and the one maximizing the profit is returned.
\end{description}

\section{Explorative Computational Study}
\label{sec:evaluation}

We now complement our formal approximation result in
the standard multi-criteria model with resource augmentation with an extensive computational study. Specifically, we study the performance of vanilla rounding and heuristical rounding (i.e., without resource augmentations) as introduced above.

As we are not aware of any systematic
evaluation of the profit maximization in the offline settings,
we present a synthetic but extensive computational study. 
Specifically, we have generated 1,500 offline VNEP instances with varying request numbers and varying demand-to-capacity ratios. For all instances, baseline solutions were computed by solving the (Mixed-)Integer Programming Formulation~\ref{alg:VNEP-IP-old}.

We have implemented all presented algorithms in Python 2.7 employing Gurobi 7.5.1 to solve the Mixed-Integer Programs and Linear Programs. Our source code is freely available at~\cite{github-evaluation}. All experiments were executed on a server equipped with Intel Xeon E5-4627v3 CPUs running at 2.6 GHz and reported runtimes are wall-clock times.

\subsection{Instance Generation}
\newcommand{\NRF}{\ensuremath{\mathrm{NRF}}}
\newcommand{\ERF}{\ensuremath{\mathrm{ERF}}}

\paragraph{Substrate Graph}
We use the G\'EANT topology\footnote{Obtained from \url{http://www.topology-zoo.org/} (version March 2012)\,.} as substrate network. It consists of 40 nodes and 122 edges (see Figure~\ref{fig:geant}). We consider a single node type and set node and edge capacities uniformly to $100$.

\begin{figure}[t!]
\centering
\includegraphics[height=0.155\textheight]{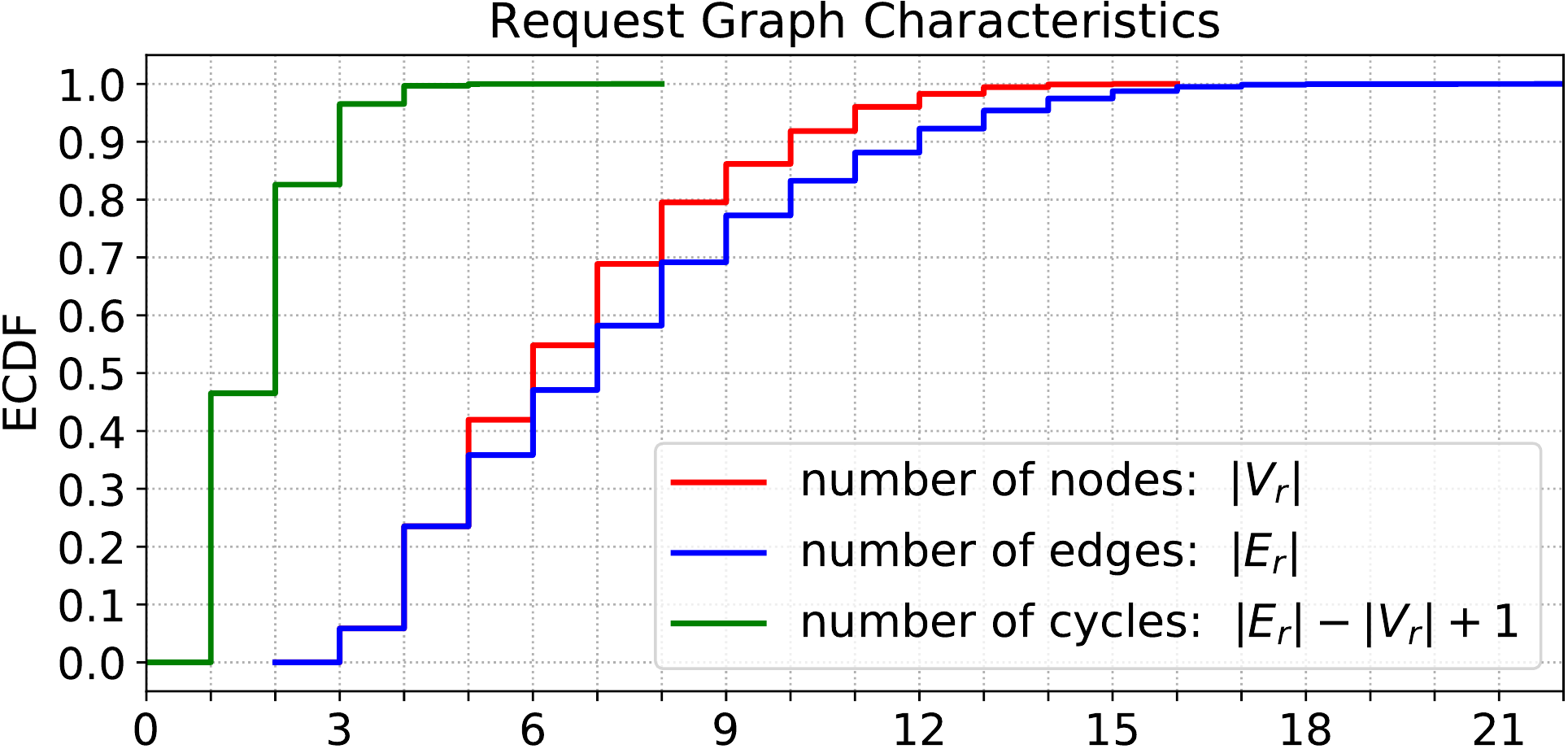}
\caption{Characteristics of the generated cactus requests graphs, namely the number of nodes, edges and number of cycles. The depicted empirical cumulative distribution function (ECDF) is based upon 100k sampled cactus requests graphs.}
\label{fig:request-characteristics}
\end{figure}

\paragraph{Request Topology Generation}
Cactus graph requests are generated by (i) sampling a random binary tree of maximum depth 3, (ii) adding additional edges randomly as long as they do not refute the cactus property \emph{as long as such edges exist}, and (iii) orienting edges arbitrarily.

\begin{figure*}[t!]
\begin{minipage}[t]{0.49\textwidth}
\includegraphics[height=.142\textheight]{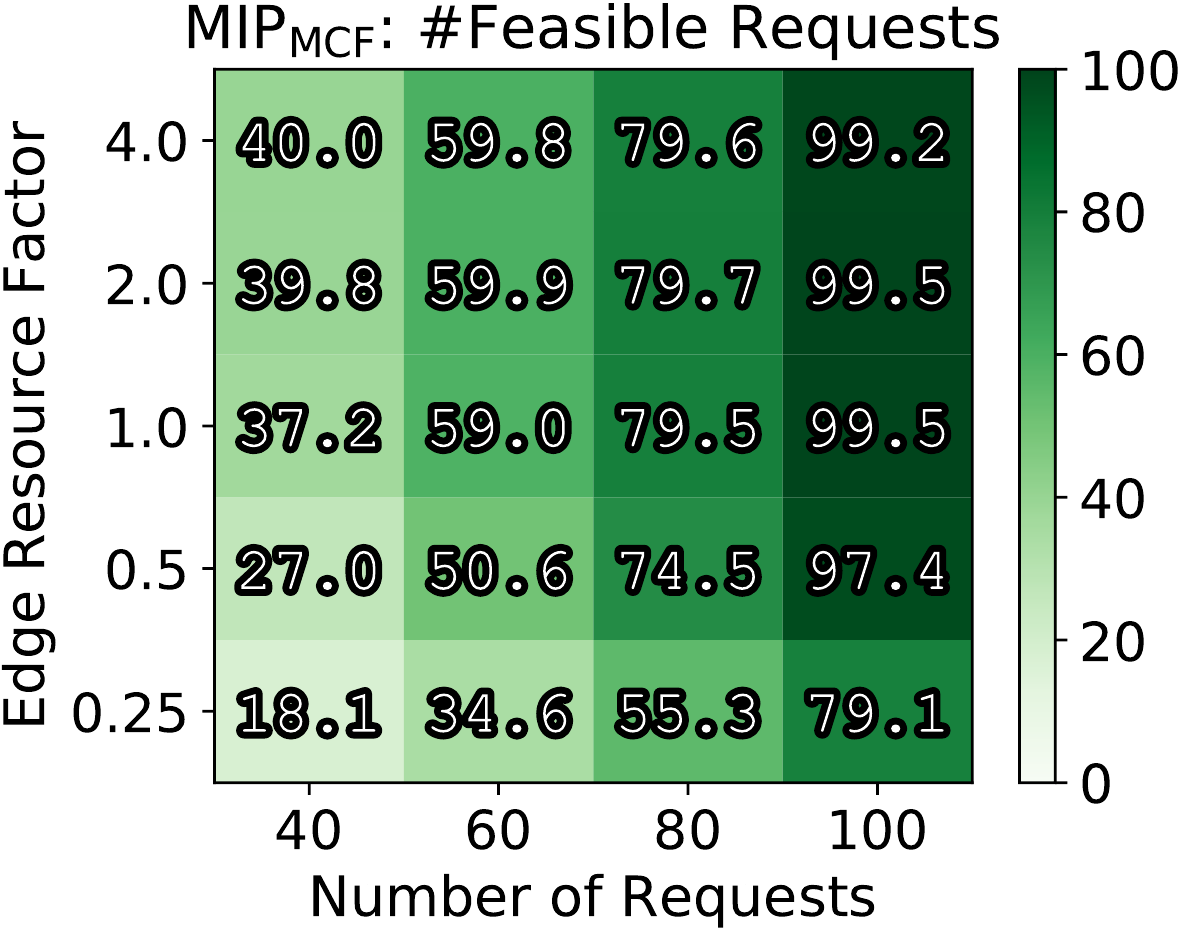}
\hfill
\includegraphics[height=.142\textheight]{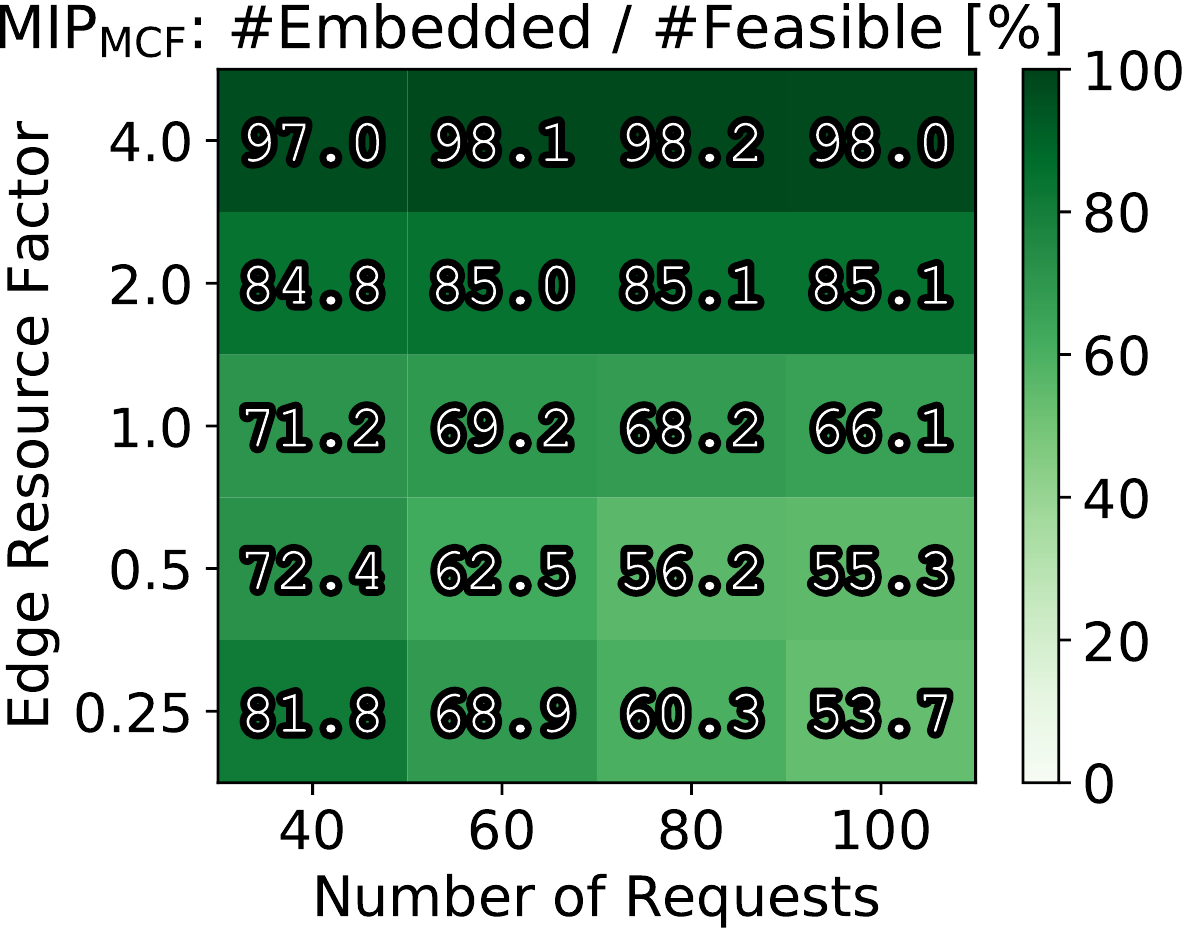}
\caption{Overview of the feasibility of generated requests and the baseline's acceptance ratio. Each cell averages the result of 75 instances. \newline
Left: The feasibility of requests is obtained from (cost-optimally) embedding the requests to compute the profit a priori. Note absolute numbers are depicted. \newline
Right: The acceptance ratio of the baseline $\MIPMCF$ with respect to the requests that were found to be feasible (infeasible requests are not considered).}
\label{fig:baseline-acceptance}
\end{minipage}
\hfill
\begin{minipage}[t]{0.49\textwidth}
\includegraphics[height=.142\textheight]{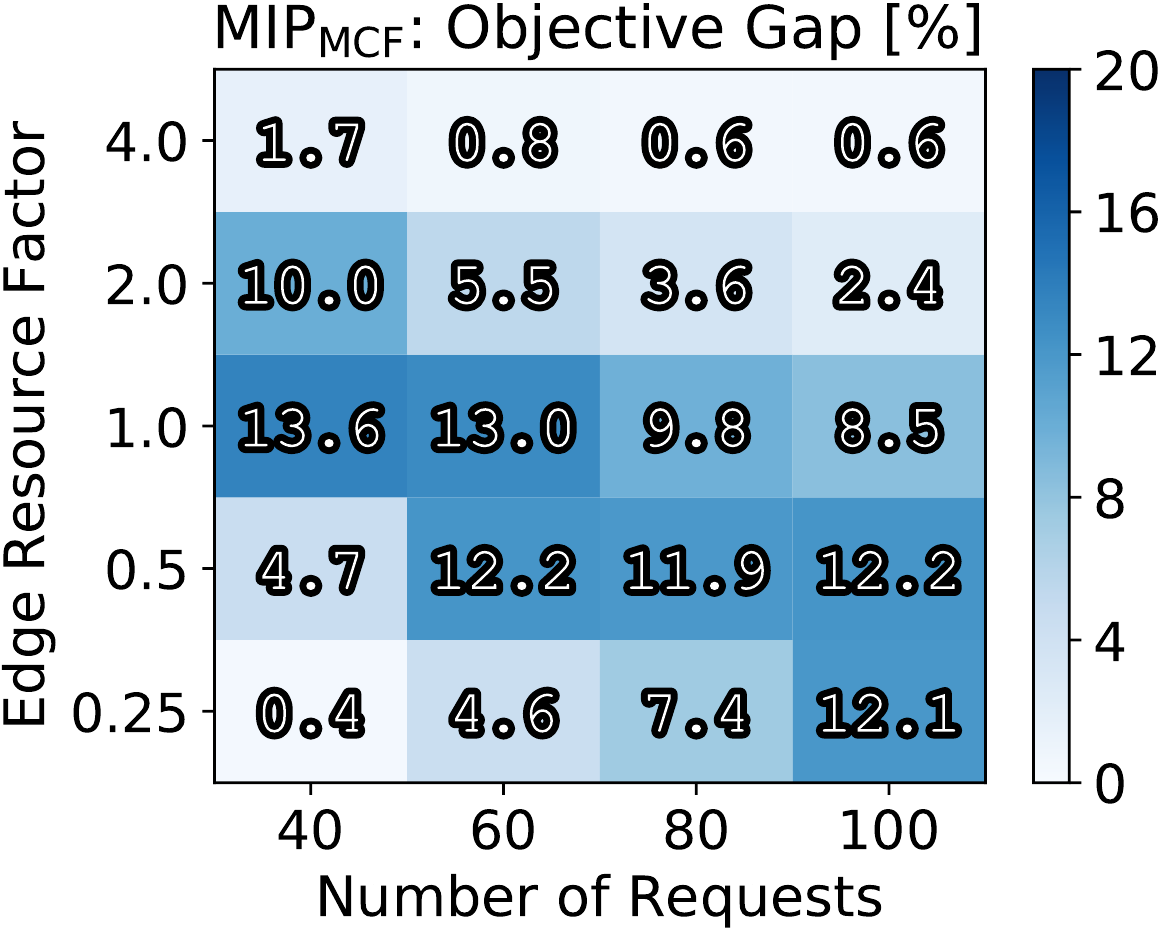}
\hfill
\includegraphics[height=.142\textheight]{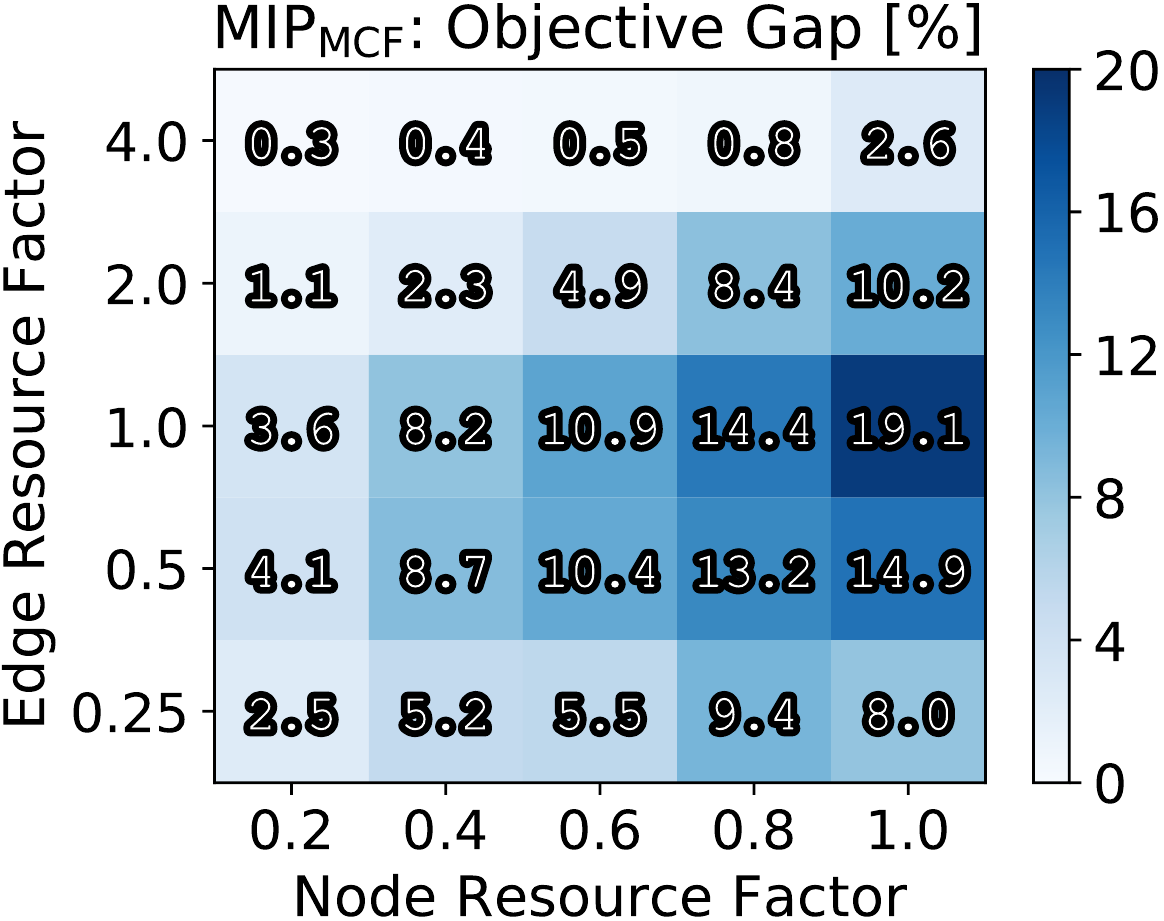}
\caption{Overview of the objective gap achieved by the baseline algorithm $\MIPMCF$ after upto 3 hours of computation time. Note the different x-axes. The left plot averages the results of 75 and the right one the results of 60 instances per cell. The number of requests, i.e. the problem size, has a less distinct impact on the objective gap than the resource factors alone.}
\label{fig:baseline-objective-gap}
\end{minipage}
\end{figure*}

Concretely, the sampling process of binary trees works as follows: starting with an initial root node, the number of children is drawn using the discrete distribution \mbox{$\mathbb{P}(\mathrm{\#children}=0) = 0.15$}, \mbox{$\mathbb{P}(\mathrm{\#children}=1) = 0.5$}, and \mbox{$\mathbb{P}(\mathrm{\#children}=2) = 0.35$}. For each (newly) generated node (of depth less than 3) further children are generated according to the same distribution. We discard graphs having less than 3 nodes.
According to the above generation procedure, the  expected number of nodes and edges is $6.54$ and $7.28$, respectively. On average, $61\%$ of the edges lie on a cycle. Figure~\ref{fig:request-characteristics} offers a more in-depth view on the request characteristics.

\paragraph{Mapping Restrictions}
To force the virtual networks to span across the whole substrate network, we restrict the mapping of virtual nodes to one quarter of the substrate nodes. Hence, the mapping of virtual nodes is restricted to ten substrate nodes. The mapping of virtual edges is not restricted.

\begin{figure}[t!]
\includegraphics[width=.24\textwidth]{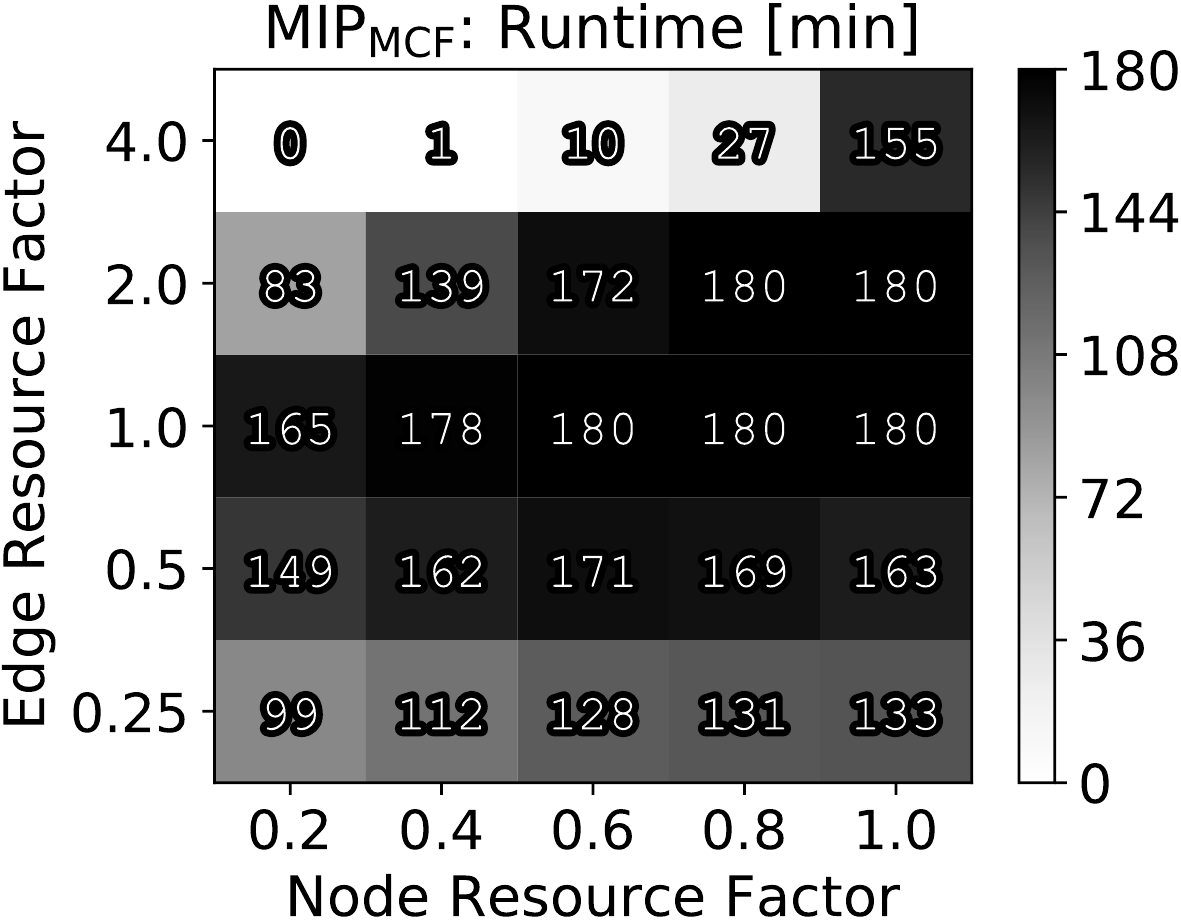}
\hfill
\includegraphics[width=.24\textwidth]{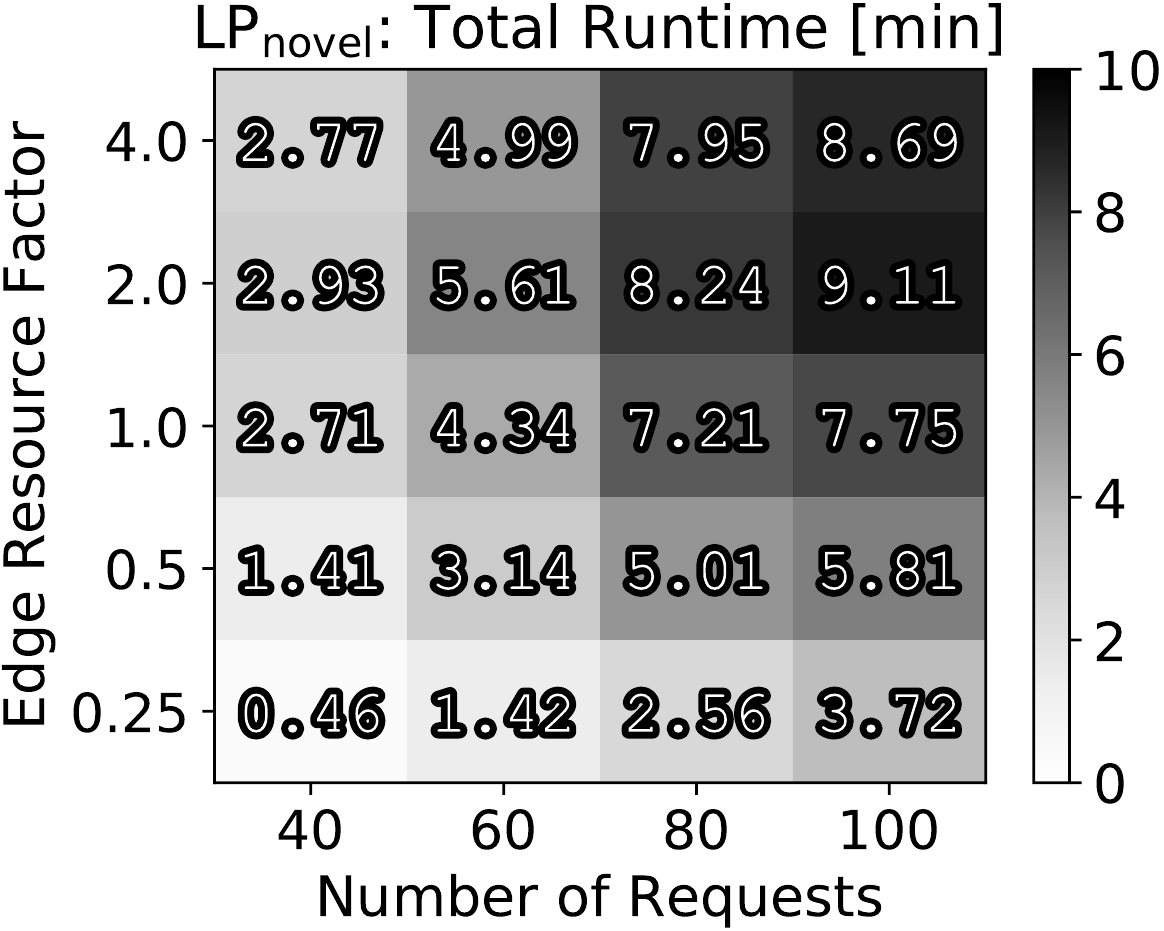}
\caption{Overview of the runtimes of the baseline algorithm $\MIPMCF$ and our novel LP formulation $\LPNOVEL$. Note the different x-axes. On the left each cell averages 60 results while on the right 75 results are averaged per cell.\newline
Left: The runtime of the MCF Formulation~\ref{alg:VNEP-IP-old} being solved by Gurobi 7.5.1. Copmutations are terminated after 180 minutes.\newline
Right: The runtime of the novel LP Formulation~\ref{IP:novel} -- solved using the Barrier algorithm of Gurobi 7.5.1. -- including the constructing time of the LP. }
\label{fig:runtimes}
\end{figure}

\paragraph{Demand Generation}

We control the demand-to-capacity ratio of node and edge resource using a node resource factor $\NRF$ and an edge resource factor $\ERF$. The request's demands are drawn from an exponential distribution and afterwards normalized, such that the following holds:

{
\small
{\begin{alignat}{6}
\sum \nolimits_{\req \in \requests} \sum \nolimits_{ i \in \VV} \Vcap(i) & = && \NRF \cdot \sum \nolimits_{u \in \SV} \Scap(u) \label{eq:node-resource-factor}\\
\ERF \cdot \sum \nolimits_{\req \in \requests} \sum \nolimits_{(i,j) \in \VE} \Vcap(i,j) & = && \sum \nolimits_{(u,v) \in \SE} \Scap(u,v) \label{eq:edge-resource-factor}
\end{alignat}}
}

The resource factors can be best understood under the assumption that all requests are embedded. Under this assumption, a resource factor $\NRF=0.6$ implies that the node load -- averaged over all substrate nodes -- equals exactly~$60\%$. As virtual edges can be mapped on arbitrarily long paths (even of length $0$), the edge resource factor should be understood as follows: the $\ERF$ equals `the number of substrate edges that each virtual edge may use'. In particular, a factor $\ERF=0.5$ implies that if \emph{each} virtual edge spans \emph{exactly} $0.5$ substrate edges (and all requests are embedded), then the (averaged) edge resource utilization equals exactly~$100\%$. Hence, while increasing the $\NRF$ renders node resources more scarce, increasing the $\ERF$ reduces edge resource scarcity.

\paragraph{Profit Computation}
To correlate the profit of a request with its size, its resource demands, and its mapping restrictions, we compute for each request its minimal embedding costs as follows.
The cost $c(u,v)$ of using an edge $(u,v) \in \SV$ equals the geographical distance of its endpoints. The cost of nodes is set uniformly to $c(\cdot,u)= \sum_{(u,v) \in \SE} c(u,v) / |\SV|$ for all $u \in \SV$. Hence, the total node cost equals the total edge cost. Defining the cost of a mapping $\map$ to be $\sum_{(x,y) \in \SR} A(\map,x,y) \cdot c(x,y)$, we compute the minimum cost embedding for each request $\req \in \requests$ using an adaption of Mixed-Integer Program~\ref{alg:VNEP-IP-old} and set $\Vprofit$ accordingly.

\paragraph{Parameter Space}

We consider the following parameters
$|\requests| \in \{40,60,80,100\}$, $ \NRF \in \{0.2,0.4,0.6,0.8,1.0\}$, $\ERF \in \{0.25,\,0.5,1.0,2.0, 4.0\}$ and generate $15$ instances per parameter combination,  yielding $1,500 $ instances overall.

\subsection{Computational Results}
We first present our baseline results computed by solving the MCF Integer Programming Formulation~\ref{alg:VNEP-IP-old} and then study the performance of vanilla rounding and heuristical rounding.

\paragraph{Baseline $\MIPMCF$}
To obtain a near-optimal baseline solution for each of the 1,500 instances, we employ Gurobi 7.5.1 to solve the Mixed-Integer Programming Formulation~\ref{alg:VNEP-IP-old}  (using a single thread). We terminate the computation after 3 hours or when the \emph{objective gap} falls below $1\%$, i.e. when the constructed solution is \emph{provably} less than $1\%$ off the optimum. On average the runtime per instance is $129.8$ minutes~\mbox{(cf. Figure~\ref{fig:runtimes}, left)}.

\begin{figure*}[tb]
\includegraphics[height=.142\textheight]{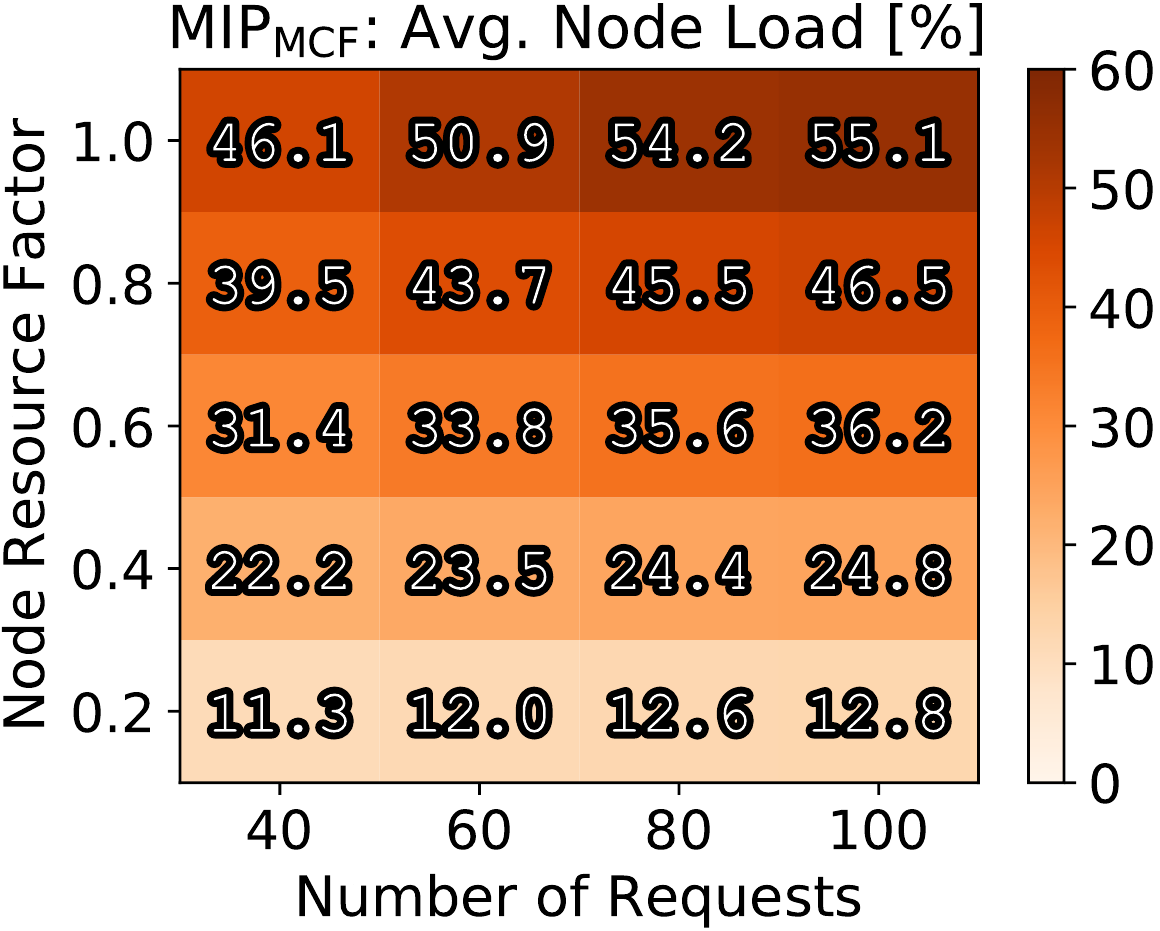}
\hfill
\includegraphics[height=.142\textheight]{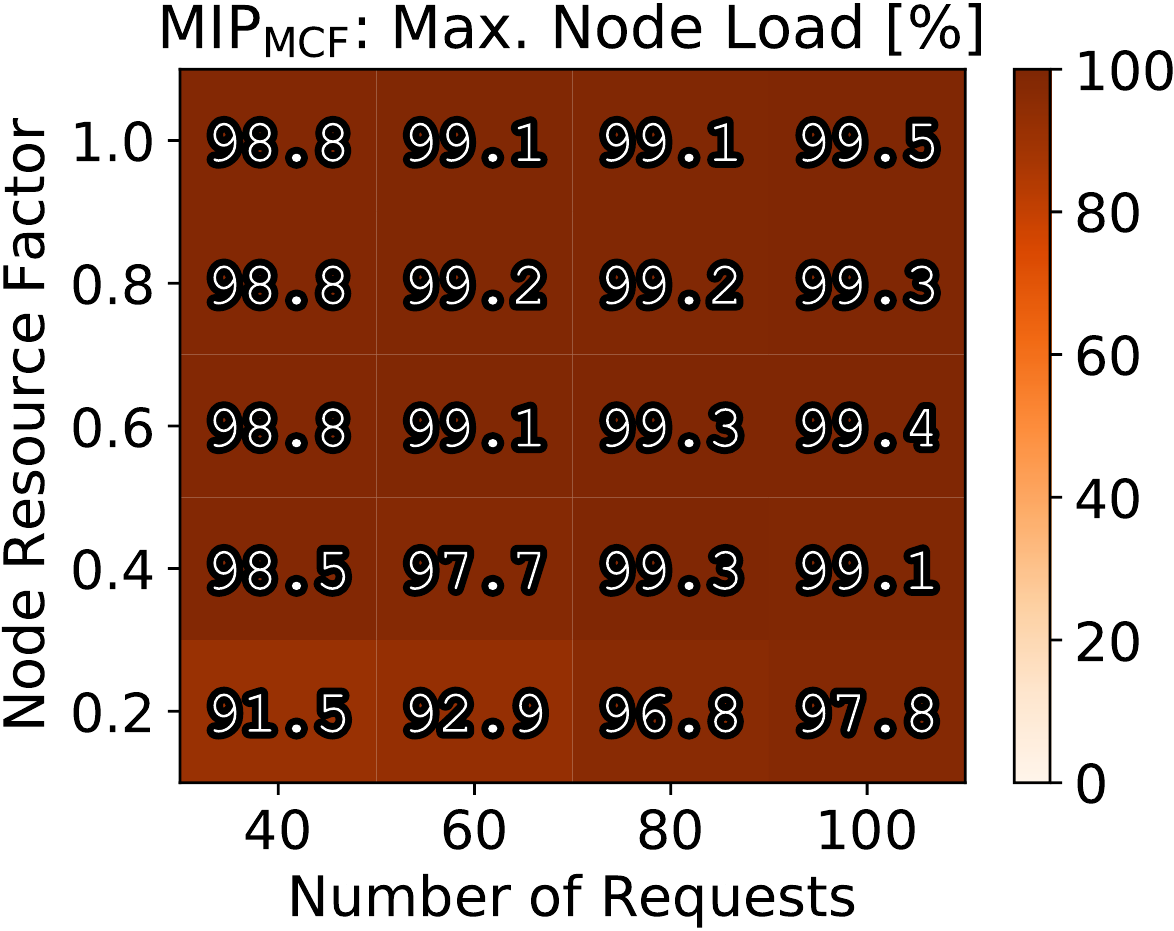}
\hfill
\includegraphics[height=.142\textheight]{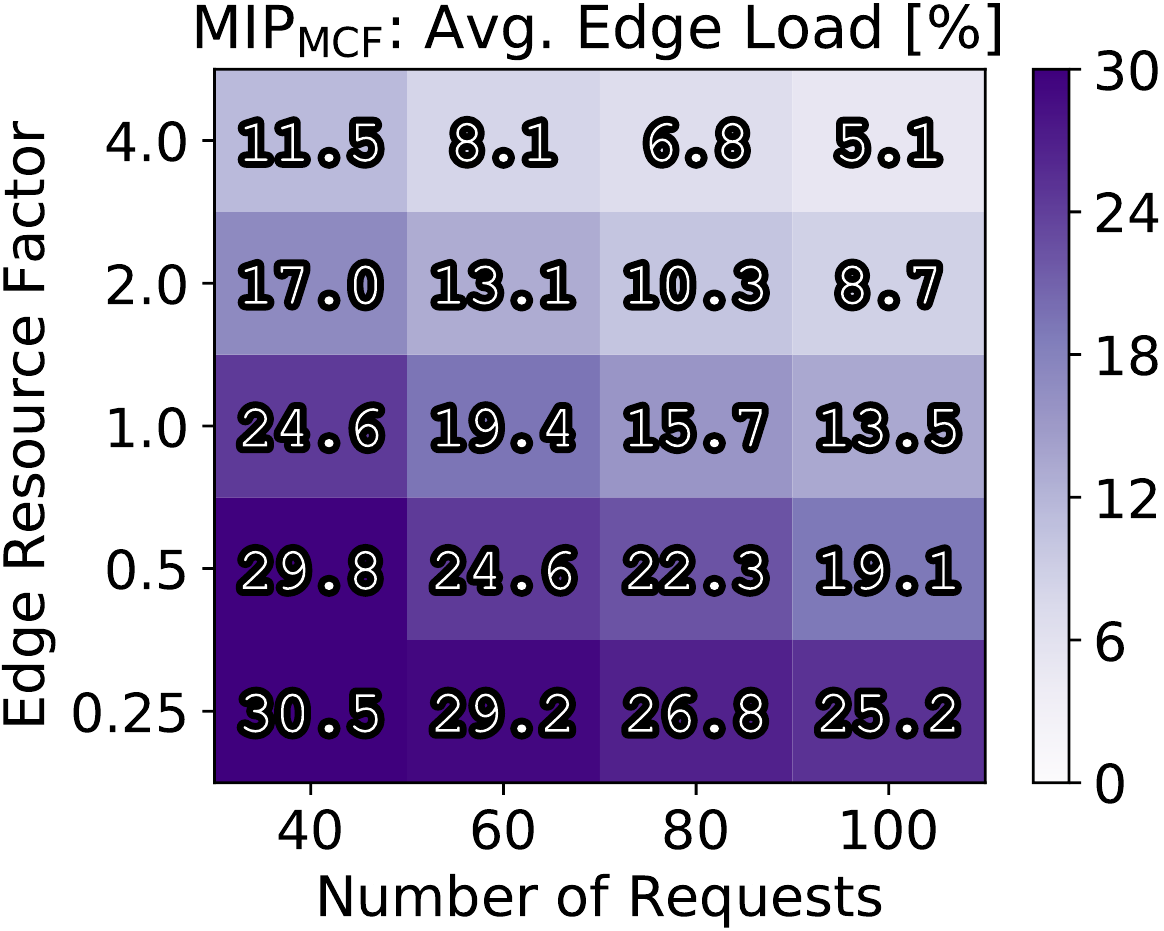}
\hfill
\includegraphics[height=.142\textheight]{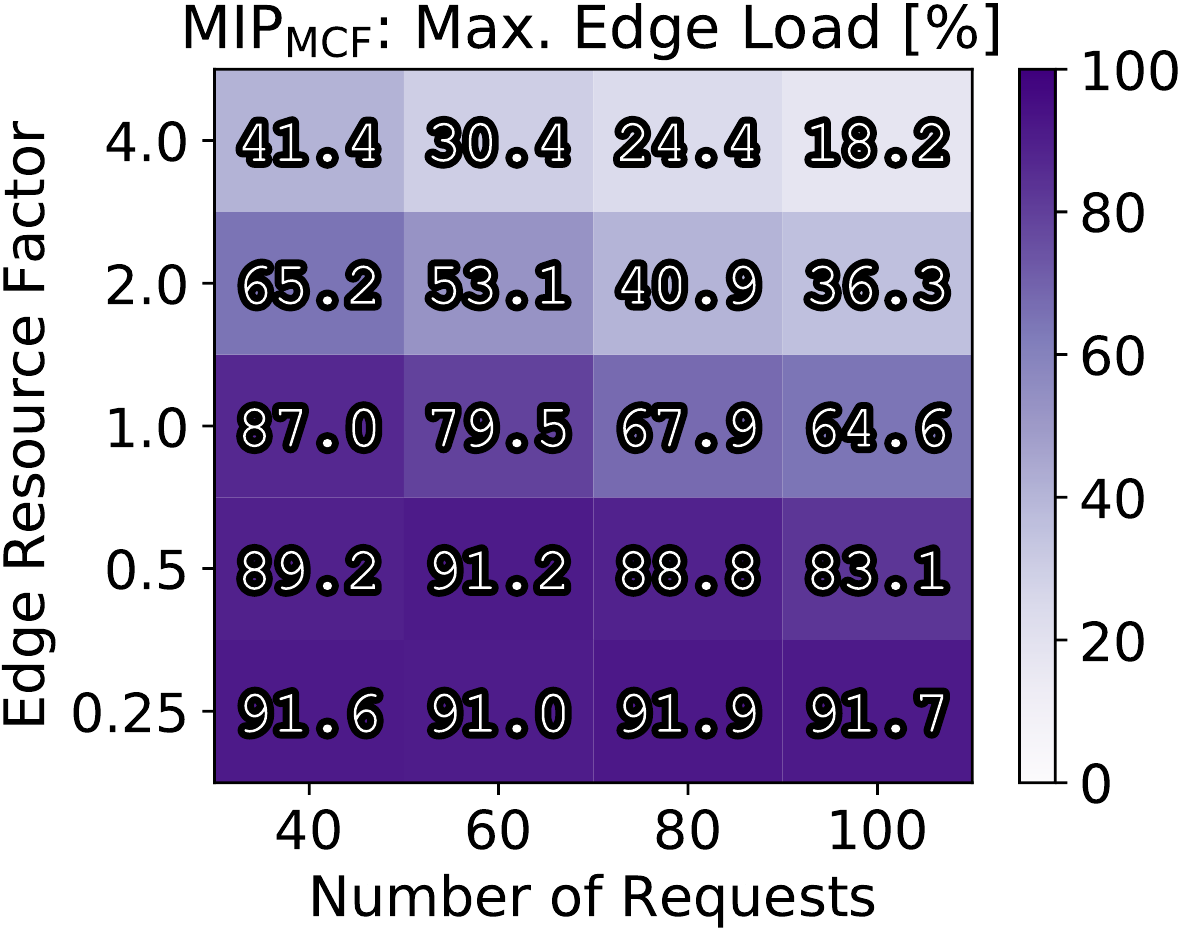}
\caption{Overview of the resource loads of the solutions computed by the baseline algorithm $\MIPMCF$. Each cell averages the results of 75 solutions.
Depicted are the averaged node/edge loads and the averaged maximum node/edge loads as a function of the node/edge resource factor and the number of requests. The respective resource factors have a distinct impact on the loads. Node resources are in general more scarce than edge resources. }
\label{fig:baseline-validation-loads}
\end{figure*}

\begin{figure*}[tb]
\begin{minipage}[t]{0.49\textwidth}
\includegraphics[height=.142\textheight]{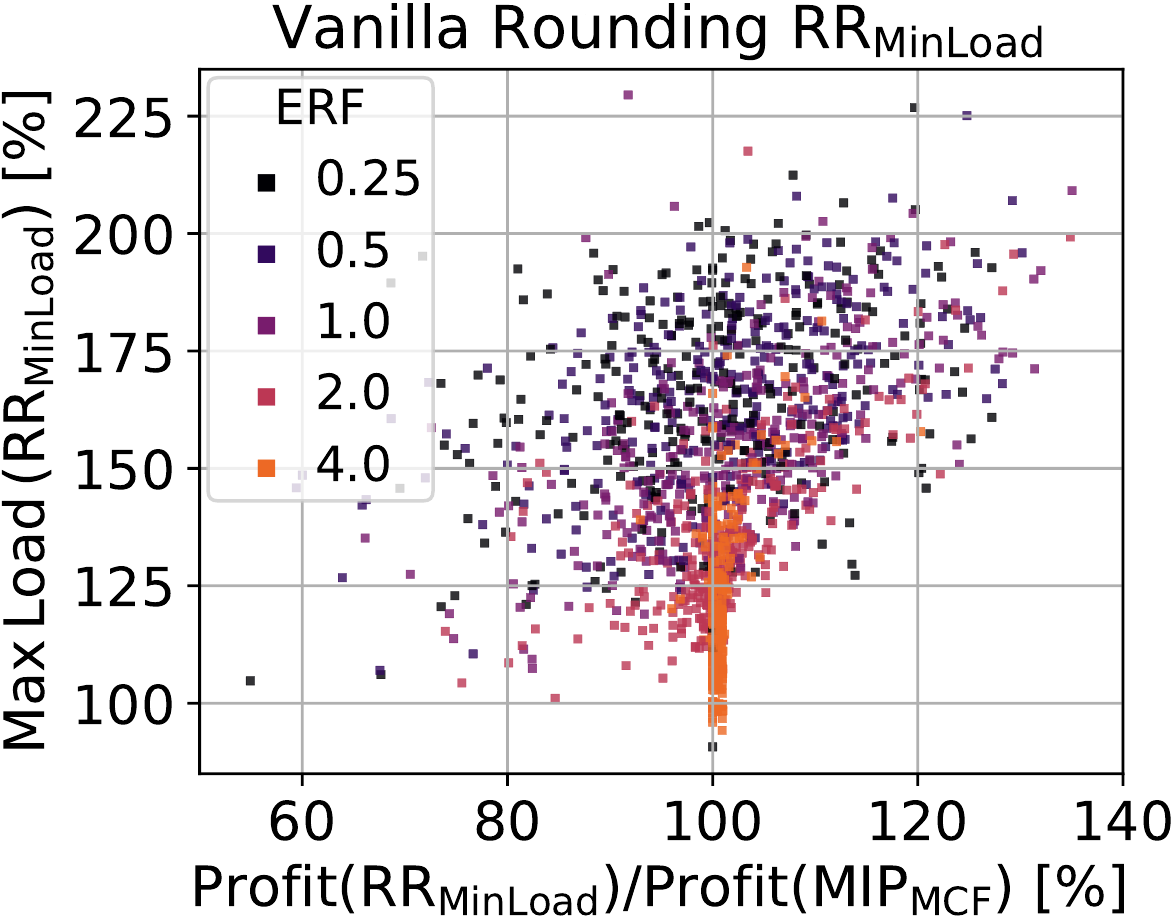}
\hfill
\includegraphics[height=.142\textheight]{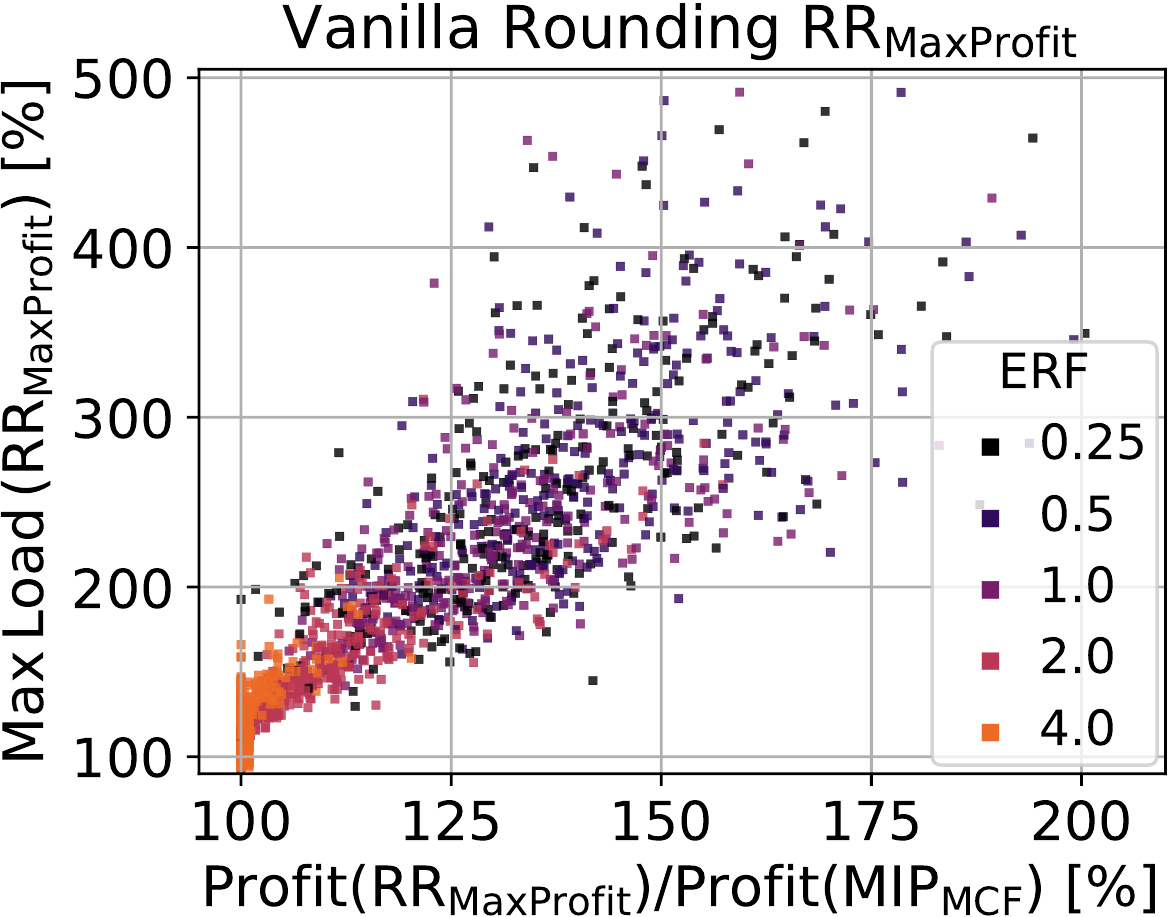}
\caption{Depicted are the solutions obtained via the two \emph{vanilla} rounding schemes $\RRMIN$ (left) and $\RRMAX$ (right). Each point corresponds to a single instance and is colored according to the instance's edge resource factor.
The left and right plots shows results for 1,493 and 1,499 of the 1,500 requests, respectively; the other results lie outside the depicted area.}
\label{fig:randround-vanilla}
\end{minipage}
\hfill
\begin{minipage}[t]{0.49\textwidth}
\includegraphics[height=.142\textheight]{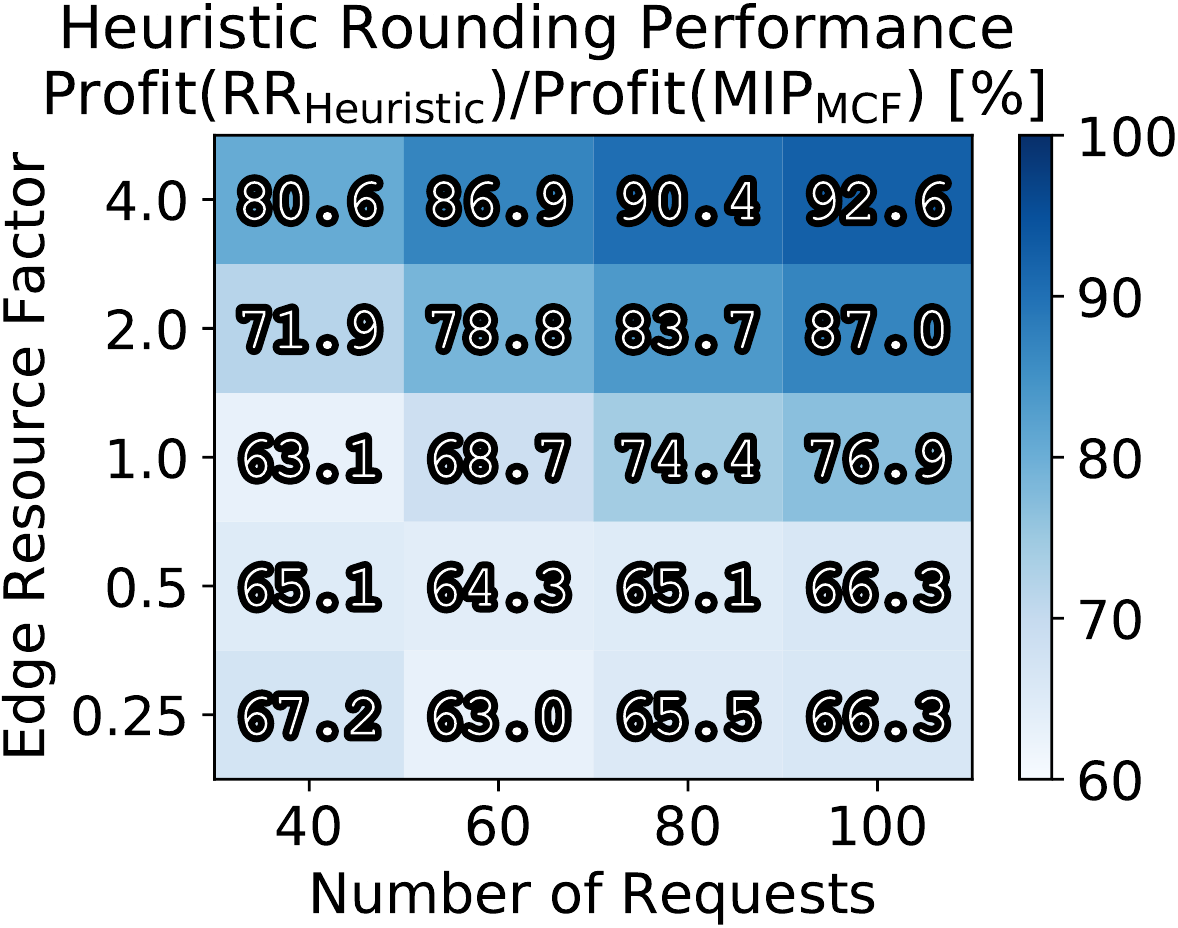}
\hfill
\includegraphics[height=.142\textheight]{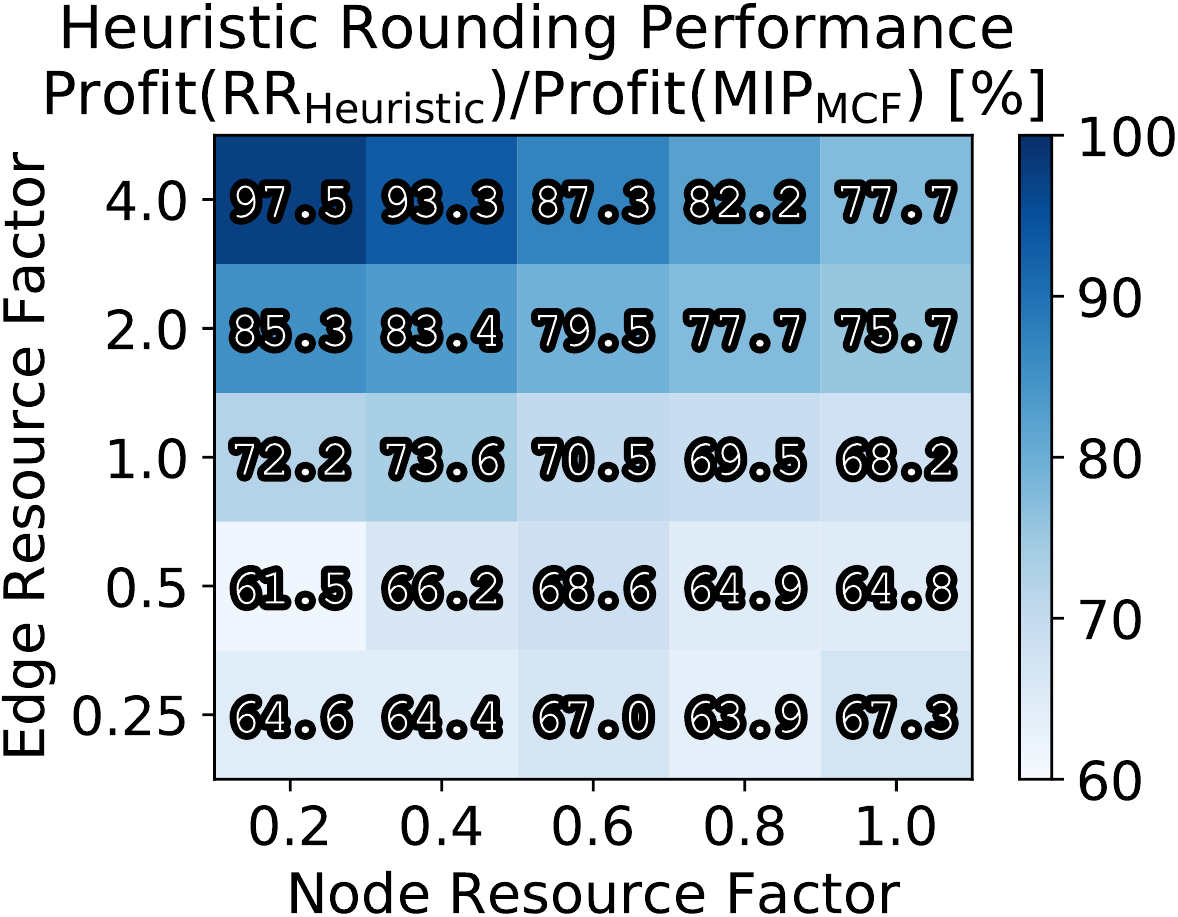}
\caption{Overview of the averaged relative performance achieved by the heuristic rounding algorithm as a function of the edge resource factor and the number of requests (left) and the node resource factor (right). Each cell averages the result of 75 (left) and 60 (right) results. The performance of the heuristic depends both on the number of requests and the resource availability.}
\label{fig:randround-results}
\end{minipage}
\vspace{-6pt}
\end{figure*}

Figure~\ref{fig:baseline-acceptance} gives an initial overview of the number of requests for which feasible embeddings exist and the acceptance ratio of the best solution as a function of the number of requests and the edge resource factor.
The number of feasible requests is determined during the a priori profit computation and may (on average) lie below 50\% when edge resources are very scarce ($\ERF = 0.25$) but otherwise consistently lies above 75\%. Similarly, the acceptance ratio of the baseline solution highly depends on the edeg resource factor, ranging from close to 53\% to roughly 98\% (on average).

Figure~\ref{fig:baseline-objective-gap} depicts quality guarantees for the baseline solutions obtained during the solution process of IP Formulation~\ref{alg:VNEP-IP-old}. Concretely, the formulation is solved using Gurobi's branch-and-bound implementation, consistently yielding upper bounds on the attainable profit (by considering the LP relaxation). Accordingly, the objective gap depicted in Figure~\ref{fig:baseline-objective-gap} gives guarantees on how far (at most) the found solutions are off optimality (on average). While increasing the number of requests does not increase the objective gap per se, both the node and edge resource factors have a distinct impact. Particularly, for the \emph{maximal} node resource factor of $1.0$ and a \emph{medium} edge resource factor of $1.0$, the averaged objective gap lies slightly above 19\%. The mean objective value is less than 7\% and the maximum observed gap across all 1,500 instances was roughly 59\%. 

Figure~\ref{fig:baseline-validation-loads} validates the impact the node and edge resource factors have on the respective resource loads. Depicted are the averaged and maximal node and edge loads as a function of the respective resource factors and the number of requests. As can be clearly seen, increasing the node resource factor increases the averaged node resource loads, while the maximum node resource load is always close to 100\%. For the edge resource loads, the picture is a different one. In particular, the averaged edge load lies significantly below the averaged node load. However, the impact of the edge resource factor is again apparent. Interestingly, for edge resource factors of 2.0 and 4.0, i.e. when edge resources are abundant, the maximal edge load only lies around 40\%, indicating that in these cases mostly the node resources are constraining the solution space.

The above can also be observed by the maximum load ECDFs depicted in Figure~\ref{fig:ecdfs-load-and-performance} (left). In particular, the maximum edge load of the baseline MIP lies consistently below the maximum node load: While 40\% of the solutions exhibit a maximum edge load of 60\% or less, only roughly 10\% of the solutions exhibit a maximum node load of 90\% or less.

\begin{figure*}[tb]
\begin{minipage}[b]{0.49\textwidth}
\includegraphics[height=.142\textheight]{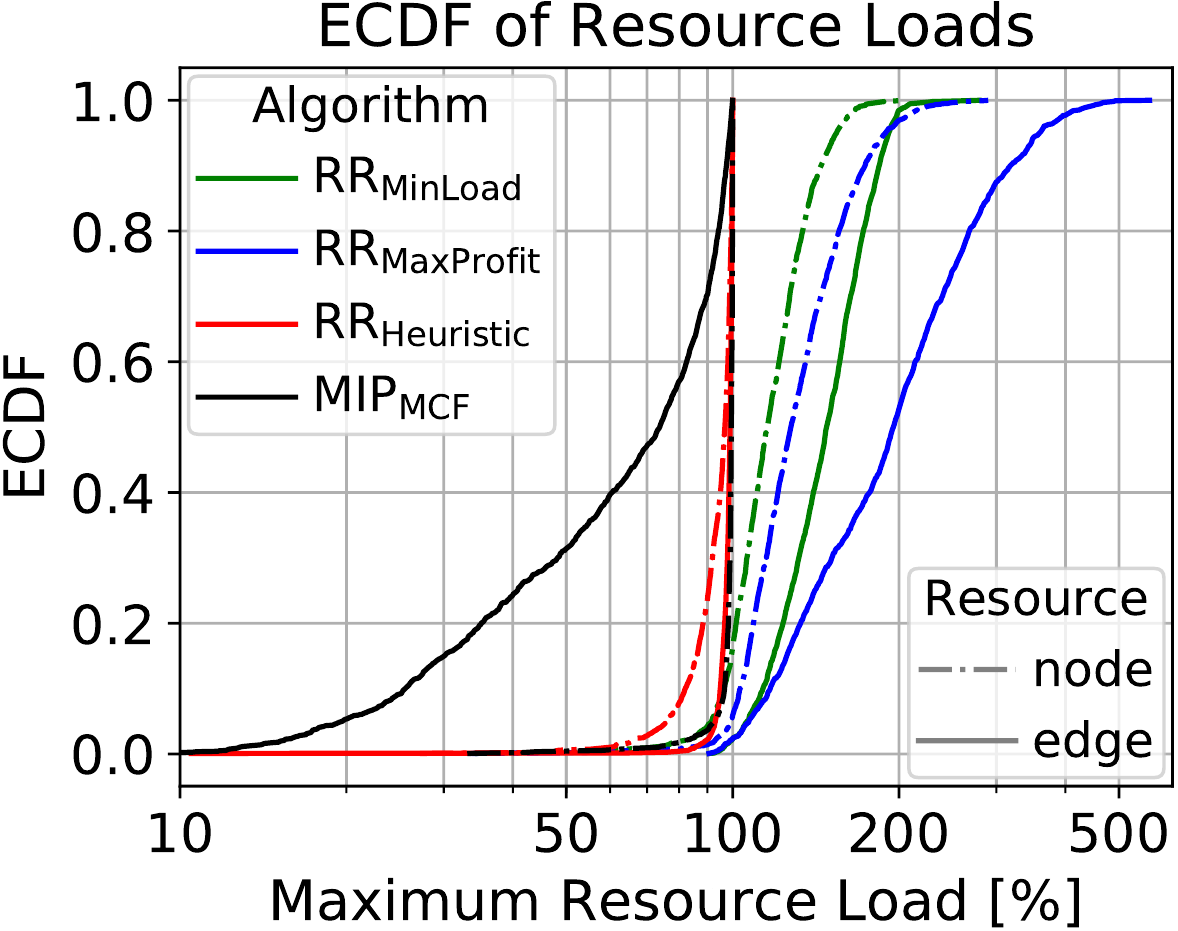}
\hfill
\includegraphics[height=.142\textheight]{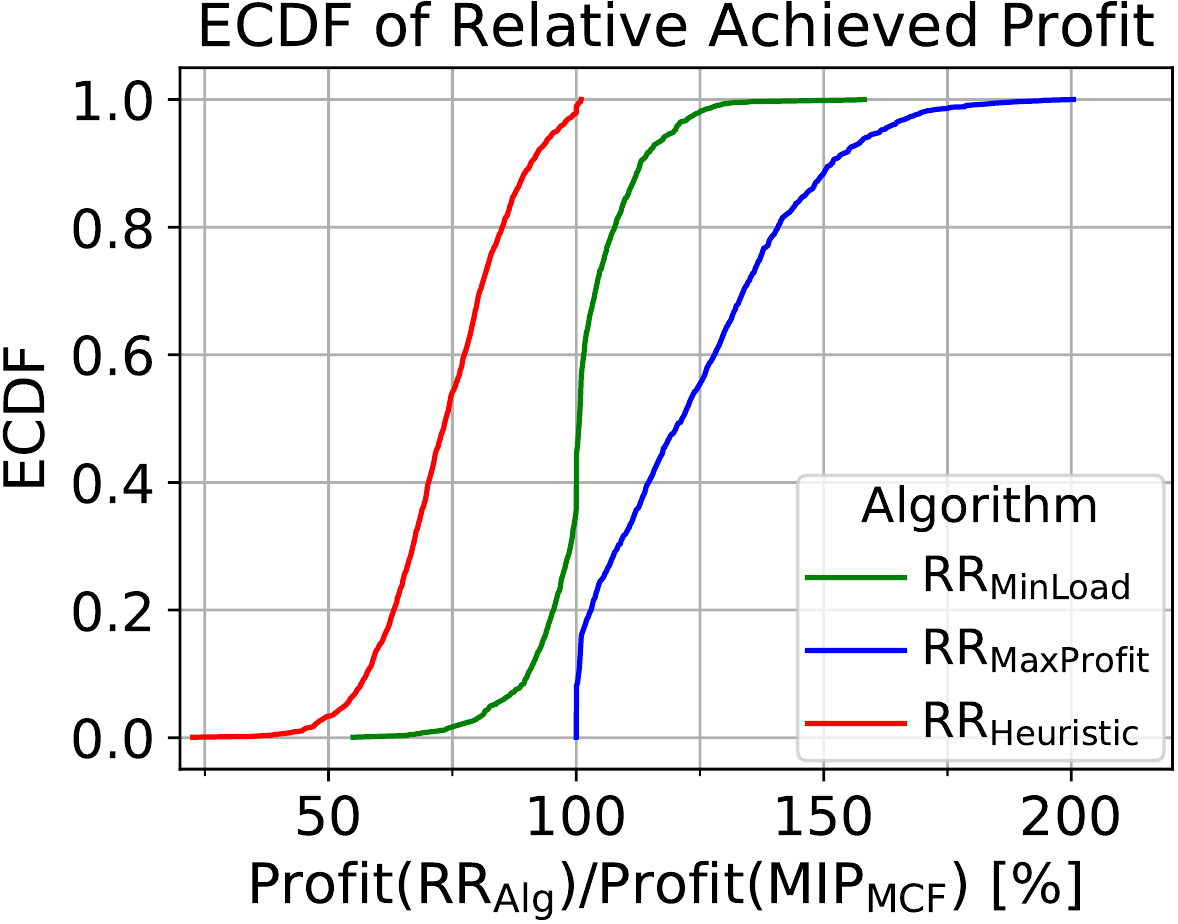}
\caption{Comparison of the different algorithms in terms of (maximal) resource usage and the relative achieved profit for all 1,500 results.\newline
Left: ECDF of maximal node and edge resource loads for all algorithms. For the randomized rounding algorithms, the maximum edge load is in general higher than the maximum edge load. Note the logarithmic x-axis. \newline
Right: ECDF of the profit achieved by the randomized rounding algorithms compared to the profit achieved by the baseline solution.}
\label{fig:ecdfs-load-and-performance}
\end{minipage}
\hfill
\begin{minipage}[b]{0.49\textwidth}
\centering
{
\includegraphics[height=.142\textheight]{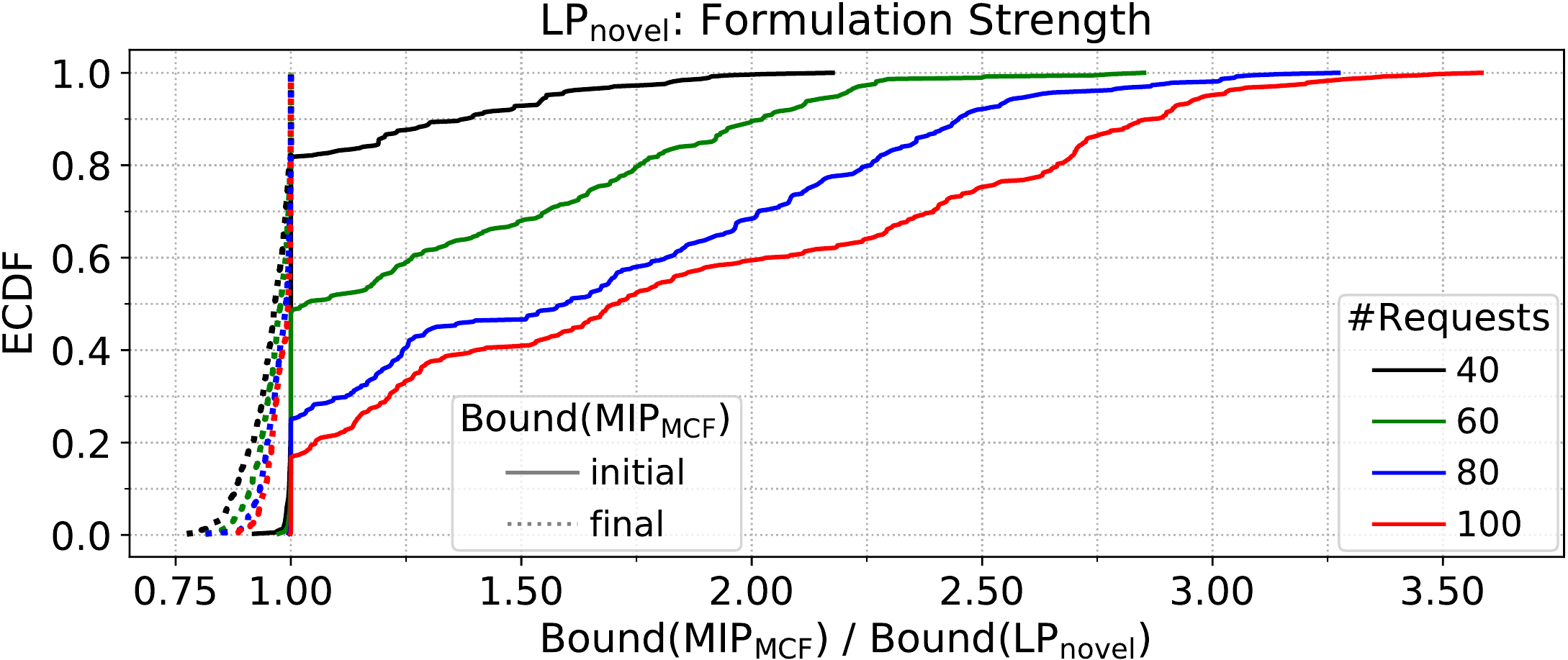}
\caption{Comparison of the bounds on the profit computed by the novel LP~Formulation~\ref{IP:novel} and the classic MCF Formulation~\ref{alg:VNEP-IP-old}. As objective bounds are continuously improved during the solution process of the Mixed-Integer Program $\MIPMCF$, we report on the initial (weakest) bound and the final (best) bound. The initial bound is essentially the objective value of the LP relaxation of Formulation~\ref{alg:VNEP-IP-old}, but might be improved by the solver Gurobi based on the introduction of cutting planes valid only for the integer variant.}
\label{fig:ecdfs-2}

}
\end{minipage}
\end{figure*}

\paragraph{Solving LP Formulation~\ref{IP:novel}} To apply the rounding algorithms presented in Section~\ref{sec:main-section-discussion}, we solve our novel LP Formulation~\ref{IP:novel} by employing again Gurobi 7.5.1, specifically its Barrier algorithm with crossover. Figure~\ref{fig:runtimes} (right) depicts the averaged runtime to solve the LP, including the time to construct the (potentially very large) LP. The latter is not negligible as the formulation contains up to 1,000k variables for some instances. The runtime increases from around $2$ minutes for $|\requests| = 40$ to around $7$ minutes for $|\requests|=100$. The maximally observed runtime in our experiments amounted to roughly 18 minutes.
\paragraph{Vanilla Rounding} With respect to the results of our rounding heuristics, we first discuss the results of our vanilla rounding heuristics $\RRMIN$ and $\RRMAX$. Concretely, we report on the best solution found within 1,000 rounding iterations.
Figure~\ref{fig:randround-results} depicts the respective results as a scatter plot while Figure~\ref{fig:ecdfs-load-and-performance} depicts empirical cumulative distribution functions (ECDF) for both the achieved profit and the maximal resource loads. As can be seen, for $\RRMIN$ the algorithm achieves a profit between 50\% and 140\% compared to the best solution constructed by the MIP, while exceeding resource capacities mostly by 25\% to 125\% of the resource's capacity. For $\RRMAX$, the achieved profit always exceeds the baseline's profit. This is to be expected based on our analysis in Section~\ref{sec:performance-guarantee-obj-admission-control}, as the expected benefit is at least as large as the optimal Mixed-Integer Programming formulation's value. The resource loads mostly lie below 400\% with the maximum coming close to 500\%.

For both selection criteria, the edge resource factor has a distinct impact on the (overall) maximum load. This can be explained as follows. As each request edge may use any of the substrate edges (compared to the restricted node mappings), the chances of `collisions', i.e. multiple request edges being mapped on the same substrate edge, is higher. Additionally, the fact that the number of virtual edges is (always) at least as large as the number of edges (cf. Figure~\ref{fig:request-characteristics}) and that a single request edge may use \emph{multiple} substrate edges, complicates the rounding of edge mappings.
This observation is further substantiated by the ECDF presented in Figure~\ref{fig:ecdfs-2}: the maximal edge load are consistenly larger than the maximal node resource load for all randomized rounding algorithms.

\paragraph{Heuristical Rounding $\RRHEUR$}
The results of the heuristical rounding, which does not exceed resource capacities, are presented in Figure~\ref{fig:randround-results} (heatmaps) and Figure~\ref{fig:ecdfs-2} (ECDFs). Again, 1,000 rounding iterations were considered. While for low edge resource factors, i.e. scarce edge resources, the solutions achieve around 65\% of the profit of the MIP baseline, for larger edge resource factors, the relative performance exceeds 80\%.  Both resource factors have a significant impact and the heuristical rounding procedure works best if neither of the resources is scarce, achieving on average around 97.5\% of the baseline's profit for $\NRF=0.2$ and $\ERF=4.0$. Also, and in contrast to the performance of the baseline $\MIPMCF$, the performance improves when increasing the number of requests. This can be explained as follows. For any combination of resource factors, the virtual resource demands are computed according to Equations~\ref{eq:req-node} and \ref{eq:req-edge} \emph{independently} of how many requests are considered. Hence, when increasing the number of requests, the resource demands of each single request become smaller. Hence, the maximal demand-to-capacity ratio  $\maxDemandX / \Scap(x,y)$ decreases for \emph{all} substrate resources $(x,y) \in \SR$ when increasing the number of requests. Accordingly, the value of $\varepsilon$ (cf. Lemma~\ref{lem:approximation-single-resource} and Corollary~\ref{cor:resource-augmentation-prob}) can be chosen smaller and the expected resource augmentations decrease. Thus, the heuristical rounding algorithm is able to more easily find solutions of high profit \emph{not} augmenting resources.

Overall, the average relative performance with respect to the baseline solutions is 73.8\%, with the minimal one being 22.3\% and only around 5\% of constructed solutions achieving less than 50\% of the baseline's profit. 

\subsection{Comparisong of Formulation Strengths}

Lastly, we empirically study the strength of our novel LP Formulation~\ref{IP:novel} and compare it with the classic MCF Formulation~\ref{alg:VNEP-IP-old}. Concretely, as proven in Theorems~\ref{lem:non-decomposability-integrality-gap} and \ref{lem:non-decomposability-integrality-gap-only-node}, the classic MCF formulation has an unbounded or very large integrality gap in general, i.e. the bound on the profit returned by the solution can be arbitrarily far off the optimal attainable profit. Our novel LP formulation is provably stronger than the old formulation, as it only allows for (fractional) solutions, which can be decomposed into valid mappings. Thus, it will always yield (equal or) better bounds on the attainable profit. 

Figure~\ref{fig:ecdfs-2} presents the experimental comparison of both formulations. In particular, for each of the 1,5000 instances we compare the objective of our novel \emph{Linear Programming} Formulation~\ref{IP:novel} to \emph{two} bounds computed during the solution process of the baseline $\MIPMCF$: the initial bound, i.e. the objective of the \emph{Linear Programming} Formulation~\ref{alg:VNEP-IP-old}, and the final (best) bound computed during the solution process of the (Mixed-)\emph{Integer} Program. 

As can be seen, the initial LP bounds of the classic formulation at times exceeds our formulation's objective by more than 300\%, i.e. the classic formulation `overestimates' the maximal attainable profit by at least a factor of 3. For roughly 60\% of the instances, the novel LP improves the bounds by a factor of 1.5. Clearly, the more requests are considered, the less accurate the classic MCF formulation is. 
Considering the final bound computed during the execution of the \mbox{(Mixed-)\emph{Integer}} Program, we see that these bounds always improve upon upon the novel LP's bound. However, for 80\% of the instances, our LP's bound is only improved by roughly 15\% with the maximal improvement being less than 33\%. 

Concluding, we note that our novel LP formulation is much \emph{stronger} than the classic formulation: it consistently yields significantly better bounds in practice compared to the classic LP formulation and comes close to the bounds obtained by solving the (Mixed-)\emph{Integer} Program for upto 3 hours.

\section{Conclusion}
This paper has initiated the study of approximation algorithms for the Virtual Network Embedding Problem supporting arbitrary substrate graphs and supporting \emph{cyclic} request graphs, specifically cactus request graphs. To obtain the approximation result, we have derived a novel strong LP formulation. Our computational evaluation shows the practical significance of our work: obtained solutions achieve (on average) around 74\% of the baseline's profit while \emph{not augmenting capacities}.

We note that the developed approximation framework -- including the proposed rounding heuristics -- is independent of how LP solutions are computed and decomposed. In particular, while the LP formulation presented in this paper is only applicable for cactus request graphs, our formulation can be generalized to arbitrary request graphs~\cite{rostSchmidFPTApproximationsTechReport}. 

\section*{Acknowledgements} This work was partially supported by Aalborg University's PreLytics project as well as by the German BMBF Software Campus
grant 01IS1205. 

We thank Elias D\"ohne, Alexander Elvers, and Tom Koch for their significant contribution to our  implementation~\cite{github-evaluation}.

{
\balance


}

\end{document}